\documentclass{article}
\author{Dan Betea\thanks{Institute for Applied Mathematics, University of Bonn, D-53315 Bonn, Germany, \texttt{dan.betea@gmail.com}}
  \and
  J\'er\'emie Bouttier\thanks{Institut de Physique Th\'eorique, Universit\'e Paris-Saclay, CEA, CNRS, F-91191 Gif-sur-Yvette, \texttt{jeremie.bouttier@ipht.fr} \protect\\
    \indent \ \,
    Univ Lyon, Ens de Lyon, Univ Claude Bernard, CNRS, Laboratoire de Physique, F-69342 Lyon, France}}
\title{The periodic Schur process and free fermions at finite temperature}

\usepackage{graphicx, graphics, amsmath, amsthm, amssymb, xcolor,mathtools,bbm,mathabx}
\usepackage{hyperref}
\usepackage{tikz}
\usetikzlibrary{calc,decorations.pathmorphing,decorations.markings, decorations.pathreplacing,patterns,shapes,arrows}
\usepackage[margin=1.0in]{geometry}
\usepackage{subcaption}
\usepackage{stmaryrd}
\newtheorem{thm}{Theorem}
\newtheorem{prop}[thm]{Proposition}

\newtheorem{lem}[thm]{Lemma}

\theoremstyle{definition}
\newtheorem{definition}[thm]{Definition}
\theoremstyle{remark}
\newtheorem{rem}[thm]{Remark}
\numberwithin{equation}{section}
\numberwithin{thm}{section}

\DeclareMathAlphabet{\mathpzc}{OT1}{pzc}{m}{it}

\newcommand{\R}{\mathbb{R}}
\newcommand{\Z}{\mathbb{Z}}
\newcommand{\N}{\mathbb{N}}
\newcommand{\C}{\mathbb{C}}
\newcommand{\Gami}{\Gamma_{-}}
\newcommand{\Gapl}{\Gamma_{+}}
\newcommand{\Gatpl}{\Gamma'_{+}}
\newcommand{\Gatmi}{\Gamma'_{-}}

\DeclareMathOperator{\tr}{tr}
\DeclareMathOperator*{\pf}{pf}
\newcommand{\Par}{\mathcal{P}}
\newcommand{\SPar}{\mathcal{SP}}
\newcommand{\F}{\mathcal{F}}
\newcommand{\NF}{\mathcal{NF}}

\DeclareMathOperator{\Prob}{Prob}
\DeclareMathOperator{\Ad}{Ad}
\DeclareMathOperator{\Ai}{Ai}
\DeclareMathOperator{\ex}{ex}
\DeclareMathOperator{\arsinh}{arsinh}
\newcommand{\bra}[1]{\langle #1 |}
\newcommand{\ket}[1]{| #1 \rangle}
\newcommand{\evuc}[1]{\langle #1 \rangle_u^{(0)}}
\newcommand{\evut}[1]{\langle #1 \rangle_{u,t}}
\newcommand{\vv}{| \emptyset \rangle}
\newcommand{\vcv}{\langle \emptyset |}
\newcommand{\Tt}{T^{(\vartheta)}}

\newcommand{\im }{\mathrm{i}}  

\newcommand{\usf}{\mathsf{u}}
\newcommand{\vf}{\varphi}

\begin{document}

\maketitle

\begin{abstract}
  We revisit the periodic Schur process introduced by Borodin in
  2007. Our contribution is threefold. First, we provide a new simpler
  derivation of its correlation functions via the free fermion
  formalism. In particular, we shall see that the process becomes
  determinantal by passing to the grand canonical ensemble, which
  gives a physical explanation to Borodin's ``shift-mixing'' trick.
  Second, we consider the edge scaling limit in the simplest
  nontrivial case, corresponding to a deformation of the poissonized
  Plancherel measure on partitions. We show that the edge behavior is
  described, in a certain crossover regime different from that for the
  bulk, by the universal \emph{finite-temperature Airy kernel}, which
  was previously encountered by Johansson and Le Doussal \emph{et
    al.}\ in other models, and whose extreme value statistics
  interpolates between the Tracy--Widom GUE and the Gumbel
  distributions. We also define and prove convergence for a stationary
  extension of our model. Finally, we compute the correlation
  functions for a variant of the periodic Schur process involving
  strict partitions, Schur's $P$ and $Q$ functions, and neutral
  fermions.
\end{abstract}

\section{Introduction} \label{sec:intro}

The Schur process, introduced by Okounkov and Reshetikhin~\cite{or}
but also appearing more or less implicitly in the works of
Johansson~\cite{Joh02,Joh03,joh}, is in many aspects a discrete
analogue of a random matrix model such as Dyson's Brownian motion. It
is therefore not surprizing that it can be analyzed by the same
techniques and admits scaling limits in the same universality classes
(e.g.\ sine processes in the bulk, Airy processes at the edge). See
for instance~\cite{Oko02,Joh05} and references therein.

\begin{figure}[t]
  \centering
  \includegraphics[width=.5\textwidth]{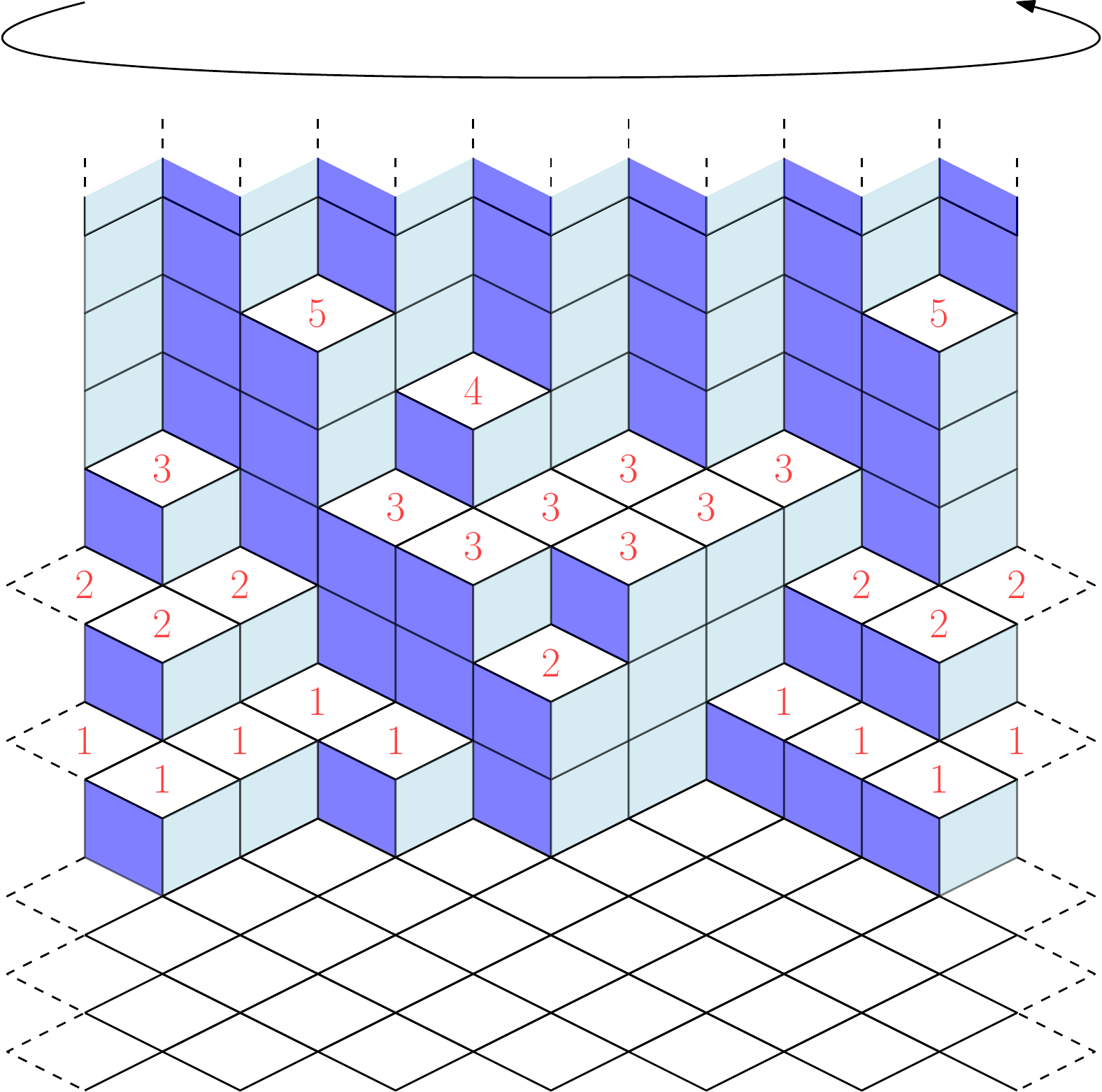}
  \caption{A cylindric partition represented as a lozenge tiling on
    the cylinder (the left/right sides are identified). The nonzero
    entries of the cylindric partition (shown in red) correspond to
    the displacements of the horizontal lozenges with respect to the
    ``ground state''. The corresponding sequence of integers
    partitions of the form~\eqref{eq:seqform} is obtained by reading
    the entries along vertical lines, namely we have
    $\mu^{(0)}=(2,1)$, $\lambda^{(1)}=(3,2,1)$, $\mu^{(1)}=(2,1)$,
    $\lambda^{(2)}=(5,1)$, etc (here $N=6$). Note the constraint that
    two consecutive partitions $\lambda,\mu$ must be
    interlaced---i.e., satisfy
    $\lambda_1 \geq \mu_1 \geq \lambda_2 \geq \mu_2 \geq \cdots$. In
    the measure~\eqref{eq:probdef} this constraint is implemented by
    taking each $\rho_k^{\pm}$ to be a specialization in a single
    variable---see \eqref{eq:singlespec}. Here the profile---in the
    sense of~\cite{B2007cyl}---is the alternating sequence
    $(1,-1,1,-1,\ldots)$.}
  \label{fig:cylpp2}
\end{figure}

In this paper we revisit the periodic Schur process introduced by
Borodin~\cite{B2007cyl}. One of his motivations was that it allows to
study random cylindric partitions in the same way as the original
Schur process allows to study random plane
partitions~\cite{or,or2}. See~\cite{gk} for the definition of
cylindric partitions and Figure~\ref{fig:cylpp2} for an illustration.

The periodic Schur process is a measure on periodic sequences of
integer partitions of the form
\begin{equation}
  \label{eq:seqform}
  \mu^{(0)} \subset \lambda^{(1)} \supset \mu^{(1)} \subset \cdots
  \supset \mu^{(N-1)} \subset \lambda^{(N)} \supset \mu^{(N)} = \mu^{(0)}
\end{equation}
such that
\begin{equation}
  \label{eq:probdef}
  \Prob(\vec{\lambda},\vec{\mu})\ \propto\ u^{|\mu^{(0)}|} 
  \prod_{k=1}^N \left(s_{\lambda^{(k)}/\mu^{(k-1)}}\left(\rho_k^+\right)
    s_{\lambda^{(k)}/\mu^{(k)}}\left(\rho_k^-\right) \right).
\end{equation}
Here $u$ is a nonnegative real parameter smaller than $1$,
$s_{\lambda/\mu}$ is a skew Schur function, and the $\rho_k^\pm$ are
collections of variables or specializations---see the beginning of
Section~\ref{sec:corr} for a summary of the relevant definitions. Note
that the constraint $u<1$ arises because constant sequences would
contribute an infinite mass otherwise. For $u=0$ the measure is
concentrated on sequences such that $\mu^{(0)}$ is the empty
partition, and we recover the original Schur process~\cite{or}.

Taking $u \neq 0$ brings an extra level of complexity: as shown by
Borodin~\cite[Theorem~A]{B2007cyl}, the point process naturally
associated with $(\vec{\lambda},\vec{\mu})$ requires a nontrivial
``shift-mixing'' transformation to be determinantal, and its
correlation functions are given by an elliptic deformation of those
for the original process. Here we will rederive this result using the
free fermion formalism\footnote{Borodin's proof involves so-called
  $L$-ensembles. As stated in~\cite{B2007cyl}, his initial derivation
  was also based on the formalism of (fermionic) Fock space, but our
  approach is different (personal communication with Alexei
  Borodin).}, which was also used recently to analyze the Schur
process with free boundary conditions~\cite{bbnv}. We believe that our
derivation is more transparent. In particular the shift-mixing
transformation follows from a passage to the grand canonical ensemble
in the fermionic picture. For pedagogical purposes, we illustrate in
Section~\ref{sec:fermidirac} the key idea of our approach for the most
elementary setting $N=0$, where the measure~\eqref{eq:probdef} reduces
to a single uniform\footnote{By a slight abuse of language, the model
  for which $\Prob(\lambda)\propto u^{|\lambda|}$ will be called
  uniform as it is the ``macrocanonical'' ensemble associated with
  uniform random partitions of fixed size~\cite{Vershik1996}. Note
  that this is not the same notion as the fermionic grand canonical
  ensemble that we will consider.} random partition.

We also investigate the edge behavior of the periodic Schur process,
the bulk case having been already analyzed by Borodin. We will
concentrate on the simplest nontrivial
instance~\cite[Example~3.4]{B2007cyl} where $N=1$ and both
$\rho_1^+,\rho_1^-$ are exponential specializations. In other words,
we consider two random partitions $\lambda \supset \mu$ such that
\begin{equation}
  \label{eq:uPlanchdef}
  \Prob(\lambda,\mu) = \frac{1}{Z_{u,\gamma}} u^{|\mu|} \left(\frac{\gamma^{|\lambda|-|\mu|} \dim(\lambda/\mu)}{(|\lambda|-|\mu|)!} \right)^2
\end{equation}
where $u \in [0,1)$, $\gamma>0$,
$Z_{u,\gamma}:=e^{\frac{\gamma^2}{1-u}}/\prod_{n \geq 1}(1-u^n)$, and
$\dim(\lambda/\mu)$ denotes the number of standard Young tableaux of
skew shape $\lambda/\mu$. The marginal distribution of $\lambda$
interpolates between the uniform measure ($\gamma=0$) and the
poissonized Plancherel measure ($u=0$) on partitions, and we call it
the \emph{cylindric Plancherel measure}.

We are interested in the thermodynamic limit ($u \to 1$ or
$\gamma \to \infty$) where the partition $\lambda$ becomes large, and
consider the behavior of the first part of $\lambda$. It is well-known
that the fluctuations of $\lambda_1$ are asymptotically given by the
Gumbel distribution in the uniform case \cite{el}, and by the
Tracy--Widom GUE distribution in the Plancherel case \cite{bdj}. We
find that, in a suitably tuned thermodynamic limit, our model provides
a ``crossover'' between these two behaviors. The interpolating
distribution depends on a positive parameter $\alpha$ and was
previously encountered by Johansson in the so-called MNS random matrix
model~\cite{joh3}, and by Le Doussal \emph{et al.}\ for free fermions
in a confining trap~\cite{DLMS16,dms}---see also~\cite{lw} and the
discussion in Section~\ref{sec:conclusion}.  It is given explicitly by
a Fredholm determinant
\begin{equation}
  \label{eq:Falphadef}
  F_\alpha(s) := \det(I-M_\alpha)_{L^2(s,\infty)}, \qquad
  M_\alpha(x,y) := \int_{-\infty}^\infty \frac{e^{\alpha v}}{1+e^{\alpha v}}
  \Ai(x+v) \Ai(y+v) dv, \qquad s,x,y \in \R
\end{equation}
where $\Ai$ is the Airy function and $\alpha$ a positive parameter.
Johansson proved that $F_\alpha$ is well-defined and indeed
interpolates between the Gumbel ($\alpha \to 0$) and the Tracy--Widom
GUE distributions ($\alpha \to \infty$).  The limit
$\alpha \to \infty$ is immediate since the ``Fermi factor''
$\frac{e^{\alpha v}}{1+e^{\alpha v}}$ reduces to the indicator
function $\mathbbm{1}_{v>0}$, and $M_\alpha$ reduces to the well-known
Airy kernel. The limit $\alpha \to 0$ is more subtle as it requires a
rescaling.  The kernel $M_\alpha$ has been called the
\emph{finite-temperature Airy kernel}~\cite{dms}. Our main result may
then be stated as follows.

\begin{thm}
  \label{thm:limedge}
  Consider the cylindric Plancherel measure~\eqref{eq:uPlanchdef} and
  let $u \to 1$ and $\gamma \to \infty$ in such a way that
  $\gamma (1-u)^2 \to \alpha^{3} > 0$ (with possibly
  $\alpha=\infty$). Then we have
  \begin{equation}
    \label{eq:limedge}
    \Prob\left( \frac{\lambda_1 - 2 L_{u,\gamma}}{L_{u,\gamma}^{1/3}} \leq s \right) \to
    F_\alpha(s), \qquad \text{where }L_{u,\gamma} := \frac{\gamma}{1-u} \sim
    \left(\frac{\gamma}{\alpha}\right)^{3/2}.
  \end{equation}
\end{thm}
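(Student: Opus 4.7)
My plan is to reduce Theorem~\ref{thm:limedge} to a saddle-point asymptotic analysis of the correlation kernel produced by the free fermion formalism developed in the earlier sections. Specialized to $N=1$ with two exponential specializations of parameter $\gamma$, the shift-mixed point process $\{\lambda_i - i + \tfrac{1}{2} + S\}_{i \geq 1}$ (where $S \in \Z$ is the random shift coming from the passage to the grand canonical ensemble) is determinantal with a kernel $K_{u,\gamma}(x,y)$ of double contour integral form; its integrand combines the Plancherel-type phase
\[
\frac{e^{\gamma(z - z^{-1})}\, z^{-x-1/2}}{e^{\gamma(w - w^{-1})}\, w^{-y+1/2}}
\]
with a ``finite-temperature Fermi--Dirac'' factor that packages the periodic weights $u^k/(1+u^k)$ summed over $k \in \Z+\tfrac12$. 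The key identity is then
\[
\det(I - K_{u,\gamma})_{\ell^2(\Z_{>M})} = \Prob\!\left(\lambda_1 + S - \tfrac{1}{2} \leq M \right).
\]

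The first step is to argue that $S$ is negligible at the edge scale. The shift-mixing distribution has (normalized) weight $\propto u^{S^2/2}$, hence $S$ is asymptotically Gaussian with variance $\sim 1/(1-u)$. The hypothesis $\gamma(1-u)^2 \to \alpha^3$ together with $L_{u,\gamma} = \gamma/(1-u)$ gives $1/(1-u) \sim L_{u,\gamma}^{1/3}/\alpha$, so the standard deviation of $S$ is of order $L_{u,\gamma}^{1/6}$, which is $o(L_{u,\gamma}^{1/3})$. Consequently, the event $\{\lambda_1 \leq 2L_{u,\gamma}+sL_{u,\gamma}^{1/3}\}$ and its shift-mixed counterpart have the same limit, and it suffices to show
\[
\det(I - K_{u,\gamma})_{\ell^2(\Z_{> 2L_{u,\gamma} + sL_{u,\gamma}^{1/3}})} \longrightarrow F_\alpha(s).
\]

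The core of the argument is a saddle-point analysis. Writing $x = 2L_{u,\gamma} + \xi L_{u,\gamma}^{1/3}$ and $z = 1 + Z L_{u,\gamma}^{-1/3}$ (and analogously $y, w$), the Plancherel phase has a cubic degeneracy at $z=1$ that produces $\Ai(\xi + v)$ upon contour deformation, exactly as in the $u=0$ Plancherel case~\cite{bdj,joh}. The index $k$ of the Fermi factor rescales as $k = v L_{u,\gamma}^{1/3}/\alpha + O(1)$; using $\log u \sim -(1-u) \sim -\alpha L_{u,\gamma}^{-1/3}$ one obtains $u^k \to e^{-\alpha v}$, and hence the Fermi factor converges to $e^{-\alpha v}/(1 + e^{-\alpha v})$, which after the change $v \mapsto -v$ coincides with the weight in \eqref{eq:Falphadef}. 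Combining, one reaches the pointwise convergence $L_{u,\gamma}^{1/3} K_{u,\gamma}(x,y) \to M_\alpha(\xi, \eta)$.

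The main technical obstacle is to upgrade this pointwise convergence to trace-norm (or Hilbert--Schmidt) convergence of the kernels on $L^2(s,\infty)$, from which the Fredholm determinant limit $\det(I-K_{u,\gamma}) \to F_\alpha(s)$ would follow by standard arguments. This requires uniform tail estimates on the rescaled kernel, and in particular uniform control of the Fermi summand as $u \to 1$, where the contours must be deformed to steepest-descent paths so that the exponential decay of $e^{\gamma(z-z^{-1})}$ off the saddle dominates, producing the super-exponential decay of $\Ai$ at $+\infty$ and sufficient bounds at $-\infty$. The extreme regimes $\alpha = +\infty$ (pure Airy, giving Tracy--Widom) and $\alpha \to 0$ (Gumbel) are encompassed in the same analysis by letting $\alpha$ tend to its extremes at the end, matching Johansson's results in~\cite{joh3}.
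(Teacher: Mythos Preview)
Your proposal is correct and follows essentially the same route as the paper: shift-mixing negligibility via the Gaussian tail of $S$ (the paper's Lemma~\ref{lem:Cestim}), saddle-point/contour analysis of the double integral kernel around the cubic critical point $z=w=1$ to get pointwise convergence $L^{1/3}K\to M_\alpha$, and then passage to Fredholm determinants.

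Two minor points of comparison. First, your rescaling of the Fermi index contains a slip: with $k=vL^{1/3}/\alpha$ and $\log u\sim -\alpha L^{-1/3}$ one gets $u^k\to e^{-v}$, not $e^{-\alpha v}$; the correct scaling (used in the paper) is $k=vL^{1/3}$, which indeed yields $u^k\to e^{-\alpha v}$. Second, rather than upgrading to trace-norm convergence, the paper uses the more elementary combination of pointwise convergence, Hadamard's inequality (exploiting that $K$ is positive semi-definite), and a separate convergence of traces $\sum_{i\geq m}K(i,i)\to\int_s^\infty M_\alpha(s',s')\,ds'$ to justify dominated convergence term by term in the Fredholm expansion. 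The required tail estimates you flag as the ``main technical obstacle'' are supplied in the paper either via Nicholson's approximation for Bessel functions (the representation $K(a,b)=\sum_\ell J_{a+\ell}J_{b+\ell}/(1+u^\ell)$) or, in the contour-integral proof, via a Poisson-summation estimate $\kappa(z,w)=\pi/(r\cosh(\pi\theta/r))+O(e^{-\pi^2/r}/r)$ that controls the elliptic propagator uniformly.
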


A more general statement holds for the joint convergence of
$\lambda_1,\ldots,\lambda_k$ for any $k\geq 1$. Let us comment on the
``physical'' intuition behind Theorem~\ref{thm:limedge}. On the one
hand, for $u \to 1$, $\lambda_1$ will have ``thermal'' fluctuations
whose order of magnitude is $(1-u)^{-1}$. On the other hand, for
$\gamma \to \infty$, the area $|\lambda|$ concentrates around the
value $L_{u,\gamma}^2$ hence $L_{u,\gamma}$ is the natural length
scale, and we expect ``quantum'' Tracy-Widom-type fluctuations of
order $L_{u,\gamma}^{1/3}$ for $\lambda_1$. The edge crossover regime
corresponds to the situation where thermal and quantum fluctuations
are of the same order of magnitude, and we expect the resulting
distribution $F_\alpha$ to be universal. Note that the edge crossover
regime is \emph{not} the same as that for the bulk, where one takes
$u \to 1$ keeping $\gamma$ fixed. As noted by Borodin, as soon as we
take $\gamma$ large then we recover in the bulk the usual discrete
sine process, with the same density profile as in the case $u=0$
corresponding to the Logan--Shepp--Vershik--Kerov limit
shape~\cite{ls,vk}. Therefore, there exists an intermediate regime
$1 \ll \gamma \ll (1-u)^{-2}$ where the bulk behaves as at zero
temperature ($u=0$) while the edge behavior is still purely governed
by thermal fluctuations. We have a precise counterpart of
Theorem~\ref{thm:limedge} in the whole ``high temperature'' regime.

\begin{thm}
  \label{thm:limedgegum}
  For the cylindric Plancherel measure~\eqref{eq:uPlanchdef}, set $u=e^{-r}$ and assume that
  $r \to 0$ and $\gamma r^{2} \to 0$ (with $\gamma$ possibly remaining
  finite). Then the rescaled fluctuations of $\lambda_1$ follow the
  Gumbel distribution. To wit:
  \begin{equation}
    \label{eq:limedgegum}
    \Prob\left( r \lambda_1 - \ln \frac{I_0(2\gamma+\gamma r)}{r} \leq s \right) \to
    e^{-e^{-s}}
  \end{equation}
  where $I_0(x)=\frac{1}{2\pi}\int_{-\pi}^\pi e^{x \cos \phi} d \phi$
  is the modified Bessel function of the first kind and order zero.
\end{thm}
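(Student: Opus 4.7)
The plan is to follow the same overall strategy as for Theorem~\ref{thm:limedge}: express $\Prob(\lambda_1 \leq M)$ as a Fredholm determinant of the cylindric Plancherel kernel derived via the free fermion formalism in the preceding sections, and analyze its asymptotics under the new scaling. Writing
\begin{equation*}
\Prob(\lambda_1 \leq M) = \det(I - K_{u,\gamma})_{\ell^2(\Z_{>M})},
\end{equation*}
the kernel $K_{u,\gamma}$ admits a double contour integral representation built from the Bessel generating function $e^{\gamma(z-z^{-1})}$ and the ``Fermi factor'' $(1+u^{-k})^{-1}$, and it is on this representation that the analysis will be performed.

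First, I would plug in the conjectured rescaling $M = M_\star(s) := r^{-1}\bigl(\ln(I_0(2\gamma+\gamma r)/r) + s\bigr)$ and study the diagonal $K_{u,\gamma}(M_\star(s)+j, M_\star(s)+j)$ for $j \geq 0$. In the regime $r \to 0$ with $\gamma r^2 \to 0$, the Fermi factor no longer sharpens into a step function as in the Airy crossover regime, and the dominant behaviour should instead come from a tail expansion of the Bessel generating series. One should find $K_{u,\gamma}(x,x) \sim r\, e^{-s - r(x-M_\star)}$ for $x$ near $M_\star(s)$. Summing over $x > M_\star(s)$ then yields a Riemann sum converging to $\int_0^\infty e^{-s-y}\,dy = e^{-s}$, which is the trace of the limiting integral operator and matches the Gumbel exponent.

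Second, I would upgrade the trace asymptotics to full Fredholm determinant convergence. Expanding
\begin{equation*}
\det(I - K_{u,\gamma})_{\ell^2(\Z_{>M})} = \sum_{n\geq 0}\frac{(-1)^n}{n!}\sum_{x_1,\ldots,x_n > M} \det\!\bigl(K_{u,\gamma}(x_i,x_j)\bigr)_{i,j=1}^n,
\end{equation*}
one expects the rescaled off-diagonal entries to be bounded by the same quantity as the diagonal, so Hadamard's inequality provides a uniform majorization of order $(e^{-s}+o(1))^n/n!$ for the $n$-th term. Dominated convergence then gives $\det(I-K_{u,\gamma}) \to \sum_{n\geq 0}(-e^{-s})^n/n! = e^{-e^{-s}}$, which is the Gumbel distribution function. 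The underlying probabilistic picture is that the rescaled determinantal point process above $M_\star(s)$ converges to a Poisson point process on $[0,\infty)$ of intensity $e^{-s-y}\,dy$, whose maximum is automatically Gumbel.

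The main technical obstacle I anticipate is the off-diagonal control of $K_{u,\gamma}$: one needs to deform the double-integral contours so as to simultaneously exhibit decay for particles at large distance from the edge (to secure trace-class convergence) and reveal the correct effective density near $M_\star(s)$. Because $\gamma r^2 \to 0$ rather than scaling to a finite $\alpha^3$, the relevant saddle point sits at a non-standard location and the uniform Bessel asymptotics used in the Airy regime must be replaced by their high-temperature counterparts; keeping track of all subleading terms is delicate. A secondary subtlety is verifying that the centering is precisely $\ln(I_0(2\gamma+\gamma r)/r)/r$ rather than the simpler $\ln(I_0(2\gamma)/r)/r$—the correction $\gamma r$ should be traceable to a Cauchy-type identity for the two exponential specializations of the $N=1$ periodic Schur process, and must be extracted from the exact form of the kernel before taking the limit.
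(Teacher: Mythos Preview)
Your high-level architecture matches the paper's: reduce to kernel asymptotics (Proposition~\ref{prop:Kconvgum}), then pass to the Fredholm determinant via Hadamard's inequality and dominated convergence, identifying the limit as $\sum_{n\geq 0}(-1)^n e^{-ns}/n!=e^{-e^{-s}}$. Two points, however, deserve attention.

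\medskip
\noindent\textbf{The mechanism for the kernel asymptotics is not a saddle point.} You speak of ``the relevant saddle point sitting at a non-standard location'' and of ``high-temperature Bessel asymptotics''. This is not how the paper proceeds, and a saddle-point attempt in the regime $\gamma r^2\to 0$ would be awkward because the exponential factor $e^{L(z-z^{-1}-w+w^{-1})}$ is now \emph{subdominant}. The paper's key idea (Appendix~\ref{sec:moreasymp}) is a \emph{residue extraction}: after the change of variables $z=e^{\zeta/2+\im\phi}$, $w=e^{-\zeta/2+\im\phi}$, one deforms the $\zeta$-contour from $\Re\zeta\in(0,r)$ to $\Re\zeta=3r/2$, crossing the simple pole of $\kappa(\zeta)$ at $\zeta=r$ (i.e.\ $z/w=u^{-1}$). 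The residue contribution is exactly
\[
T_1=\frac{I_{b-a}(G)}{r\,e^{(a+b)r/2}},\qquad G:=4L\sinh\tfrac{r}{2}=2Lr+o(1),
\]
and this single term carries the entire limit: for $a=b$ it gives $e^{-x}$ by~\eqref{eq:emr}, while for $a\neq b$ one needs the separate Bessel estimate $I_D(G)=o(I_0(G))$ when $|D|\sim|y-x|/r\to\infty$ (proved by a saddle-point bound on the integral representation of $I_D$). The remaining contour integral $T_2$ is shown to be $O((r\sqrt{G})^{1/2})\to 0$ by bounding $|\kappa|$ via Poisson summation. Your proposal does not identify this pole-crossing step, and without it the analysis does not get off the ground.

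\medskip
\noindent\textbf{Minor corrections.} The Fredholm determinant represents $\Prob(\lambda_1+c-\tfrac12<m)$ for the \emph{shift-mixed} process, not $\Prob(\lambda_1\leq M)$ directly; one then invokes Lemma~\ref{lem:Cestim} to discard $c$. Also, the centering question you raise is resolved simply: the paper works with $M=r^{-1}\ln\frac{I_0(2Lr)}{r}$ throughout, and since $2Lr=2\gamma+\gamma r+o(1)$ this agrees with the stated $I_0(2\gamma+\gamma r)$---no Cauchy identity is needed here.
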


Intuitively speaking, for $0 < \gamma \ll (1-u)^{-2}$, there is still
a crossover, but at the level of the deterministic leading-order
contribution to $\lambda_1$ rather than at the level of
fluctuations. The deterministic term, which we do not expect to be
universal, is consistent with both Borodin's density profile in the
regime $\gamma=O(1)$, and with Theorem~\ref{thm:limedge} in the
intermediate regime $1 \ll \gamma \ll (1-u)^{-2}$. Indeed, in the
latter case we have
$r^{-1} \ln \frac{I_0(2\gamma+\gamma
  r)}{r}=2L_{u,\gamma}-\frac{\ln(4\pi \gamma r^2)}{2r}+o(1)$, the
logarithmic correction matching that in Johansson's formula for the
limit $\alpha \to 0$ of the $F_\alpha$
distribution~\cite[Theorem~1.3]{joh3}.

\begin{figure}[t]
  \centering
  \includegraphics{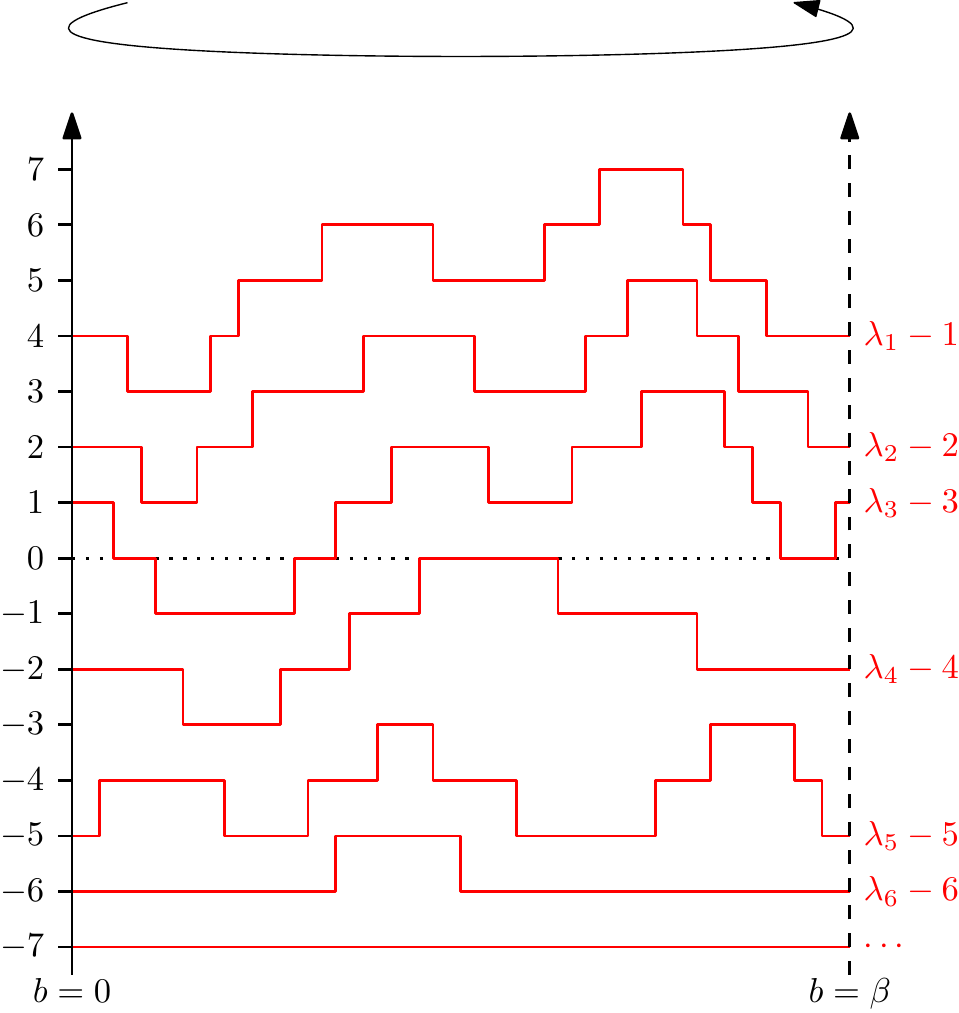}
  \caption{A possible sample of the cylindric Plancherel
    process. Following the notations of
    Section~\ref{sec:cylPlanchproc}, the horizontal axis represents
    the (imaginary) continuous time $b$, $\beta$ is the period, and
    the $i$-th path represents the time-evolution of
    $\lambda_i \in \{0,1,2,\ldots\}$ which performs unit jumps up or
    down at random times (we rather represent $\lambda_i-i$ so that
    the paths are nonintersecting). For $i$ large enough (here
    $i\geq 7$), $\lambda_i$ remains equal to $0$ at all time. For any
    fixed time $b$, the law of the partition $\lambda(b)$ is the
    cylindric Plancherel measure.}
  \label{fig:CPP_paths}
\end{figure}

Interestingly, the cylindric Plancherel measure admits a stationary
continuous-time periodic extension, which is the periodic analogue of
the stationary process of Borodin and Olshanski~\cite{BO06}, and which
we call the \emph{cylindric Plancherel process}---see
Figure~\ref{fig:CPP_paths}. We show that its correlation kernel
converges in the edge crossover regime to the extended
finite-temperature Airy kernel~\cite{dms}. The cylindric Plancherel
process can be thought as a certain ``poissonian'' limit of a measure
on cylindric partitions, as described below.

Finally, we consider an extension of our approach to the
shifted/strict setting. Vuleti\'c~\cite{vul2} defined the so-called
shifted Schur process, which is a measure on sequences of
\emph{strict} partitions (an integer partition is called strict if all
its parts are distinct), and whose definition involves Schur's $P$ 
and $Q$ functions instead of the ordinary Schur functions. We
naturally introduce the periodic variant of this process, which we
prefer to call the \emph{periodic strict Schur process}: to a sequence
of the form~\eqref{eq:seqform} where each element is a strict
partition, we assign a weight
\begin{equation}
  \label{eq:probdef_S}
  \Prob(\vec{\lambda},\vec{\mu})\ \propto\ u^{|\mu^{(0)}|} 
  \prod_{k=1}^N \left(Q_{\lambda^{(k)}/\mu^{(k-1)}}\left(\rho_k^+\right)
    P_{\lambda^{(k)}/\mu^{(k)}}\left(\rho_k^-\right) \right)
\end{equation}
with $u$ a real parameter smaller than $1$, $P_{\lambda/\mu}$ and
$Q_{\lambda/\mu}$ the Schur $P$ and $Q$ functions, and the
$\rho_k^\pm$ strict specializations---see the beginning of
Section~\ref{sec:corr_S} for definitions. Vuleti\'c's definition is
recovered in the case $u=0$, where the measure is concentrated on
sequences such that $\mu^{(0)}$ is the empty partition. We compute the
correlation functions of the periodic strict Schur process using again
the free fermion formalism that now involves so-called neutral
fermions as already observed in the $u=0$ case~\cite{vul2}. The
approach is even simpler than in the non strict case, because the
shift-mixing transformation becomes trivial. We find that the periodic
strict Schur process is a pfaffian point process, whose kernel
involves elliptic functions.

\paragraph{Outline.} Our paper is organized as follows. In
Section~\ref{sec:fermidirac} we discuss the connection between uniform
random partitions and Fermi--Dirac statistics, which illustrates the
key idea of our approach in the most elementary setting. In
Section~\ref{sec:corr}, we recall the fundamental results
of~\cite{B2007cyl} on the correlation functions of the periodic Schur
process. Our new derivation via the free fermion formalism is then
given in Section~\ref{sec:fermions}. We then turn to the asymptotic
analysis of the edge behavior in Section~\ref{sec:app}, and provide
the proof of Theorem~\ref{thm:limedge} (for
Theorem~\ref{thm:limedgegum}, some technical estimates are left to
Appendix~\ref{sec:moreasymp}). The stationary cylindric Plancherel
process is treated in
Section~\ref{sec:cylPlanchproc}. Section~\ref{sec:corr_S} is devoted
to the periodic strict Schur process and the derivation of its
correlation functions via neutral fermions. Finally
Section~\ref{sec:conclusion} gathers some concluding remarks and
discussion. A paper would not be complete without appendices: in
Appendix~\ref{sec:theta} we recall the definition of the elliptic
functions that we use; since it seems that basic facts about fermions
should always be recalled in an appendix (see e.g.\
\cite{oko,or,or2,vul}), we abide by the rule in
Appendices~\ref{sec:fermrem} and~\ref{sec:fermions_S} that deal with
charged and neutral fermions respectively; finally, in
Appendix~\ref{sec:moreasymp} we perform some further analysis of the
discrete finite-temperature Bessel kernel arising in Section~\ref{sec:app},
which we use to complete the proof of Theorem~\ref{thm:limedgegum} and
also to give a short rederivation of the bulk limiting kernel.

\paragraph{Acknowledgments.}
The authors had illuminating conversations related on the subject of
this note with many people, including I. Alevy, J. Baik, P. Biane, A. Borodin,
S. Corteel, P. Di Francesco, P. Ferrari, T. Imamura, L. Hodgkinson,
K. Johansson, C. Krattenthaler, G. Lambert, P. Le Doussal,
S. Majumdar, G. Miermont, M. Mucciconi, P. Nejjar, E. Rains,
N. Reshetikhin, T. Sasamoto, G. Schehr, M. Schlosser, M. Vuleti\'c and
M. Wheeler.

This work was initiated while the authors were
at the D\'epartement de math\'ematiques et applications, \'Ecole
normale sup\'erieure, Paris, and continued during several visits D.B.\ paid to J.B.\ at the ENS de Lyon. It was finalized while the authors were visiting the Matrix Institute on the University of Melbourne campus in Creswick, Australia. We wish to thank all institutions for their hospitality. 

We acknowledge financial support from the ``Combinatoire \`a Paris''
project funded by the City of Paris (D.B.\ and J.B.), from the
Laboratoire International Franco-Qu\'eb\'ecois de Recherche en
Combinatoire (J.B.), and from the Agence Nationale de la Recherche via
the grants ANR 12-JS02-001-01 ``Cartaplus'' and ANR-14-CE25-0014
``GRAAL'' (J.B.).

\section{Warm-up: integer partitions and Fermi--Dirac statistics}
\label{sec:fermidirac}

\begin{figure}
  \centering
  \includegraphics[width=.5\textwidth]{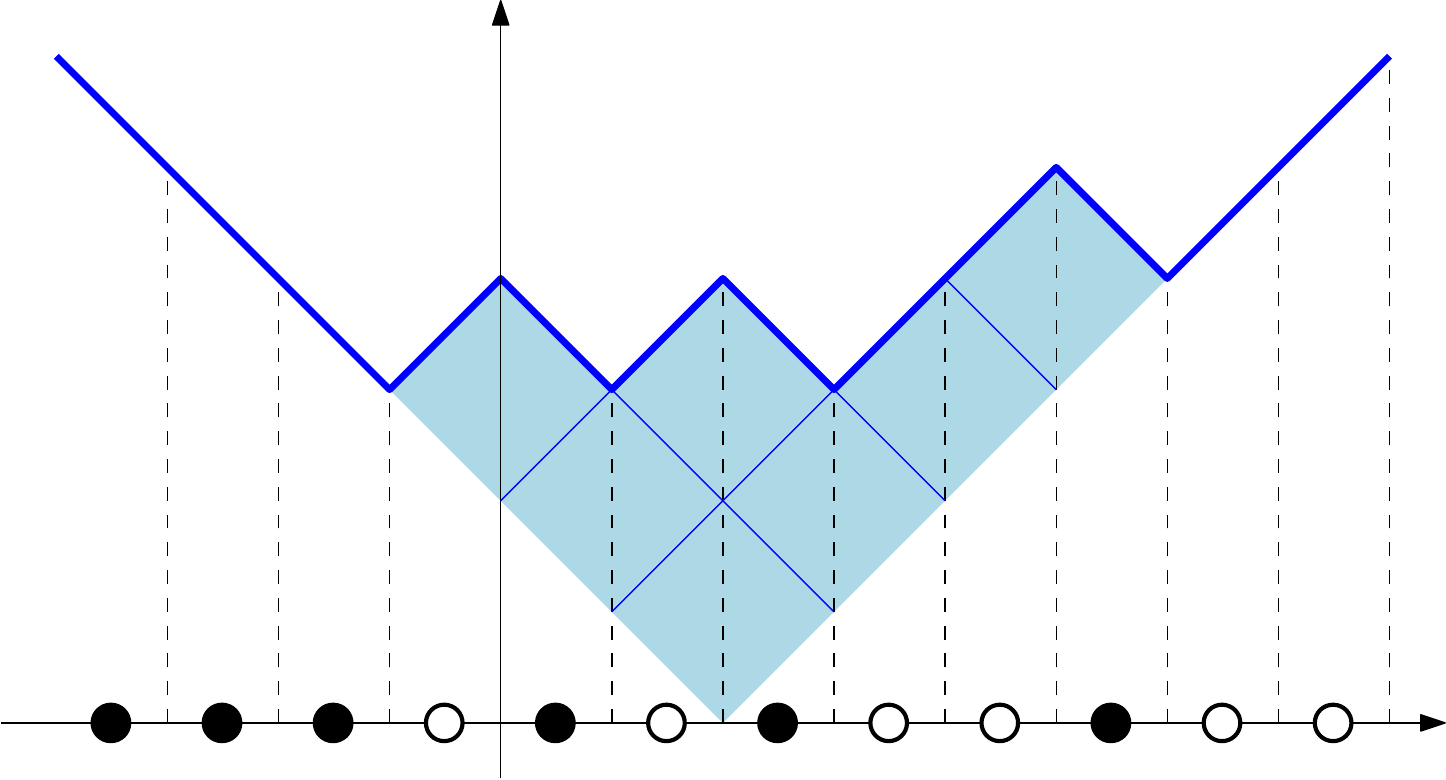}
  \caption{Illustration of the correspondence between a partition and
    a Maya diagram. The Young diagram of the partition $(4,2,1)$
    appears in blue. The corresponding Maya diagram is represented as
    a sequence of ``particles'' ($\bullet$) and ``holes''
    ($\circ$). There must be as many particles on the right of the
    bottom corner of the diagram as holes on its left. We may lift
    this constraint on the Maya diagram by moving the origin
    (displayed at the intersection between the axes) to an arbitrary
    position on the horizontal axis.  By cutting the Maya diagram at
    the origin, we obtain two strict partitions (parts correspond to
    particles on the right, holes on the left), whose number of parts
    differ by the \emph{charge} (the abscissa of the bottom corner of
    the diagram, here equal to $2$).  }
  \label{fig:russian3}
\end{figure}

An (integer) \emph{partition} $\lambda$ is a nonincreasing sequence of
integers $\lambda_1 \geq \lambda_2 \geq \cdots$ which vanishes
eventually. The strictly positive $\lambda_i$'s are called the
\emph{parts}, and their number is the \emph{length} of the partition,
denoted by $\ell(\lambda)$. The \emph{size} of $\lambda$ is
$|\lambda|:=\sum_{i \geq 1} \lambda_i$. A partition is called
\emph{strict} if all its parts are distinct. We denote by $\Par$
(respectively $\SPar$) the set of all (respectively strict) partitions. In
physics, arbitrary partitions describe ``bosons'', and strict
partitions ``fermions''~\cite{Vershik1996}.

There are several classical bijections between various classes of
partitions, strict or not. For instance, $\Par$ is in bijection with
pairs of strict partitions of the same length. Such pairs may
conveniently be seen as Maya diagrams (or excited ``Dirac seas''), and there
is more generally a bijection between $\Par \times \Z$ (\emph{charged
  partitions}) and $\SPar \times \SPar$---see
Figure~\ref{fig:russian3}. This bijection yields a beautiful proof of
the Jacobi triple product identity, see e.g.~\cite{cor} and references
therein. We now recall this proof as it is at the core of our
approach.

A \emph{Maya diagram} is a bi-infinite binary sequence
$\underline{n}=(n_i)_{i \in \Z'} \in \{0,1\}^{\Z'}$ such that $n_i=0$
for $i \gg 0$ and $n_i=1$ for $i \ll 0$. Here $\Z' := \Z +1/2$ denotes
the set of half-integers. This latter choice is purely conventional and makes
some formulas more symmetric. On Figure~\ref{fig:russian3}, $0$'s and
$1$'s are represented as holes and particles respectively. To a Maya
diagram $\underline{n}$ we associate its \emph{charge}
$C(\underline{n})$ and its \emph{energy} $H(\underline{n})$ defined by
\begin{equation}
  \label{eq:CHdef}
  C(\underline{n}) := \sum_{i>0} n_i + \sum_{i<0} (n_i-1), \qquad
  H(\underline{n}) := \sum_{i>0} i n_i + \sum_{i<0} i (n_i-1).
\end{equation}
Clearly, the energy is nonnegative and vanishes if and only if
$\underline{n}=\{\mathbbm{1}_{i<0}\}$ (\emph{vacuum}).

Given two parameters $t,u$, we associate to each Maya diagram a weight
$t^{C(\underline{n})} u^{H(\underline{n})}$. The total weight, the sum
over all Maya diagrams, clearly factorizes as an infinite product over
$i \in \Z'$. On the other hand, if we consider the charged partition
$(\lambda,c)$ associated with $\underline{n}$, then it is not
difficult to check that
\begin{equation}
  C(\underline{n}) = c, \qquad H(\underline{n}) = |\lambda| + \frac{c^2}{2}.
\end{equation}
Therefore, the weight can be rewritten as $t^c u^{|\lambda|+c^2/2}$.
By summing over all pairs $(\lambda,c)$, we end up with the identity
\begin{equation}
  \label{eq:pfdir}
  \prod_{k=0}^\infty (1 + t u^{k+1/2})(1 + t^{-1} u^{k+1/2}) =
  \left( \sum_{\lambda \in \Par} u^{|\lambda|} \right)
  \left( \sum_{c \in \Z} t^c u^{c^2/2} \right) =
  \frac{1}{(u;u)_\infty} \theta_3(t; u)
\end{equation}
which is equivalent to the Jacobi triple product identity---see
Appendix~\ref{sec:theta} for reminders and notations.

Let us now reflect on the probabilistic meaning of this
construction. Viewing $t^{C(\underline{n})} u^{H(\underline{n})}$ as a
Boltzmann weight for $\underline{n}$ and normalizing by the partition
function \eqref{eq:pfdir}, we get a well-defined probability
distribution for $t \in (0,\infty)$ and $u \in [0,1)$. In the Maya
diagram picture we learn that the $n_i$ are independent and that
\begin{equation}
  \label{eq:fermidist}
  \Prob(n_i=1) = \frac{t u^i}{1+ t u^i} = \frac{1}{1+ t^{-1} u^{-i}}.
\end{equation}
This is nothing but the \emph{Fermi--Dirac distribution} for a system
of noninteracting fermions with equally spaced energy levels (write
$t^{-1} u^{-i}=e^{\beta(\epsilon i - \mu)}$ to recognize more physical
variables). In the charged partition picture, we learn that $\lambda$
and $c$ are independent, with $\lambda$ a uniform random partition and
$c$ distributed as
\begin{equation}
  \label{eq:cdist}
  \Prob(c)=\frac{t^c u^{c^2/2}}{\theta_3(t; u)}.
\end{equation}
This fact was observed by Borodin \cite[Corollary~2.6]{B2007cyl} and
we suspect that it might appear elsewhere in the literature under
various forms.

If we condition on $c=0$, then the $n_i$'s are no longer
independent. Therefore, passing to the ``grand canonical ensemble''---as 
we do by letting $c$ fluctuate---makes it much easier to study
random partitions, at least for the observables we can capture in the
Maya diagram picture. We argue that this is the fundamental reason why
the shift-mixed periodic Schur process (to be defined below) is
determinantal: \emph{fermions are only ``free'' in the grand canonical
ensemble}.

Before proceeding, let us briefly discuss the thermodynamic limit
$u \to 1^-$. From \eqref{eq:fermidist} we see that, if we rescale
$i=x/|\ln u|$, then the Maya diagram has a limiting density profile
$1/(1+t^{-1}e^x)$. It is not difficult to check that this is
consistent with Vershik's limit shape for uniform random
partitions~\cite{Vershik1996}. Note that the parameter $t$ only
induces a shift of the profile. This is because the charge $c$
concentrates around its mean value of order $1/|\ln u|$.

\begin{lem}
  \label{lem:Cestim}
  Let $c$ be an integer-valued random variable distributed as in
  \eqref{eq:cdist}. Then, as $u \to 1^-$, the random variable
  $\tilde{c}:=(c+\ln t/\ln u)\sqrt{|\ln u|}$ converges weakly to the
  standard normal distribution.
\end{lem}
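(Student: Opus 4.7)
The plan is to recognize the law of $c$ as a discrete Gaussian on $\Z$ and to control its normalizing constant via the Jacobi imaginary transformation for the theta function. Setting $r := -\ln u > 0$ (so $r\to 0^+$) and completing the square in the exponent,
\begin{equation*}
  \Prob(c=n) = \frac{e^{(\ln t)^2/(2r)}}{\theta_3(t;u)}\, \exp\!\left(-\frac{r}{2}\Bigl(n+\frac{\ln t}{\ln u}\Bigr)^{\!2}\right), \qquad n\in\Z,
\end{equation*}
so $\tilde c$ takes values in the shifted lattice $\sqrt{r}(\Z + \ln t/\ln u)$ of spacing $\sqrt{r}$, and on this lattice carries weight $e^{-x^2/2}$ up to the $r$-dependent prefactor $e^{(\ln t)^2/(2r)}/\theta_3(t;u)$.

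The key analytic input is the asymptotics of this prefactor. I would obtain it via Poisson summation applied to $f(x):=e^{x\ln t - rx^2/2}$: a Gaussian integral (completing the square inside the Fourier transform) gives $\hat f(\omega)=\sqrt{2\pi/r}\,e^{(\ln t-i\omega)^2/(2r)}$, and therefore
\begin{equation*}
  \theta_3(t;u)=\sum_{n\in\Z}f(n)=\sqrt{\tfrac{2\pi}{r}}\sum_{k\in\Z}e^{(\ln t-2\pi ik)^2/(2r)}=\sqrt{\tfrac{2\pi}{r}}\,e^{(\ln t)^2/(2r)}\bigl(1+O(e^{-2\pi^2/r})\bigr),
\end{equation*}
the contributions from $k\ne 0$ being exponentially suppressed relative to the $k=0$ term. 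This is the classical Jacobi imaginary transformation and is the only nontrivial ingredient.

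Plugging back gives $\Prob(\tilde c = x) = \sqrt{r}\,(2\pi)^{-1/2} e^{-x^2/2}(1+o(1))$ uniformly for $x$ in any bounded set. Since the mesh $\sqrt{r}$ of the support lattice tends to $0$, for any $a<b$ the probability $\Prob(\tilde c\in[a,b])$ is a Riemann sum for $(2\pi)^{-1/2}e^{-x^2/2}$ on $[a,b]$ that converges to $\int_a^b(2\pi)^{-1/2}e^{-x^2/2}\,dx$, and a crude Gaussian tail bound (immediate from the discrete Gaussian shape above) upgrades this to convergence of the full CDF at every real point. Weak convergence to the standard normal follows. Equivalently, one could compute $\mathrm E[e^{\im \xi \tilde c}]$ directly by the same Poisson summation and invoke L\'evy's continuity theorem. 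The argument is a local CLT for a discrete Gaussian made exact by the theta modular identity; no serious obstacle arises, the one item requiring care being the bookkeeping of the noninteger lattice shift $\ln t/\ln u$ and the verification that the non-zero Fourier modes in Poisson summation are $O(e^{-2\pi^2/r})$ relative to the leading term.
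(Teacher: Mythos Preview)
Your proof is correct and follows the same Riemann-sum idea as the paper, which disposes of the lemma in a single sentence (``Observe that, for any interval $I$, $\Prob(\tilde c\in I)$ is a Riemann sum converging to $\frac{1}{\sqrt{2\pi}}\int_I e^{-x^2/2}\,dx$''). Your contribution is to make explicit, via Poisson summation/the Jacobi imaginary transformation, the asymptotics $\theta_3(t;u)\sim\sqrt{2\pi/r}\,e^{(\ln t)^2/(2r)}$ of the normalizing constant that the paper takes for granted; this is a legitimate and clean way to justify that step.
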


\begin{proof}
  Observe that, for any interval $I$, $\Prob(\tilde{c} \in I)$ is a
  Riemann sum converging to
  $\frac{1}{\sqrt{2\pi}} \int_I e^{-x^2/2}dx$.
\end{proof}

This shows in some sense the equivalence of the canonical and grand
canonical ensembles for our fermions. Note that the equivalence with
the microcanonical ensemble (partitions of fixed size) is another
matter~\cite{Vershik1996}, and that the equivalence between the
canonical and grand canonical ensembles in the MNS model was
established in~\cite{lw}. We now turn to the edge behavior of the Maya
diagram---see \cite[Section~3.4]{oko} for the bulk behavior.  It is
elementary to check that
\begin{equation}
  \Prob(M < k) = \frac{1}{(-t u^k;u)_\infty}, \qquad M := \max\{i: n_i=1\}.
\end{equation}
Recognizing a $q$-exponential function, and using the asymptotics
$((1-u)z;u)_\infty \to e^{-z}$ \cite[p.11]{gr}, we easily deduce
that
\begin{equation}
  \lim_{u \to 1^-} \Prob\left(M < -\frac{\ln(1-u)}{\ln u} + \frac{\xi}{|\ln u|} \right) = e^{-t e^{-\xi}}.
\end{equation}
In other words, the fluctuations of $M$ are given by the Gumbel
distribution. By Lemma~\ref{lem:Cestim}, we deduce that $\lambda_1$
has asymptotically the same distribution (with $t=1$), which is
consistent with the result of Erd\H{o}s and Lehner~\cite{el}. Note
that we could have alternatively started from the expression
$\Prob(\lambda_1 < k) = (u^k;u)_\infty$ to arrive at the same result.

\section{Correlations functions of the periodic Schur process}
\label{sec:corr}

In this section we recall the fundamental results of~\cite{B2007cyl},
for which we give a new proof in the next section.
Here we follow the notational conventions of
\cite{bbnv} regarding Schur functions, which we briefly recall now.
Given a sequence of numbers
$(h_n(\rho))_{n \in \Z}$ such that $h_0(\rho)=1$ and $h_n(\rho)=0$ for
$n<0$, and two partitions $\lambda,\mu$, the \emph{skew Schur function
  of shape $\lambda/\mu$ specialized at $\rho$} is given by
\begin{equation}
  s_{\lambda / \mu}(\rho) := \det_{1 \leq i,j \leq \ell(\lambda)} h_{\lambda_i-i-\mu_j+j}(\rho).
\end{equation}
It vanishes unless $\lambda/\mu$ is a \emph{skew shape} (i.e.\ we have
$\lambda \supset \mu$, i.e.\ $\lambda_i \geq \mu_i$ for all $i$). The
``specialization'' $\rho$ is conveniently encoded into the generating
function
\begin{equation}
  \label{eq:Hrhodef}
  H(\rho;z) := \sum_{n \geq 0} h_n(\rho) z^n.
\end{equation}
Thoma's theorem \cite{tho, ais} states that $s_{\lambda / \mu}(\rho)$
is nonnegative for all $\lambda,\mu$ if and only if the generating
function $H(\rho;z)$ is of the form
\begin{equation}
  \label{eq:rhoposparam}
  H(\rho;z) = e^{\gamma z} \prod_{i \geq 1} \frac{1+\beta_i z}{1-\alpha_i z}
\end{equation}
with $\gamma,\alpha_1,\beta_1,\alpha_2,\beta_2,\ldots$ a summable
collection of nonnegative parameters (the specialization $\rho$ is
then called \emph{nonnegative}). In particular, when only $\gamma$ is
nonzero, we obtain the \emph{exponential specialization} denoted
$\ex_\gamma$, for which
\begin{equation}
  \label{eq:expspec}
  s_{\lambda/\mu}(\ex_\gamma) = \frac{\gamma^{|\lambda|-|\mu|} \dim(\lambda/\mu)}{(|\lambda|-|\mu|)!}
\end{equation}
with $\dim(\lambda/\mu)$ the number of standard Young tableaux of
shape $\lambda/\mu$. When only $\alpha_1=q$ is nonzero, we obtain
the \emph{specialization in a single variable $q$} for which
\begin{equation}
  \label{eq:singlespec}
  s_{\lambda/\mu}(q)=
  \begin{cases}
    q^{|\lambda|-|\mu|} & \text{if $\lambda_1 \geq \mu_1 \geq \lambda_2 \geq \mu_2 \geq \cdots$,}\\
    0 & \text{otherwise}.
  \end{cases}
\end{equation}

For simplicity, we will assume in the following that all
specializations are nonnegative and such that their generating
functions are analytic and nonzero in a disk of radius $R>1$ (this is
clearly the case for the exponential specialization). In that case,
the measure \eqref{eq:probdef} can be normalized into a probability
distribution which is the \emph{periodic Schur process}---see
\cite{B2007cyl} or the discussion below for more on the normalization.

Following \cite{or}, to any sequence of partitions
$\vec{\lambda}=(\lambda^{(1)},\ldots,\lambda^{(N)})$, we associate
the \emph{point configuration}
\begin{equation}
  \label{eq:pointconfig}
  \mathfrak{S}(\vec{\lambda}) := \left\{
    \left(i,\lambda^{(i)}_j-j+\frac{1}{2}\right),\ 1 \leq i \leq N , j \geq 1
  \right\} \subset \{1,\ldots,N\} \times \Z'.
\end{equation}
Observe that this definition is closely related to the Maya diagrams
encountered in Section~\ref{sec:fermidirac}: the
$\lambda^{(i)}_j-j+\frac{1}{2}$, $j \geq 1$, are precisely the
positions of particles ($1$'s) in the Maya diagram associated with the
partition $\lambda^{(i)}$ when the charge is $0$. See again
Figure~\ref{fig:russian3}.

Our aim is to study the point configuration when $\vec{\lambda}$ is
the first marginal of the periodic Schur process. There is no loss of
generality in discarding $\vec{\mu}$, and doing so simplifies the
ensuing expressions. $\mathfrak{S}(\vec{\lambda})$ is a discrete point
process, and as such its properties are characterized by the data of
the \emph{correlation functions}
\begin{equation}
  \label{eq:rhoUdef}
  \varrho(U) := \Prob\left(U \subset \mathfrak{S}(\vec{\lambda})\right)
\end{equation}
for any finite set $U \subset \{1,\ldots,N\} \times \Z'$.

Borodin observed that $\rho(U)$ is quite complicated in general
(though it can be written in the form of a multiple contour integral---see 
Proposition~\ref{prop:ncont} below), but we may transform it into
a determinant at the price of ``shift-mixing'' the process. More
precisely, we take an integer-valued random variable $c$, independent
of $\vec{\lambda}$ and distributed according to \eqref{eq:cdist} (with
$t$ an extra arbitrary parameter), and we define the \emph{shift-mixed
  correlation function} as
\begin{equation}
  \label{eq:trhoUdef}
  \tilde{\varrho}(U) := \Prob\left(U \subset \mathfrak{S}(\vec{\lambda}) + (0,c) \right).
\end{equation}
In other words, we are just moving $\mathfrak{S}(\vec{\lambda})$ by a
random vertical shift $c$. The shift-mixed process is \emph{determinantal} in
the sense that there exists a \emph{correlation kernel}
$K: (\{1,\ldots,N\} \times \Z') \times (\{1,\ldots,N\} \times \Z') \to \C$ such that
\begin{equation}
  \tilde{\varrho}(U) = \det_{1 \leq i,j \leq n} K(u_i;u_j), \qquad U=\{u_1,\ldots,u_n\}.
\end{equation}
See for instance~\cite{Joh05} for general background on determinantal
processes. Let us now state Borodin's main result in our current
notations.

\begin{thm}[{{\cite[Theorem~A]{B2007cyl}}}] \label{thm:main_thm}
  The shift-mixed periodic Schur process is determinantal with correlation kernel given by
  \begin{equation}
    \label{eq:Kint}
    \begin{split}
      K(i,k; i',k') = \frac{1}{(2 \pi \im)^2} \oint_{|z|=r} \frac{d z}{z^{k+1}} \oint_{|w|=r'} \frac{d w}{w^{-k'+1}} \cdot \frac{F(i,z)}{F(i', w)} \cdot \kappa(z, w)
    \end{split}
  \end{equation}
  where $i,i' \in \{1,\ldots,N\}$, $k,k' \in \Z'$, the radii $r,r'$ satisfy
  \begin{equation}
    R^{-1} \leq r,r' \leq R, \qquad r/r' \in
    \begin{cases}
      (u,1) & \text{if $i > i'$,} \\
      (1,u^{-1}) & \text{if $i \leq i'$,}
    \end{cases}
  \end{equation}
  $F$ carries the dependence on the specializations $\rho_k^\pm$ as
  \begin{equation}
     \label{eq:Fthm}
     F(i, z) := \frac{\prod_{1 \leq \ell \leq i} H(\rho_{\ell}^+;z)}{\prod_{i \leq \ell \leq N} H(\rho_{\ell}^-;z^{-1})} \prod_{n \geq 1} \prod_{1 \leq \ell \leq N} \frac{H(u^n \rho_{\ell}^+;z)}{H(u^n \rho_{\ell}^-;z^{-1})}
   \end{equation}
   and
   \begin{equation}
     \label{eq:kapform}
     \kappa(z,w) :=
     \begin{cases}
       \sum_{m \in \Z'} \frac{(z/w)^m}{1+(t u^m)^{-1}} & \text{for $|z/w|\in (1,u^{-1})$,} \\
       -\sum_{m \in \Z'} \frac{(z/w)^m}{1+t u^{m}} & \text{for $|z/w| \in (u,1)$.} \\
     \end{cases}
   \end{equation}
\end{thm}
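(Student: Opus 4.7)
The plan is to realize the shift-mixed periodic Schur process as a grand canonical trace of free fermion operators and then apply Wick's theorem. Work in the standard charged-fermion Fock space (Appendix~\ref{sec:fermrem}) with modes $\psi_k,\psi_k^*$ ($k\in\Z'$), charge operator $C$, and energy operator $H$ (not to be confused with the generating function $H(\rho;z)$), diagonalized by the charged partition basis $\ket{\lambda,c}$ with eigenvalues $c$ and $|\lambda|+c^2/2$. Introduce the half-vertex operators $\Gapl(\rho),\Gami(\rho)$, whose matrix elements reproduce skew Schur functions and which satisfy the Cauchy commutation $\Gapl(\rho)\Gami(\rho')=H(\rho;\rho')\Gami(\rho')\Gapl(\rho)$. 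Summing \eqref{eq:probdef} over $\vec\mu$ produces the product $\prod_k\Gami(\rho_k^-)\Gapl(\rho_k^+)$, while summing $u^{|\mu^{(0)}|}$ over the periodic identification $\mu^{(N)}=\mu^{(0)}$ together with the shift-mixing weight $t^c u^{c^2/2}$ from \eqref{eq:cdist} assembles into
\begin{equation*}
\tilde Z=\tr\!\left(t^C u^H\prod_{k=1}^N\Gami(\rho_k^-)\Gapl(\rho_k^+)\right),
\end{equation*}
since $u^H t^C$ is diagonal on $\ket{\lambda,c}$ with eigenvalue $u^{|\lambda|+c^2/2}t^c$. This is the fundamental reinterpretation foreshadowed in Section~\ref{sec:fermidirac}: \emph{Borodin's shift-mixing is grand canonical averaging in the fermionic picture}, and the factor $u^{c^2/2}$ in \eqref{eq:cdist} is precisely the vacuum energy contribution in charge sector $c$.

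A point $(i,k)$ belongs to $\mathfrak{S}(\vec\lambda)+(0,c)$ iff the corresponding level of the Maya diagram of $\lambda^{(i)}$ is occupied after the global shift, which on the operator side is the insertion of the number bilinear $\psi_k\psi_k^*$ between $\Gapl(\rho_i^+)$ and $\Gami(\rho_{i+1}^-)$ in the above product. Thus $\tilde\varrho(U)$ is the grand canonical average of such a product of bilinears; since $t^C u^H$ is a quadratic exponential in the fermions and each $\Gamma_\pm$ conjugates $\psi(z),\psi^*(w)$ by scalar generating functions, the combined state is quasi-free and Wick's theorem gives $\tilde\varrho(U)=\det[K(u_a;u_b)]$ with $K$ the single two-point function. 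To evaluate $K(i,k;i',k')$, contour-extract the fermion modes from the fields $\psi(z),\psi^*(w)$ and commute all vertex operators past them using $\Gapl(\rho)^{-1}\psi(z)\Gapl(\rho)=H(\rho;z)\psi(z)$ and its three companions; commuting each $\Gamma$ past $u^H$ when closing the trace cyclically multiplies its spectral argument by $u$, producing the infinite product $\prod_{n\geq 1}$ in \eqref{eq:Fthm} and altogether reconstructing the scalar prefactor $F(i,z)/F(i',w)$. What remains is the bare grand canonical two-point function of the fermion fields, which factorizes over modes via the Fermi--Dirac distribution \eqref{eq:fermidist} as
\begin{equation*}
\frac{\tr\bigl(t^C u^H\psi(z)\psi^*(w)\bigr)}{\tr(t^C u^H)}=\sum_{m\in\Z'}\frac{(z/w)^m}{1+(tu^m)^{-1}},
\end{equation*}
valid in the annulus $|z/w|\in(1,u^{-1})$; the companion case with $|z/w|\in(u,1)$ and an overall sign is obtained from the reversed ordering $\psi^*(w)\psi(z)$ forced by the cyclic position when $i>i'$. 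This is exactly $\kappa(z,w)$.

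The calculation is essentially a periodic analogue of Okounkov's original fermionic derivation for the Schur process, with vacuum expectations replaced by grand canonical traces. The main obstacle I anticipate is not algebraic but analytic: bookkeeping the cyclic reorderings and the annuli of convergence of the Laurent series in $z/w$, which is what dictates the two cases in \eqref{eq:kapform} and the radii prescriptions $R^{-1}\leq r,r'\leq R$ with the appropriate ratio bounds. The underlying algebra --- Cauchy commutation of vertex operators, their conjugation action on single fermion fields, and the quasi-free Wick theorem --- is standard once the fermionic Fock space has been set up.
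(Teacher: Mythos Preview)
Your approach is essentially identical to the paper's derivation in Section~\ref{sec:fermions}: represent the shift-mixed correlations as a grand canonical trace, perform $\Gamma$-elimination via cyclicity to extract the scalar prefactor $F(i,z)/F(i',w)$, and then apply the finite-temperature Wick lemma (Lemma~\ref{lem:wickgen}) to obtain the determinantal structure with bare propagator $\kappa(z,w)$. One bookkeeping slip to fix when you write it out: the operator order in the trace should be $\prod_k\Gapl(\rho_k^+)\Gami(\rho_k^-)$ (not $\Gami\Gapl$), with the number bilinear $\psi_k\psi_k^*$ inserted between $\Gapl(\rho_i^+)$ and $\Gami(\rho_i^-)$ for the \emph{same} index $i$---that is where $\lambda^{(i)}$ sits in the transfer matrix---rather than between $\Gapl(\rho_i^+)$ and $\Gami(\rho_{i+1}^-)$; getting this wrong would shift the indices in $F$ and the $i\leq i'$ versus $i>i'$ dichotomy for $\kappa$.
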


Several remarks are now in order. On the one hand, we recover the
correlation kernel of the original Schur process~\cite[Theorem~1]{or}
by taking $u=0$, in which case $\kappa(z,w)$ simplifies into
$\sqrt{zw}/(z-w)$. Note that $\Prob(c=0)=1$ so shift-mixing can be
disregarded in that case. On the other hand, if we consider only
trivial specializations (so that $F(i,z)=1$), then $K(i,k;i',k')$
vanishes for $k\neq k'$, and the shift-mixed process is a discrete
Poisson process. This is nothing but the phenomenon discussed in
Section~\ref{sec:fermidirac}.

As remarked by Borodin~\cite[Remark~2.4]{B2007cyl}, the two cases in
\eqref{eq:kapform} correspond to the expansions in two different
annuli of the same meromorphic function, namely
\begin{equation}
  \label{eq:kapprod}
  \kappa(z,w) =\sqrt{\frac{w}{z}} \cdot \frac{(u;u)^2_{\infty}}{\theta_u(\frac{w}{z})} \cdot \frac{\theta_3 (\frac{tz}{w}; u)}{\theta_3(t;u)} = \im \frac{\eta(u)^3 \theta_3(\frac{tz}{w}; u)}{
    \theta_1(\frac{w}{z}; u) \theta_3(t; u)}
\end{equation}
with $(u; u)_{\infty}$ the infinite Pochhammer symbol, $\eta$ the
Dedekind eta function, $\theta_u$ the ``multiplicative'' theta
function, and $\theta_1,\theta_3$ the Jacobi theta functions---see
Appendix~\ref{sec:theta} for conventions. The equality between the sum
expression~\eqref{eq:kapform} and the ``theta''
form~\eqref{eq:kapprod} is a particular case of Ramanujan's
${}_1\Psi_1$ summation formula, and may be rederived directly using
the boson--fermion correspondence as explained in
Section~\ref{sec:fermions}. The first theta form of $\kappa(z, w)$
arises naturally from the computations; in the second form we
introduce the Dedekind eta function to make the intriguing modular
properties more apparent. In fact, $\kappa(z,w)$ may be interpreted as
a propagator $\langle \psi(z) \psi^*(w) \rangle$ for the conformal
field theory of charged fermions on a torus, in which modular
invariance plays a crucial
role~\cite[Chapter~10]{CFTbook}.

Finally, Borodin observed that using an elliptic version of
the Cauchy determinant due to Frobenius---see
Remark~\ref{rem:frobdet} below---, we may obtain an explicit contour
integral representation for $\varrho(U)$:

\begin{prop}[{{\cite[Corollary~2.8]{B2007cyl}}}]
  \label{prop:ncont}
  The general $n$-point correlation function for the periodic Schur process is
  given by
  \begin{equation}
    \varrho(U) = \frac{\eta(u)^{3n}}{(2\im \pi)^{2n}} \oint \cdots \oint
    \prod_{\ell=1}^n \left( \frac{d z_\ell}{z_\ell^{k_\ell+1}} \cdot \frac{d w_\ell}{w_\ell^{-k_\ell+1}} \cdot \frac{F(i_\ell,z_\ell)}{F(i_\ell, w_\ell)} \right) \cdot
    \frac{\prod_{1 \leq \ell < m \leq n} \theta_1(z_\ell/z_m;t) \theta_1(w_\ell/w_m;t)}{
      \prod_{1 \leq \ell,m \leq n} \theta_1(z_\ell/w_m;t)}
  \end{equation}
  where $U=\{(i_1,k_1),(i_2,k_2),\ldots,(i_n,k_n)\}$ and, upon
  ordering $i_1 \leq i_2 \leq \cdots \leq i_n$, the integration is taken over nested
  circles $\min(u^{1/2},R)>|z_1|>|w_1|>|z_2|>|w_2|>\cdots>|z_n|>|w_n|>\max(u^{-1/2},R^{-1})$. The
  $n$-point correlation $\tilde{\rho}(U)$ for the shift-mixed process
  is given by simply multiplying the integrand above by
  $\theta_3(t\frac{z_1 z_2 \cdots z_n}{w_1 w_2 \cdots
    w_n};u)/\theta_3(t;u)$.
\end{prop}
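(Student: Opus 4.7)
The plan is to derive the multiple contour integral formula directly from the determinantal structure of the shift-mixed process (Theorem~\ref{thm:main_thm}) combined with an elliptic Cauchy determinant evaluation, and then to recover the non-shift-mixed formula by inverting the passage to the grand canonical ensemble.

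Starting from $\tilde{\varrho}(U) = \det_{1 \leq \ell, m \leq n} K(u_\ell; u_m)$ with $K$ given by \eqref{eq:Kint}, I would pull all $2n$ contour integrations out of the determinant by multilinearity, attaching the integration variable $w_m$ to the $m$-th column (rather than to the $\ell$-th row). The row and column prefactors $F(i_\ell, z_\ell)$ and $1/F(i_m, w_m)$ then factor out, leaving the residual determinant $\det_{\ell, m} \kappa(z_\ell, w_m)$ inside the integrand. The main step is to evaluate this determinant via Frobenius's elliptic extension of the Cauchy determinant, applied to the theta form $\kappa(z,w) = \im\, \eta(u)^3\, \theta_3(tz/w; u)/[\theta_1(w/z; u)\, \theta_3(t; u)]$ from \eqref{eq:kapprod}. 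Frobenius's identity produces exactly
\begin{equation*}
  \det\left[\kappa(z_\ell, w_m)\right]_{1 \leq \ell, m \leq n} = \pm\, \eta(u)^{3n} \cdot \frac{\theta_3\!\left(t\, \textstyle\prod_\ell z_\ell / \prod_m w_m;\, u\right)}{\theta_3(t; u)} \cdot \frac{\prod_{\ell<m} \theta_1(z_\ell/z_m; u)\, \theta_1(w_\ell/w_m; u)}{\prod_{\ell, m} \theta_1(z_\ell/w_m; u)}.
\end{equation*}
The nested contour prescription $|z_1| > |w_1| > |z_2| > \cdots > |w_n|$, combined with the ordering $i_1 \leq \cdots \leq i_n$, selects the correct branch of each $\kappa(z_\ell, w_m)$ matching the $(i, i')$-dependent annulus condition in \eqref{eq:kapform}: for $\ell \leq m$ one has $i_\ell \leq i_m$ and $|z_\ell| > |w_m|$ (giving the $(1,u^{-1})$ case), while for $\ell > m$ one has $i_\ell \geq i_m$ and $|z_\ell| < |w_m|$ (giving the $(u,1)$ case). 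This already establishes the formula for $\tilde{\varrho}(U)$, namely the claimed integrand multiplied by the $\theta_3$ factor.

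To descend from $\tilde{\varrho}$ to $\varrho$, I would use that shift-mixing averages $\mathfrak{S}(\vec{\lambda})$ against an independent vertical shift $c$ with law \eqref{eq:cdist}, so $\tilde{\varrho}(U) = \sum_{c \in \Z} \Prob(c)\, \varrho(U - (0, c))$. In the contour representation, the substitution $k_\ell \mapsto k_\ell - c$ multiplies the integrand by $\prod_\ell (z_\ell/w_\ell)^c$, and summing against $t^c u^{c^2/2}/\theta_3(t; u)$ reproduces exactly $\theta_3(t \prod z_\ell/\prod w_\ell; u)/\theta_3(t; u)$ by the series definition of $\theta_3$. Thus this theta ratio is the precise analytic signature of shift-mixing, and stripping it off from the shift-mixed integrand yields the formula for $\varrho(U)$. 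The main technical obstacle I anticipate is purely bookkeeping: fixing the global sign in Frobenius's identity consistently with the factor $\im$ appearing in \eqref{eq:kapprod}, verifying that the nested contours simultaneously meet the $(i_\ell, i_m)$-dependent constraints of Theorem~\ref{thm:main_thm} for all $\binom{n}{2}$ pairs at once, and checking that the normalizations of $\theta_1, \theta_3, \eta$ recalled in Appendix~\ref{sec:theta} combine to give precisely the constant $\eta(u)^{3n}$ stated in the proposition.
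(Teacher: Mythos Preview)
Your approach is correct but follows a genuinely different route from the paper's own derivation. The paper does \emph{not} start from Theorem~\ref{thm:main_thm} and invoke the Frobenius determinant identity as an input. Instead, it first obtains the representation~\eqref{eq:rhoUF} for $\varrho(U)$ (and its grand canonical analogue for $\tilde\varrho(U)$) by $\Gamma$-elimination, and then evaluates the multi-point fermionic correlators $\evuc{\psi(z_1)\psi^*(w_1)\cdots\psi(z_n)\psi^*(w_n)}$ and $\evut{\cdots}$ \emph{directly} via the boson--fermion correspondence~\eqref{eq:boson_fermion}, arriving at the product formulas~\eqref{eq:Phiexp} and~\eqref{eq:Phitexp}. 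Plugging these into~\eqref{eq:rhoUF} immediately yields both statements of Proposition~\ref{prop:ncont}, with no need to invert shift-mixing. In fact, as the paper points out in Remark~\ref{rem:frobdet}, the Frobenius identity emerges as a \emph{byproduct}: comparing the direct evaluation~\eqref{eq:Phitexp} with what Wick's lemma gives ($\det_{\ell,m}\kappa(z_\ell,w_m)$) proves Frobenius rather than uses it.

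Your route---which is essentially Borodin's original argument---is perfectly valid, and the ``stripping off'' step is justified once you note that the proven formula for $\tilde\varrho(U)\,\theta_3(t;u)$ is a Laurent series in $t$ whose coefficient of $t^c$ is $u^{c^2/2}$ times the claimed integral with an extra factor $\prod_\ell(z_\ell/w_\ell)^c$; matching against $\sum_c t^c u^{c^2/2}\varrho(U-(0,c))$ and setting $c=0$ gives $\varrho(U)$. What the paper's approach buys is a self-contained derivation that does not assume Frobenius, and a direct formula for the canonical correlator~\eqref{eq:Phiexp} without any inversion. What your approach buys is that it stays closer to the determinantal point-process picture and does not require the boson--fermion correspondence at the level of $2n$-point functions.
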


\section{Derivation via free fermions}
\label{sec:fermions}

This section is devoted to a proof of Theorem~\ref{thm:main_thm} and
Proposition~\ref{prop:ncont} via the machinery of free fermions. Again
we make use of the same conventions and notations as in~\cite{bbnv}---for 
convenience the relevant definitions and facts are recalled in
Appendix~\ref{sec:fermrem}.

Our starting point is the observation that the \emph{partition
  function}---i.e., the sum of the unnormalized weights
\eqref{eq:probdef}---of the periodic Schur process can be
represented as
\begin{equation}
  \label{eq:Zgam}
    Z = \tr \left( \Pi_0 u^H \Gamma_+(\rho_1^+) \Gamma_-(\rho_1^-) \cdots
    \Gamma_+(\rho_N^+) \Gamma_-(\rho_N^-) \right)
\end{equation}
where $\tr$ stands for the trace over the fermionic Fock space and
$\Pi_0$ denotes the orthogonal projector over the subspace of charge 0
(as we shall sum over ordinary \emph{uncharged} partitions). Similarly, for
$U = \{(i_1, k_1),\dots,(i_n, k_n)\} \subset \{1,\ldots,N\} \times
\Z'$, if the abscissas are ordered as $i_1 \leq \dots \leq i_n$, then
the correlation function $\varrho(U)$ as defined in \eqref{eq:rhoUdef}
is equal to $Z_U/Z$, where
\begin{equation}
  \label{eq:ZUdef}
  Z_U := \tr \left( \Pi_0 u^H \Gapl(\rho^+_1) \cdots \Gapl(\rho^+_{i_1}) \psi_{k_1} \psi^*_{k_1} \Gami(\rho^-_{i_1}) \cdots \Gapl(\rho^+_{i_n}) \psi_{k_n} \psi^*_{k_n} \Gami(\rho^-_{i_n}) \cdots \Gami(\rho^-_{N}) \right)
\end{equation}
is the sum of the unnormalized weights of sequences such that
$U \subset \mathfrak{S}(\vec{\lambda})$.

Let us denote by $\tilde{Z}$ and $\tilde{Z}_U$ the quantities obtained
by replacing respectively in \eqref{eq:Zgam} and \eqref{eq:ZUdef} the
projector $\Pi_0$ by the operator $t^C$ . Then, it is not difficult to
check that $\tilde{Z}_U/\tilde{Z}$ is precisely the shift-mixed
correlation function $\tilde{\rho}(U)$ as defined in
\eqref{eq:trhoUdef}.

\paragraph{Eliminating the $\Gamma$-operators.} We now rewrite the
quantities $Z,Z_U,\tilde{Z},\tilde{Z}_U$ following the strategy of
\cite{or}, which was already adapted to the periodic setting in
\cite{bcc} (in the case of $Z$). We call this method
\emph{$\Gamma$-elimination}: in a nutshell if we consider the trace of
a product of operators involving some $\Gamma_\pm$'s and other
operators which they quasi-commute with (such as
$u^H,\psi(z),\psi^*(w)$), then the quasi-commutation relations can be
exploited to remove the $\Gamma$'s from the product, up to a
multiplication by the corresponding scalar factors.

Let us illustrate this method on the easiest case of $Z$.  The basic
idea is to move each of the $\Gamma_+(\rho_i^+)$ to the right, using
the quasi-commutation relations~\eqref{eq:gamcomm} with the other
operators and the cyclic property of the trace (recall also that
$\Gamma$-operators commute with $C$ hence $\Pi_0$). For instance,
performing this operation on each $\Gamma_+(\rho_i^+)$ in
\eqref{eq:Zgam} until it goes back into the same place, we get
\begin{equation}
  Z  = \prod_{1 \leq i \leq j \leq N} H(\rho_i^+;\rho_j^-) \prod_{1 \leq j < i \leq N} H(u \rho_i^+;\rho_j^-) \times \tr \left( \Pi_0 u^H \Gamma_+(u \rho_1^+) \Gamma_-(\rho_1^-) \cdots
    \Gamma_+(u \rho_N^+) \Gamma_-(\rho_N^-) \right).
\end{equation}
(note that the specializations $\rho_i^+$ have all been multiplied by
$u$ in the trace on the right). By iterating this procedure $m$ times,
and noting that the process converges for $m \to \infty$ because
$\Gamma_+(u^m \rho) \to 1$ for $m \to \infty$, we may rederive
Borodin's expression for the partition function
\cite[Proposition~1.1]{B2007cyl}---see \cite[proof of
Theorem~12]{bcc} for more details.

For the correlation functions, it is useful to first rewrite
\begin{multline}
  \label{eq:ZUrew}
  Z_U = \left[ z_1^{k_1} w_1^{-k_1} \cdots z_n^{k_n} w_n^{-k_n}\right]\tr \left( \Pi_0 u^H \Gapl(\rho^+_1) \cdots \Gapl(\rho^+_{i_1}) \psi(z_1) \psi^*(w_1) \Gami(\rho^-_{i_1}) \cdots \right. \\ \left. \cdots \Gapl(\rho^+_{i_n}) \psi(z_n) \psi^*(w_n) \Gami(\rho^-_{i_n}) \cdots \Gami(\rho^-_{N}) \right)
\end{multline}
where $[ z_1^{k_1} w_1^{-k_1} \cdots z_n^{k_n} w_n^{-k_n}]$
denotes coefficient extraction in the multivariate Laurent series on
the right. We may now apply the same strategy to the trace in
\eqref{eq:ZUrew}, and we will pick extra factors of the form
$H(u^m \rho_i^+;z_\ell)/H(u^m \rho_i^+;w_\ell)$ from the
quasi-commutations between $\Gamma_+$'s with
$\psi(z_\ell) \psi^*(w_\ell)$'s. Note that the factors
$H(u^m \rho_i^+;\rho_j^-)$ coming from quasi-commutations between
$\Gamma_+$'s and $\Gamma_-$'s will eventually get cancelled when we
divide by $Z$ to get $\rho(U)$. Furthermore, after we have
``eliminated'' the $\Gamma_+$'s, we also need to get rid of the
$\Gamma_-$'s by moving them similarly but to the left. This will
produce factors of the form
$H(u^m \rho_i^-;w_\ell^{-1})/H(u^m \rho_i^-;z_\ell^{-1})$. We end up
with the expression
\begin{equation}
  \label{eq:rhoUF}
  \rho(U) =  \left[ z_1^{k_1} w_1^{-k_1} \cdots z_n^{k_n} w_n^{-k_n}\right]
  \prod_{\ell=1}^n \frac{F(i_\ell,z_\ell)}{F(i_\ell,w_\ell)} \cdot
  \evuc{\psi(z_1) \psi^*(w_1)  \cdots  \psi(z_n) \psi^*(w_n)}
\end{equation}
where $\evuc{O} := \tr (\Pi_0 u^H O)/\tr( \Pi_0 u^H)$ denotes the
``canonical'' expectation value of the operator $O$, and where $F$ is
as in Theorem~\ref{thm:main_thm}: the $F(i_\ell,z_\ell)$ in the
numerator (respectively the $F(i_\ell,w_\ell)$ in the denominator) contains
all the factors arising from the quasi-commutations of $\Gamma$'s with
$\psi(z_\ell)$ (respectively $\psi^*(w_\ell)$) in \eqref{eq:ZUrew}.

All this reasoning remains valid if we consider the shift-mixed
correlation function $\tilde{\rho}(U)$, which then admits the
expression obtained by replacing in \eqref{eq:rhoUF} the canonical
expectation value $\evuc{\cdot}$ by the grand canonical one
$\evut{\cdot} := \tr (t^C u^H \cdot)/\tr( t^C u^H)$.

\paragraph{Determinantal structure of $\tilde{\rho}(U)$.} At this
stage we may employ Wick's lemma in finite temperature
(Lemma~\ref{lem:wickgen}) in its ``determinantal'' form to conclude
that the shift-mixed correlation function can be rewritten as
\begin{equation}
  \label{eq:trhodet}
  \begin{split}
  \tilde{\rho}(U) &=  \left[ z_1^{k_1} w_1^{-k_1} \cdots z_n^{k_n} w_n^{-k_n}\right]
  \prod_{\ell=1}^n \frac{F(i_\ell,z_\ell)}{F(i_\ell,w_\ell)} \cdot
  \det_{1 \leq \ell,m \leq n} \evut{\mathcal{T}(\psi(z_{i_\ell}),\psi^*(w_{i_m}))} \\
  &= \det_{1 \leq \ell,m \leq n} K(i_\ell,k_\ell;i_m,k_m), \quad
  K(i,k;i';k') := \left[ z^k w^{-k'} \right] \frac{F(i,z)}{F(i',w)} \cdot
  \begin{cases}
    \evut{\psi(z)\psi^*(w)} & i \leq i'\\
    -\evut{\psi^*(w)\psi(z)} & i > i'
  \end{cases}
  .
\end{split}
\end{equation}
This shows that the shift-mixed process is indeed determinantal. We
may rewrite $K(i,k;i';k')$ in the form~\eqref{eq:Kint} as follows.
From the discussion of Section~\ref{sec:fermidirac} and particularly
\eqref{eq:fermidist}---or alternatively from \eqref{eq:anticomcor}---we 
see that
\begin{equation}
  \label{eq:propag}
  \evut{\psi_k \psi_k^*} = \frac{t u^k}{1+t u^k}=\frac{1}{1+t^{-1} u^{-k}},
  \qquad
  \evut{\psi_k^* \psi_k} = \frac{1}{1+t u^k},
  \qquad
  k \in \Z'
\end{equation}
while the expectation value of any other product of two fermionic
operators vanishes. Passing to the generating series we get
\begin{equation}
  \label{eq:propagzw}
  \evut{\psi(z) \psi^*(w)} = \sum_{m \in \Z'} \frac{(z/w)^m}{1+(t u^m)^{-1}}, \qquad
  \evut{\psi^*(w) \psi(z)} = \sum_{m \in \Z'} \frac{(z/w)^m}{1+t u^{m}}.
\end{equation}
Note that, at an analytic level, the first sum converges for
$|uz|<|w|<|z|$ and the second for $|uw|<|z|<|w|$, and in fact we have
\begin{equation}
  \label{eq:kferm}
  \kappa(z,w) =
  \begin{cases}
    \evut{\psi(z) \psi^*(w)} & \text{for $|z/w|\in (1,u^{-1})$,} \\
    -\evut{\psi^*(w) \psi(z)} & \text{for $|z/w| \in (u,1)$} \\
  \end{cases}
\end{equation}
(by the canonical anticommutation relations, it is a general fact that
fermionic propagators $\langle\psi(z) \psi^*(w) \rangle$ and
$-\langle\psi^*(w) \psi(z)\rangle$ should be two Laurent series
expansions of a same meromorphic function with a pole of residue $1$
at $z=w$).  We conclude the proof of Theorem~\ref{thm:main_thm} by
representing the coefficient extraction $[z^k w^{-k'}]$ in
\eqref{eq:trhodet} as a double contour integral. Note that, by our
analyticity assumptions, $F(i,\cdot)$ is analytic and nonzero in the
annulus $R^{-1}<|\cdot|<R$.

\paragraph{Theta form for fermionic expectations.} Interestingly, it
is possible to derive the ``theta'' form~\eqref{eq:kapprod} of
$\kappa(z,w)$ from the boson---fermion
correspondence~\eqref{eq:boson_fermion}. Assuming $|z|>|w|$ we may
write
\begin{equation}
  \label{eq:kapbf}
  \begin{split}
    \tr \left( t^C u^H \psi(z) \psi^*(w) \right)  &= \sqrt{\frac{w}{z}} \cdot
    \tr \left( \left(\frac{tz}{w}\right)^C u^H  \Gami(z) \Gatpl \left(-z^{-1} \right) \Gatmi(-w) \Gapl \left( w^{-1} \right)\right) \\
    &= \sqrt{\frac{w}{z}} \cdot \frac{(u;u)^2_{\infty}}{\theta_u(\frac{w}{z})} \cdot \tr \left( \left(\frac{tz}{w}\right)^C u^H \right)
  = \sqrt{\frac{w}{z}} \cdot \frac{(u;u)^2_{\infty}}{\theta_u(\frac{w}{z})} \cdot \frac{\theta_3 (\frac{tz}{w};u)}{(u;u)_{\infty}}
  \end{split}
\end{equation}
where we pass to the second line by $\Gamma$-elimination, and to the
final form by using \eqref{eq:pfdir} with $t \to tz/w$. Upon dividing
by the normalization
$\tr \left( t^C u^H \right) = \theta_{3}(t;u)/(u;u)_{\infty}$ we
arrive at~\eqref{eq:kapprod}.

Remarkably, the same trick allows to evaluate the canonical and grand
canonical expectation value of the product of any number of fermionic
generating series. We find
\begin{equation}
  \label{eq:Phiexp}
  \evuc{\psi(z_1) \psi^*(w_1)  \cdots  \psi(z_n) \psi^*(w_n)} =
  \sqrt{\frac{w_1\cdots w_n}{z_1 \cdots z_n}}
  \frac{(u;u)_\infty^{2n}  \prod_{1 \leq i < j \leq n} \theta_{u}(z_j/z_i, w_j/w_i)}  { \prod_{1 \leq i \leq j \leq n} 
    \theta_u (w_j/z_i)  \prod_{n \geq i > j \geq 1} \theta_u (z_i/w_j)}.
\end{equation}
and
\begin{equation}
  \label{eq:Phitexp}
  \evut{\psi(z_1) \psi^*(w_1)  \cdots  \psi(z_n) \psi^*(w_n)} = \theta_3 \left( t \frac{z_1 \cdots z_n}{w_1 \cdots w_n}; u\right) \frac{\evuc{\psi(z_1) \psi^*(w_1)  \cdots  \psi(z_n) \psi^*(w_n)}}{\theta_3(t;u)}.
\end{equation}
Note that we should have $u < |w_j/z_i| < 1$ (respectively
$1 < |w_j/z_i| < u^{-1}$) for $i \leq j$ (respectively for $i>j$) in
order for the expectation values to be well-defined.  By plugging
these expressions in \eqref{eq:rhoUF} and its shift-mixed counterpart,
and writing the coefficient extractions as contour integrals, we
obtain Proposition~\ref{prop:ncont}.

\begin{rem}
  \label{rem:frobdet}
  In fact, our computation yields a fermionic proof of the Frobenius
  elliptic determinant identity~\cite{fro}: on the one hand, using
  Wick's lemma, the correlator
  $\evut{\psi(z_1) \psi^*(w_1) \cdots \psi(z_n) \psi^*(w_n)}$ may be
  written as the determinant
  $\det_{1 \leq i,j \leq n} \kappa(z_i,w_j)$ whose entries can be put
  in the theta form~\eqref{eq:kapprod}; on the other hand we obtain a
  product of theta functions through \eqref{eq:Phitexp} and \eqref{eq:Phiexp}.
\end{rem}

\section{Edge behavior of the cylindric Plancherel measure}
\label{sec:app}

The purpose of this section is to establish Theorems~\ref{thm:limedge}
and~\ref{thm:limedgegum}. Recall that we want to study the asymptotic
distribution of $\lambda_1$ in the cylindric Plancherel measure, which
is the $\lambda$-marginal of the measure \eqref{eq:uPlanchdef}. By
\eqref{eq:expspec}, this measure is a periodic Schur process of period
$N=1$ with $\rho_1^\pm=\ex_\gamma$ the exponential specialization. We
start of course by applying Theorem~\ref{thm:main_thm}, which entails
that the (one-dimensional) shift-mixed point configuration
\begin{equation}
  \mathfrak{S}(\lambda)+c = \left\{ \lambda_j - j + \frac{1}{2} + c,\ j \geq 1 \right\} 
\end{equation}
is a determinantal point process with correlation kernel
\begin{align}
  K(a,b) &= \frac{1}{(2 \pi \im)^2} \oint_{|z|=u^{-1/4}} \frac{d z}{z^{a+1}} \oint_{|w|=u^{1/4}} \frac{d w}{w^{-b+1}} \cdot \frac{e^{L(z-z^{-1})}}{e^{L(w-w^{-1})}} \cdot \kappa(z, w) \label{eq:Kcontour} \\
         &= \sum_{\ell \in \Z'} \frac{J_{a+\ell}(2L) J_{b+\ell}(2L)}{1+t^{-1} u^{\ell}}
           \label{eq:KBessel}
\end{align}
where $a,b \in \Z'$, $L=\gamma/(1-u)$ (this is the $L_{u,\gamma}$ from
Theorem~\ref{thm:limedge}, but we drop the indices to lighten the
notations) and $J_n$ denotes the Bessel function of the first
kind. The choice of the integration radii is somewhat arbitrary but
will prove convenient for the forthcoming analysis. We pass to the
second form by using the definition \eqref{eq:kapform} of $\kappa$ and
the Laurent series expansion
$e^{L(z-z^{-1})} = \sum_{n \in \Z} J_n (2L) z^{n}$. Note that, for
$u \to 0^+$, the factor $1/(1+t^{-1} u^{\ell})$ becomes the indicator
function of $\Z'_+$ and $K(a, b)$ becomes the celebrated Bessel kernel
of Borodin--Okounkov--Olshanski~\cite{boo} and
Johansson~\cite{joh4}. As such and in analogy to Johansson's
``finite-temperature GUE kernel''~\cite{joh3}, we call the kernel
$K(a,b)$ the \emph{discrete finite-temperature Bessel kernel}.

\paragraph{Elementary properties of the discrete finite-temperature
  Bessel kernel.}

We note first that $K$ is symmetric and positive semi-definite. This
latter property is most evident from the Bessel representation: for
any complex numbers $z_1, \dots, z_n$ we have
\begin{equation}
  \sum_{i,j=1}^{n} z_i \overline{z_j} K(a_i, a_j) = \sum_{i,j=1}^{n} \sum_{\ell \in \Z'} z_i \overline{z_j} \frac{J_{a_i + \ell}(2L) J_{a_j + \ell}(2L)}{1+t^{-1} u^{\ell}} = \sum_{\ell \in \Z'} \frac{1}{1+t^{-1} u^{\ell}} \left| \sum_{i=1}^n z_i J_{a_i + \ell}(2S) \right|^2 \geq 0. 
\end{equation}
Second, for any $m \in \Z'$, $K$ is trace-class on
$\ell^2(\{ m, m+1, \dots\})$: here it is simpler to use the contour
integral representation which immediately implies, by crudely bounding
the integrand, that $K(a,b)=O(u^{-(a+b)/4})$ for $a,b \to +\infty$,
hence we have $\sum_{i \geq m} |K(i,i)|< \infty$.

It follows that the Fredholm determinant
$\det(I-K)_{\ell^2(\{ m, m+1, \dots\})}$ is well-defined, and is equal
to the probability that $\lambda_1+c<m$ in the shift-mixed cylindric
Plancherel measure. The parameter $t$ governing the
distribution~\eqref{eq:cdist} of $c$ does not play a significant role
in the asymptotic analysis and we will set it to $1$ from now on.

\paragraph{Length scales and asymptotics.}

Before entering into the precise asymptotic analysis of $K$, it is
useful to have a qualitative discussion of length scales. From the
expression $Z_{u,\gamma}=e^{\frac{\gamma^2}{1-u}}/(u;u)_\infty$ of the
partition function of the cylindric Plancherel
measure~\eqref{eq:uPlanchdef}, it is straightforward to check that
\begin{equation}
  \label{eq:CPMexp}
  \mathbb{E}|\lambda|=\frac{\gamma^2}{(1-u)^2}- u \frac{d}{du} \ln (u;u)_\infty, \qquad
  \mathrm{Var}|\lambda|=\frac{\gamma^2(1+u)}{(1-u)^3}- \left(u \frac{d}{du}\right)^2 \ln (u;u)_\infty.
\end{equation}
From the asymptotics $\ln (u;u)_\infty \sim - \frac{\pi^2}{6(1-u)}$
for $u \to 1^-$, we see that
\begin{equation}
  \label{eq:CPMconc}
  \mathbb{E}|\lambda| \sim \frac{\gamma^2+\frac{\pi^2}{6}}{(1-u)^2}=:\Lambda_{u,\gamma}, \qquad
  \mathrm{Var}|\lambda| \ll (\mathbb{E}|\lambda|)^2
\end{equation}
whenever $u \to 1^-$ or $\gamma \to \infty$. This shows that
$|\lambda|/\Lambda_{u,\gamma}$ tends to $1$ in probability. Note that
$\Lambda_{u,\gamma} \sim L^2$ for $\gamma \to \infty$, hence $L$ is
the natural length scale for the Young diagram of $\lambda$ and for
the point process $\mathfrak{S}(\lambda)$. For $u \to 1^-$ and fixed
$\gamma>0$, $L$ remains a good length scale but it is easier to work
with $r^{-1}$, where $u=e^{-r}$.

Borodin has shown the following ``bulk'' asymptotics for the one-point
function~\cite[Example~3.4]{B2007cyl}:
\begin{equation}
  \label{eq:Borbulk}
  \lim_{\substack{u=e^{-r},\ r \to 0^+\\\gamma \text{ fixed, }t=1}} K\left(\left\lfloor \frac{\tau}{r} \right\rfloor, \left\lfloor \frac{\tau}{r} \right\rfloor\right) = \frac{1}{2\pi} \int_{-\pi}^{\pi} \frac{d\phi}{1+e^{\tau-2\gamma\cos \phi}}=:\rho(\tau), \qquad \tau \in \R.
\end{equation}
In fact he obtained more generally the bulk limiting kernel---see
Proposition~\ref{prop:Borbulker}. For $\tau \to \infty$ we have
$\rho(\tau) \sim I_0(2\gamma) e^{-\tau}$ and we expect to find the
``edge'' (rightmost particle) at a position $\frac{\tau_e+O(1)}{r}$
with $\rho(\tau_e)=r$, i.e.\ $\tau_e=\ln \frac{I_0(2\gamma)}{r}$. This
provides some intuition regarding the spatial rescaling in
Theorem~\ref{thm:limedgegum}. For $\gamma \to \infty$, the limiting
density $\rho$ tends to that of the Logan--Shepp--Vershik--Kerov limit
shape, namely
\begin{equation}
  \label{eq:LSVK}
  \lim_{\gamma \to \infty} \rho(\gamma x) =
  \begin{cases}
    0 & \text{if $x \geq 2$,}\\
    \arccos(x) & \text{if $-2<x<2$,}\\
    1 & \text{if $x \leq -2$.}
  \end{cases}
\end{equation}
and we now expect to find the edge around the point
$2\frac{\gamma}{r} \sim 2L$, consistently with
Theorem~\ref{thm:limedge}. Of course, understanding fluctuations
requires more precise asymptotics, which we provide now.

To establish Theorem~\ref{thm:limedge}, we shall show that $K$
converges to the finite-temperature Airy kernel $M_\alpha$ in the edge
crossover regime, as stated in the following.

\begin{prop}
  \label{prop:Kconv}
  For $t=1$ and in the edge crossover regime
  \begin{equation}
    \label{eq:edgescal}
   \left\{
      \begin{array}[c]{l}
        L \to \infty \\ u \to 1^- \\
        L^{1/3}(1-u) \to \alpha > 0
      \end{array} \right.
    \qquad
    \left\{
      \begin{array}[c]{l}
        a=\lfloor 2L+xL^{1/3} \rfloor \\
        b=\lfloor 2L+yL^{1/3} \rfloor \\
        m=\lfloor 2L+sL^{1/3} \rfloor
     \end{array}
   \right. ,
  \end{equation}
  we have
  \begin{equation}
    \label{eq:Kconv}
    L^{1/3} K(a,b) \to M_{\alpha}(x,y)
  \end{equation}
  and
  \begin{equation}
    \label{eq:Ktrconv}
    \sum_{i \geq m} K(i,i) \to
    \int_{s}^\infty M_{\alpha}(s',s') d s'.
  \end{equation}
\end{prop}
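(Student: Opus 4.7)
The plan is to work from the Bessel series representation \eqref{eq:KBessel}. Rescaling $\ell = w L^{1/3}$, so that $w$ runs on a lattice of mesh $L^{-1/3}$, one expects a Riemann sum structure. From $u = 1 - \alpha L^{-1/3} + O(L^{-2/3})$ in our regime one has
\[
\frac{1}{1+u^{\ell}} = \frac{e^{\alpha w}}{1+e^{\alpha w}} + o(1) \qquad \text{(uniform in $w$ on compacts)},
\]
while Olver's uniform Bessel-to-Airy asymptotic at the turning point gives
\[
L^{1/3} J_{\lfloor 2L + X L^{1/3}\rfloor}(2L) = \Ai(X) + o(1) \qquad \text{(uniform in $X$ on compacts).}
\]
Applied with $X=x+w$ and $X=y+w$, these expansions combine to show that $L^{1/3}K(a,b)$ is a Riemann sum of mesh $L^{-1/3}$ for $M_\alpha(x,y)$.

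The genuine work is building an integrable dominator, which is the main obstacle. For $w \to +\infty$ the Bessel index lies above the argument $2L$, and one has the classical bound $|J_{\lfloor 2L+XL^{1/3}\rfloor}(2L)| \lesssim L^{-1/3} \exp(-c X^{3/2})$ for $X>0$; combined with the trivial bound $|\text{Fermi}|\leq 1$, this easily controls the right tail. The delicate side is $w \to -\infty$: the Bessel sits in the oscillatory bulk and only decays polynomially, and the exponentially small Fermi factor (of order $e^{\alpha w}$) is the only thing that tames the oscillations. Together the two estimates yield an $L$-uniform summable dominator, and dominated convergence gives \eqref{eq:Kconv}.

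For the trace statement \eqref{eq:Ktrconv}, the same pointwise analysis on the diagonal gives $L^{1/3}K(i,i) \to M_\alpha(s',s')$ under $i = \lfloor 2L + s'L^{1/3}\rfloor$, making $\sum_{i\geq m}K(i,i)$ a Riemann sum of mesh $L^{-1/3}$ for $\int_s^\infty M_\alpha(s',s')\,ds'$. A dominated-convergence argument in $s'$ requires a uniform bound $K(i,i) \lesssim L^{-1/3} D(s')$ with $D$ integrable on $[s,\infty)$. The continuum tail $M_\alpha(s',s') \lesssim e^{-\alpha s'}$ for $s'\to+\infty$ (which follows from the substitution $u=s'+v$ that localizes the Airy factor near $u=0$, where the Fermi weight is of order $e^{-\alpha s'}$) suggests the right form of $D$, and the matching pre-limit estimate is most cleanly obtained by returning to \eqref{eq:Kcontour} and deforming the $z$- and $w$-contours radially by a small amount of order $L^{-1/3}$, which converts the weights $z^{-(i+1)}w^{i-1}$ into an overall factor of order $\exp(-c(i-2L)L^{-1/3})$.

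As a methodological alternative more in the fermionic spirit of Section~\ref{sec:fermions}, one could do steepest descent directly on \eqref{eq:Kcontour}: the triple saddle at $z=w=1$ sends $z-z^{-1}$ to the Airy cubic via $z=1+L^{-1/3}\zeta$, while under the same rescaling the series form \eqref{eq:kapform} of $\kappa(z,w)$ becomes $L^{1/3}\int \frac{e^{v(\zeta-\omega)}}{1+e^{\alpha v}}\,dv$, and the resulting double integral is exactly the contour representation of $M_\alpha(x,y)$ obtained by substituting the Airy integral into \eqref{eq:Falphadef}. This route unifies the two statements of the proposition but requires essentially the same tail control repackaged.
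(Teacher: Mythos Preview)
Your proposal is correct and mirrors the paper's own treatment closely: the paper gives precisely the two arguments you outline, first via the Bessel series \eqref{eq:KBessel} using Nicholson's approximation (your ``Olver'') with the right tail controlled by the exponential Bessel bound and the left tail by the Fermi factor, and second via steepest descent on the contour integral \eqref{eq:Kcontour} around the monkey saddle $z=w=1$. The only cosmetic difference is that where you package the tail control as a single integrable dominator for dominated convergence, the paper instead splits the Bessel sum at $\pm T L^{1/3}$ and sends $T\to\infty$ after $L\to\infty$; for the trace \eqref{eq:Ktrconv} the paper simply says ``mutatis mutandis'' in the Bessel proof and handles it by inserting a $\frac{1}{z-w}$ factor in the contour proof, which is exactly your suggested route.
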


Note that the right-hand side of \eqref{eq:Ktrconv} is finite
by~\cite[Proposition~1.1]{joh3}.
For good measure, we will give below \emph{two} proofs of
Proposition~\ref{prop:Kconv}. The first proof uses the Bessel
representation, and is an adaptation of the zero-temperature proofs
from~\cite{boo, joh4}---see Romik's book~\cite{rom} for a pedagogical
exposition. The second proof is based on the contour integral
representation. Let us now give the analogous statement, implying
Theorem~\ref{thm:limedgegum}, for the high temperature regime.

\begin{prop}
  \label{prop:Kconvgum}
  For $t=1$ and in the edge high temperature regime
  \begin{equation}
    \label{eq:edgescalgum}
   \left\{
     \begin{array}[c]{l}
       u = e^{-r} \\
       r \to 0^+\\
       L = o(r^{-3})
      \end{array} \right.
    \qquad
    \left\{
      \begin{array}[c]{l}
        a=\lfloor M +x r^{-1} \rfloor \\
        b=\lfloor M +y r^{-1} \rfloor \\
        m=\lfloor M +s r^{-1} \rfloor
     \end{array}
   \right. , \qquad M:=r^{-1} \ln \frac{I_0(2Lr)}{r},
  \end{equation}
  we have
  \begin{equation}
    \label{eq:Kconvgum}
    r^{-1} K(a,b) \to
    \begin{cases}
      e^{-x} & \text{if $x=y$,}\\
      0 & \text{otherwise,}
    \end{cases}
  \end{equation}
  and
  \begin{equation}
    \label{eq:Ktrconvgum}
    \sum_{i \geq m} K(i,i) \to e^{-s}.
  \end{equation}
\end{prop}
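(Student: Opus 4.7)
The plan is to work from the Bessel series representation~\eqref{eq:KBessel} of $K$ rather than the contour integral~\eqref{eq:Kcontour}, since in the high-temperature regime the Fermi factor becomes almost purely exponential on the effective support of the Bessel product, and the resulting sum admits a clean closed form. Specializing to $t=1$, $u=e^{-r}$, and shifting the summation index by $n=a+\ell$, one writes
\[
K(a,b) = \sum_{n} J_n(2L)\,J_{n+b-a}(2L)\,f_r(n-a), \qquad f_r(k) := \frac{1}{1+e^{-rk}}.
\]
In regime~\eqref{eq:edgescalgum} one has $M \to \infty$ while $J_n(2L)$ is concentrated on $|n| \lesssim 2L + O(L^{1/3})$ with super-exponential tails outside, so $r(n-a)$ is uniformly very negative on the effective support, and $f_r(n-a) = e^{r(n-a)} + O(e^{2r(n-a)})$.

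I would then invoke the summation identity
\[
\sum_{n} J_n(X)\,J_{n+k}(X)\,t^n \;=\; t^{-k/2}\,I_k\bigl(X(\sqrt{t}-1/\sqrt{t})\bigr),
\]
which follows in one line from the generating function $\sum_n J_n(X)\,z^n = e^{(X/2)(z-1/z)}$ by writing the product as a contour integral and substituting $z \mapsto \sqrt{t}\,z$. Applied with $X=2L$, $k=b-a$ and $t=e^r$, this yields the leading approximation $K(a,b) \approx e^{-r(a+b)/2}\,I_{b-a}(4L\sinh(r/2))$. When $x=y$ we have $a=b$ for all small $r$, and $4L\sinh(r/2)=2Lr(1+O(r^2))$; the definition $M = r^{-1}\ln(I_0(2Lr)/r)$ then produces the exact cancellation $K(a,a) \sim r\,e^{-x}$, so $r^{-1}K(a,b) \to e^{-x}$. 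When $x \neq y$ the order $|b-a| \sim |y-x|/r$ tends to infinity, and the Debye uniform asymptotics for modified Bessel functions give $I_{b-a}(2Lr)/I_0(2Lr) \to 0$ precisely because $|b-a|^2/(Lr) \sim (y-x)^2/(Lr^3) \to \infty$, which is the hypothesis $Lr^3 \to 0$ in disguise. The trace statement~\eqref{eq:Ktrconvgum} then follows by summing the diagonal approximation: $\sum_{i \geq m} K(i,i) \approx I_0(2Lr)\,e^{-rm}/(1-e^{-r})$, which collapses to $e^{-s}$ via the same cancellation and $1-e^{-r} \sim r$.

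The main obstacle will be the rigorous control of the cumulative error from $f_r(k) - e^{rk} = O(e^{2rk})$: summed across $n$, this produces a correction to $K(a,b)$ of order $e^{-2ra}\,I_0(4L\sinh(r))$, which after division by the expected leading behaviour $r\,e^{-x}$ gives a relative error of order $r \cdot I_0(4Lr)/I_0(2Lr)^2$. Using $I_0(z) \sim e^z/\sqrt{2\pi z}$ as $z \to \infty$, one checks that this ratio is at worst of order $\sqrt{Lr^3}$ (attained in the subcase $Lr \to \infty$), so smallness demands exactly the hypothesis $Lr^3 \to 0$; when $Lr$ remains bounded the ratio is $O(r)$ for free. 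A secondary issue is the positive-$n$ tail of the Bessel sum where the approximation $f_r \approx e^{r\cdot}$ fails; there $J_n(2L)$ decays faster than any exponential in $n$ (via $J_n(2L) \leq L^n/n!$ for $n > 0$), so a crude bound easily absorbs the loss. These technical estimates, together with a parallel dominated-convergence argument to pass from pointwise convergence of the kernel to convergence of the Fredholm determinant giving the Gumbel CDF $e^{-e^{-s}}$, are the content of what should be deferred to Appendix~\ref{sec:moreasymp}.
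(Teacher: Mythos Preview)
Your approach is correct and reaches the same leading term as the paper, but by a genuinely different route. The paper works from the contour integral representation~\eqref{eq:Kcontour}: after the change of variables~\eqref{eq:zwchvar} it deforms the $\zeta$-contour across the pole of $\kappa$ at $\zeta=r$, picking up the residue
\[
T_1 \;=\; \frac{I_{b-a}(G)}{r\,e^{(a+b)r/2}}, \qquad G:=4L\sinh\tfrac{r}{2},
\]
and then bounds the remaining integral $T_2$ on the shifted contour $\Re\zeta=3r/2$ directly, obtaining $T_2=O((r\sqrt{G})^{1/2})$. Your generating-function identity produces \emph{exactly} the same $T_1$ (times $r$), and your error term from $f_r(k)-e^{rk}=O(e^{2rk})$ is precisely the residue one would collect at the next pole $\zeta=2r$; in fact your Taylor expansion $f_r(k)=\sum_{j\geq 1}(-1)^{j-1}e^{jrk}$ is term-by-term dual to summing the residues of $\kappa$ at $\zeta=jr$. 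The $x\neq y$ analysis is also essentially the same in both: the paper's saddle-point bound $I_D(G)=O(G^{-1/2}e^{-Gf(D/G)})$ with $f(x)=x\arsinh x-\sqrt{1+x^2}$ is your Debye estimate, and the key inequality $e^{-Gf(D/G)}\leq e^{G-D^2/(2G)}$ isolates the same factor $(y-x)^2/(Lr^3)\to\infty$.

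What each buys: your route is more elementary (no contour deformation, no Poisson summation for $\kappa$) and makes the link to the closed form $I_{b-a}(G)$ transparent. The paper's route packages the entire remainder into a single integral $T_2$ whose bound is uniform in one stroke; by contrast your ``secondary issue''---the region $n\geq a$ where $f_r\not\approx e^{r(n-a)}$---requires a separate tail estimate. That estimate is genuinely needed (the bound $|J_n(2L)|\leq L^n/n!$ alone is not quite enough when $Lr\to\infty$, since then $a\sim 2L$ sits near the Airy edge; you will want a Nicholson-type bound as in~\eqref{eq:nic2} to show the contribution from $n\geq a$ is $o(re^{-x})$), but it is manageable with the tools already in the paper.
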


The proposition says essentially that, near the edge in the high
temperature regime, the shift-mixed determinantal process degenerates
to a Poisson point process with exponentially decreasing intensity.
See also the discussion in~\cite{joh3}. The proof is given in
Appendix~\ref{sec:moreasymp} and is based a slightly nonstandard
analysis of the contour integral representation for $K$, which may
have an interest on its own.

\paragraph{Proof that Proposition~\ref{prop:Kconv} implies
  Theorem~\ref{thm:limedge}.}

Observe that, to establish the convergence in distribution
\eqref{eq:limedge}, we may replace $\lambda_1$ by the maximum of
shift-mixed point configuration at $t=1$, $\lambda_1+c-1/2$. Indeed,
the difference between the two rescaled random variables is
$(c-1/2)/L^{1/3}$ which, by Lemma~\ref{lem:Cestim}, converges to zero
in probability for $t=1$. (And taking $t \neq 1$ fixed simply amounts
to a deterministic shift.)

Since the shift-mixed point process is determinantal, we are left with
the task of proving the convergence of Fredholm determinants
\begin{equation}
  \label{eq:detconv}
  \det(I-K)_{\ell^2(\{ m, m+1, \dots\})} \to \det(I-M_\alpha)_{L^2(s,\infty)} = F_\alpha(s)
\end{equation}
in the scaling regime~\eqref{eq:edgescal}. This is done by standard
arguments, which we recall for convenience. Set
$\tilde{K}(x,y) := L^{1/3} K(a,b)$ with $a,b$ as in
\eqref{eq:edgescal}: by \eqref{eq:Kconv}, $\tilde{K}(x,y)$ converges
pointwise to $M_\alpha(x,y)$, while we have
\begin{equation}
  \det(I-K)_{\ell^2(\{ m, m+1, \dots\})} = \sum_{n = 0}^\infty \frac{(-1)^n}{n!}
  \int_{(s',\infty)^n} \det_{1 \leq i,j \leq n} \tilde{K}(x_i,x_j) d x_1 \cdots d x_n .
\end{equation}
with $s'=(m-2L)/L^{1/3} \to s$. The integrand converges pointwise to
$\det_{1 \leq i,j \leq n} M_\alpha(x_i,x_j)$, and Hadamard's
inequality states that its modulus is bounded by
$\prod_{i=1}^\infty \tilde{K}(x_i,x_i)$ (recall that $K$ is positive
semi-definite), so we may conclude using dominated convergence and the
convergence of traces~\eqref{eq:Ktrconv} that
\begin{equation}
    \det(I-K)_{\ell^2(\{ m, m+1, \dots\})} \to \sum_{n = 0}^\infty \frac{(-1)^n}{n!}
  \int_{(s,\infty)^n} \det_{1 \leq i,j \leq n} M_\alpha(x_i,x_j) d x_1 \cdots d x_n  = \det(I-M_\alpha)_{L^2(s,\infty)}
\end{equation}
as desired. \qed

\begin{rem}
  An extension of this argument shows that the joint law of the $k$
  first parts of $\lambda$ converges to that of the $k$ righmost
  particles in the finite-temperature Airy process. See~\cite[Section
  4.2]{boo} for the zero-temperature case.
\end{rem}

\paragraph{Proof that Proposition~\ref{prop:Kconvgum} implies
  Theorem~\ref{thm:limedgegum}.}

The argument of the previous proof applies mutatis-mutandis. Moreover, we
are now able to explicitly evaluate the limiting Fredholm determinant:
\begin{equation}
  \det(I-K)_{\ell^2(\{ m, m+1, \dots\})} \to \sum_{n = 0}^\infty \frac{(-1)^n}{n!} e^{-ns} = e^{-e^{-s}}
\end{equation}
since
$\det_{1 \leq i,j \leq n} \tilde{K}(x_i,x_j) \to
e^{-(x_1+\cdots+x_n)}$ almost everywhere. Note that, in the
regime~\eqref{eq:edgescalgum}, we have $2Lr=2\gamma+\gamma r+o(1)$ so
the deterministic leading-order term $M$ is essentially the same as
that in Theorem~\ref{thm:limedgegum}.  \qed

\paragraph{Proof of Proposition~\ref{prop:Kconv} via the Bessel
  representation.}

  We shall use Nicholson's approximation (found in~\cite[Section 8.43]{wat} but see~\cite[Lemma 4.4]{boo} for a recent elementary proof), as stated in~\cite[Theorem 2.27]{rom}: 
  \begin{itemize}
    \item for $x \in \R$, uniformly on compact sets,
  \begin{equation}
    \label{eq:nic1}
    L^{1/3} J_{2L + x L^{1/3}}(2L) \to \Ai(x),\qquad L \to \infty;
  \end{equation}
    \item there exist constants $c_1, c_2, L_0$ such that for $L > L_0$ and $x > 0$,
    \begin{equation}
    \label{eq:nic2}
    L^{1/3} |J_{2L + x L^{1/3}}(2L)| < c_1 \exp(-c_2 x).
    \end{equation}
  \end{itemize}
  In the Bessel representation~\eqref{eq:KBessel}, we split the sum
  into three parts
  \begin{equation}
    L^{1/3} \sum_{\ell \in \Z'} = L^{1/3} \left( \sum_{\ell=-\infty}^{-T L^{1/3}-1} + \sum_{\ell=-T L^{1/3}}^{T L^{1/3}} + \sum_{\ell=T L^{1/3}+1}^{\infty} \right) =: \Sigma_1 + \Sigma_2 + \Sigma_3.
  \end{equation}
  for some large enough $T$ yet to be chosen.

  The main asymptotic contribution comes from $\Sigma_2$ which, using Nicholson's approximation~\eqref{eq:nic1}, becomes a Riemann sum for the finite-temperature Airy integral:
  \begin{equation} \label{eq:sig_1}
    \begin{split}
    \Sigma_2 &\approx L^{-1/3} \sum_{\ell=-T L^{1/3}}^{T L^{1/3}} \left(L^{1/3} J_{2L + x L^{1/3} + \ell} (2 L) \right) \left( L^{1/3} J_{2L + y L^{1/3} + \ell} (2 L)\right) \frac{e^{\alpha L^{-1/3} \ell} } {1 + e^{\alpha L^{-1/3} \ell}} \\
    &\approx L^{-1/3} \sum_{\ell=-T L^{1/3}}^{T L^{1/3}}  \Ai(x+\ell L^{-1/3}) \Ai(y+\ell L^{-1/3}) \frac{e^{\alpha L^{-1/3} \ell} } {1+e^{\alpha L^{-1/3} \ell} } \\
    &\approx \int_{-T}^T \Ai(x+v) \Ai(y+v) \frac{e^{\alpha v}}{1+e^{\alpha v}} d v 
    \end{split}
  \end{equation}
and the latter can be made arbitrarily close to the result by choosing $T$ large enough. It remains to show the contributions $|\Sigma_1|, |\Sigma_3| \to 0$. We begin with $\Sigma_1$. By using the same approximation and recognizing a similar Riemann sum, we have
  \begin{equation}
    \begin{split}
    |\Sigma_1| &\approx L^{-1/3} \left| \sum_{\ell=-\infty}^{-T L^{1/3}-1} \left(L^{1/3} J_{2L + x L^{1/3} + \ell} (2 L) \right) \left( L^{1/3} J_{2L + y L^{1/3} + \ell} (2 L)\right) \frac{e^{\alpha L^{-1/3} \ell} } {1 + e^{\alpha L^{-1/3} \ell}}\right| \\
    &\approx L^{-1/3} \left| \sum_{\ell=-\infty}^{-T L^{1/3}-1}  \Ai(x+\ell L^{-1/3}) \Ai(y+\ell L^{-1/3}) \frac{e^{\alpha L^{-1/3} \ell} } {1+e^{\alpha L^{-1/3} \ell} } \right| \\
    &\approx \left| \int_{-\infty}^{-T} \Ai(x+v) \Ai(y+v) \frac{e^{\alpha v}}{1+e^{\alpha v}} d v \right|
    \end{split}
  \end{equation}
  and the latter integral can be made arbitrarily small for appropriately large $T$. For instance, as the Airy function $\Ai$ is bounded, the Airy integral above is bounded by some positive constant times the integral $\int_{-\infty}^{-T} \frac{e^{\alpha v}}{1+e^{\alpha v}} d v = \log(1+\exp(-\alpha T))/\alpha$ which is arbitrarily small if one chooses $T$ large enough. 

  We finally come to $\Sigma_3$ and for a change we shall use the asymptotics in~\eqref{eq:nic2}:
  \begin{equation}
    \begin{split}
    |\Sigma_3| & \approx L^{-1/3} \left| \sum_{\ell = T L^{1/3}+1}^{\infty} \left(L^{1/3} J_{2L + x L^{1/3} + \ell} (2 L) \right) \left( L^{1/3} J_{2L + y L^{1/3} + \ell} (2 L)\right) \frac{e^{\alpha L^{-1/3} \ell} } {1 + e^{\alpha L^{-1/3} \ell}} \right|  \\
    & < L^{-1/3} \left| \sum_{\ell = T L^{1/3}+1}^{\infty} \left(L^{1/3} J_{2L + x L^{1/3} + \ell} (2 L) \right) \left( L^{1/3} J_{2L + y L^{1/3} + \ell} (2 L)\right) \right| \\
    & <  c_1^2 L^{-1/3} \sum_{m = 1}^{\infty} e^{-c_2 ( x + y + 2 T + 2 m L^{-1/3} )} = c_1^{2} e^{-c_2 (x+y+2T)} L^{-1/3} \sum_{m=1}^\infty{q^m} 
    \end{split}
  \end{equation}
  where $q = \exp(-2 c_2 L^{-1/3})$; for the first inequality we have used the crude bound $e^v/(1+e^v) < 1$; for the second we applied~\eqref{eq:nic2} (note we need at least $T \geq \max(-x, -y)$ for this); and for the last equality we performed the change of variables $\ell-T L^{1/3} = m$. The infinite series evaluates to $q/(1-q) \approx L^{1/3}/(2 c_2)$ and hence $|\Sigma_3| < c_3 \exp(-(x+y+2T))$ for an appropriate positive constant $c_3$, and this latter bound can be made arbitrarily small by choosing an appropriately large $T$ . This concludes the proof of \eqref{eq:Kconv}, and the proof of \eqref{eq:Ktrconv} is similar mutatis-mutandis. \qed

\begin{rem}
  The above argument adapts easily to the case $\alpha = \infty$ or more precisely to the case $\alpha L^{-1/3} \to \infty$ as $L \to \infty$. In equation~\eqref{eq:sig_1}, the term $\frac{e^{\alpha L^{-1/3} \ell} } {1 + e^{\alpha L^{-1/3} \ell}}$ becomes the indicator of $\{\ell > 0\}$ and by dominated convergence $\Sigma_2$ tends to the usual (here truncated) Airy kernel $(x, y) \mapsto \int_0^T \Ai(x+v) \Ai(y+v) d v$ while of course we still have $\Sigma_1, \Sigma_3 \to 0$. See also Remark~\ref{rem:bulk} for the case of the bulk limit.
\end{rem}

\paragraph{Proof of Proposition~\ref{prop:Kconv} via the contour integral
representation.}

We assume here that $\alpha<\infty$, see Remark~\ref{rem:alphainfty}
below for the case $\alpha=\infty$. We start with a useful lemma
regarding the behavior of the propagator $\kappa(z,w)$.

\begin{lem}
  \label{lem:kapbound}
  Let $u= e^{-r}$ and $z/w=e^{r/2+\im \theta}$ with $\theta \in [-\pi,\pi]$.
  Then we have
  \begin{equation}
    \kappa(z,w) = \frac{\pi}{r \cosh \frac{\pi \theta}{r}} + O(e^{-\pi^2/r}/r), \qquad r \to 0^+
  \end{equation}
  where the big $O$ is uniform over $\theta$.
\end{lem}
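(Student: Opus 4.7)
The natural strategy is to apply Poisson summation to an explicit Fourier series for $\kappa(z,w)$ coming from \eqref{eq:kapform}.

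First, since $|z/w| = e^{r/2} \in (1,u^{-1})$, the first case of \eqref{eq:kapform} applies, and at $t=1$ the identity $1/(1+u^{-m}) = 1/(1+e^{mr})$ combined with $(z/w)^m = e^{mr/2} e^{im\theta}$ collapses each summand to $e^{im\theta}/(2\cosh(mr/2))$, giving
\begin{equation*}
\kappa(z,w) \;=\; \frac{1}{2} \sum_{m \in \Z'} \frac{e^{im\theta}}{\cosh(mr/2)}.
\end{equation*}

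Next, I would apply the Poisson summation formula over the shifted lattice $\Z' = \Z + \tfrac12$, which produces an alternating sign $(-1)^k$. Combined with the classical evaluation $\int_{\R} \frac{e^{i\xi x}}{\cosh(ax)}\, dx = (\pi/a)/\cosh(\pi \xi/(2a))$ applied with $a = r/2$ and $\xi = \theta - 2\pi k$, this yields the exact identity
\begin{equation*}
\kappa(z,w) \;=\; \sum_{k \in \Z} (-1)^k \, \frac{\pi/r}{\cosh\bigl((\pi\theta - 2\pi^2 k)/r\bigr)}.
\end{equation*}

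The $k=0$ term is precisely the claimed leading behavior $\pi/(r \cosh(\pi\theta/r))$. To control the remaining terms uniformly for $\theta \in [-\pi,\pi]$, I would use the simple bound $|\pi\theta - 2\pi^2 k| \geq (2|k|-1)\pi^2$ for $|k|\geq 1$, whence $\cosh((\pi\theta - 2\pi^2 k)/r) \geq \tfrac12 e^{(2|k|-1)\pi^2/r}$; summing the resulting geometric series controls the entire tail by $O(e^{-\pi^2/r}/r)$ uniformly in $\theta$. The only non-routine point is to verify the hypotheses of Poisson summation and the uniformity of the error in $\theta$; both follow easily from the explicit exponential decay of $f$ and $\hat f$, so I do not anticipate a real obstacle. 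An alternative, equivalent route via a modular transformation applied to the theta-function form \eqref{eq:kapprod} would yield the same answer, but Poisson summation on the sum representation seems the most direct path.
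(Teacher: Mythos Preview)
Your proposal is correct and follows essentially the same route as the paper: rewrite $\kappa(z,w)$ at $t=1$ as $\sum_{m\in\Z'} e^{\im\theta m}/(2\cosh(rm/2))$, apply Poisson summation over $\Z'$ using the known Fourier transform of $\operatorname{sech}$, and isolate the $k=0$ term. Your explicit tail bound and your remark about the alternative modular-transformation route both match the paper's presentation and its subsequent remark.
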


\begin{proof}
  This immediately follows from the Poisson summation formula
  \begin{equation}
    \kappa(z,w) = \sum_{m \in \Z'} \frac{e^{\im \theta m}}{2\cosh \frac{r m}{2}}
    = \sum_{k \in \Z} (-1)^k \int_{-\infty}^\infty  \frac{e^{\im \theta x- 2 \im \pi k x}}{2\cosh \frac{r x}{2}} d x = \sum_{k \in \Z} (-1)^k \frac{\pi}{r \cosh \frac{\pi (\theta-2\pi k)}{r}}.
  \end{equation}
\end{proof}

\begin{rem}
  Lemma~\ref{lem:kapbound} is very similar
  to~\cite[Proposition~3.2]{B2007cyl}, which was proved using the
  elliptic form~\eqref{eq:kapprod} and the imaginary Jacobi transform,
  which is another consequence of the Poisson summation formula. It
  seems that our proof is much simpler.
\end{rem}

We now rewrite the contour integral representation~\eqref{eq:Kcontour}
in the form
  \begin{equation}
    \label{eq:Kplanchbis} L^{1/3} K(a,b) = \frac{1}{(2\im\pi)^2}
\oiint e^{L (S(z) - S(w))} \kappa(z,w) \frac{L^{1/3}\, d z\,
d w}{z^{a'+1} w^{-b'+1}}
\end{equation} with $S(z):=z-z^{-1}-2 \ln z$, $a':=a-2L$,
$b':=b-2L$, and $z$ and $w$ are integrated over the circles of
respective radii $e^{r/4}$ and $e^{-r/4}$, where we set again
$u=e^{-r}$ with $r \to 0^+$. Note that, in the scaling regime
\eqref{eq:edgescal}, we have $r L^{1/3} \to \alpha$, $r a' \to \alpha
x$ and $r b' \to \alpha y$.

  Let us first record some useful properties of the ``action'' $S$:
writing $z=e^{r/4+\im \varphi}$ we have
  \begin{equation}
    \label{eq:Sbound} \Re S(z) = 2 \sinh \left(\frac{r}{4}\right) \cos
\varphi - \frac{r}{2}
  \end{equation} which is maximal when $\varphi=0$, i.e.\ when $z$ is
on the positive real axis. By the relation $S(w)=-S(w^{-1})$, $-\Re
S(w)$ is also maximal on the positive real axis, and since $S(e^{r/4})
\sim r^3/196$, the exponential factor $e^{L (S(z) - S(w))}$ remains
uniformly bounded.

  Fixing an $\epsilon \in (0,1/4)$, we split the double
integral~\eqref{eq:Kplanchbis} in two: let $I$ be the contribution
from the $z$ and $w$ whose arguments are both smaller than
$r^{1-\epsilon}$ in absolute value, and let $I^c$ be the complementary
contribution. We shall show that $I \to M_\alpha(x,y)$ while $I^c \to
0$: 
  \begin{itemize} 
  \item \emph{$I^c$ tends to $0$:} we bound the integrand as
follows. Since either $z$ or $w$ has an argument larger than
$r^{1-\epsilon}$ in absolute value, we have by \eqref{eq:Sbound}
  \begin{equation} \Re S(z)-\Re S(w) - 2 S(e^{r/4}) \leq 2 \sinh
\left(\frac{r}{4}\right) \left(\cos r^{1-\epsilon} -1\right) \leq
-\frac{1}{4} \sinh \left(\frac{r}{4}\right) r^{2-2\epsilon}
  \end{equation} from which we deduce that $|e^{L (S(z) - S(w))}| \leq
C e^{-\alpha^3 r^{-2\epsilon}/8}$ for some $C>0$. By
Lemma~\ref{lem:kapbound}, we have $\kappa(z,w)=O(r^{-1})$ uniformly,
while $|z^{a'} w^{-b'}| = e^{r(a'+b')/4}$ remains bounded away from
$0$, so the integrand tends uniformly to $0$ as wanted.

\item \emph{$I$ converges:} we perform the change of
  variables
  \begin{equation} \label{eq:Ichvar}
    z = e^{r \zeta/\alpha}, \qquad w = e^{r \omega/\alpha}
  \end{equation} where $\Re(\zeta)=-\Re(\omega)=\alpha/4$ while
  $\Im(\zeta),\Im(\omega)$ are both smaller than $\alpha r^{-\epsilon}$
  in absolute value. By a Taylor expansion of $S(z)$ around the ``monkey saddle'' $z=1$, we find that
  \begin{equation}
    e^{L(S(z)-S(w))} = e^{\frac{\zeta^3}{3} - \frac{\omega^3}{3} + O(r^{1-4\epsilon})}
  \end{equation}
  where the big $O$ is uniform. By Lemma~\ref{lem:kapbound}, we have
  \begin{equation}
    \label{eq:kapequiv}
    \kappa(z,w) \sim \frac{\pi}{r \cosh \frac{\pi
        \Im(\zeta-\omega)}{\alpha}} = \frac{\pi}{r \sin \frac{\pi
        (\zeta-\omega)}{\alpha}}
  \end{equation}
  with a uniform exponentially small error term. Finally,
  $z^{a'} w^{-b'} \to e^{x \zeta-y \omega}$, which allows to conclude
  by dominated convergence that
  \begin{equation}
    \label{eq:Iconv}
    I \to \frac{1}{(2\im \pi)^2} \int_{\alpha/4+\im \R}
    d \zeta \int_{-\alpha/4+\im \R} d \omega \exp \left(
      \frac{\zeta^3}{3} - \frac{\omega^3}{3} - x \zeta + y \omega
    \right) \frac{\pi}{\alpha \sin \frac{\pi (\zeta-\omega)}{\alpha}}
  \end{equation}
  (note that the $L^{1/3}$ of \eqref{eq:Kplanchbis} and the $1/r$ of
  \eqref{eq:kapequiv} are absorbed by the Jacobian of the change of
  variables $z\to \zeta$, $w \to \omega$).
\end{itemize}
It is straightforward to identify the right-hand side of
\eqref{eq:Iconv} as $M_\alpha(x,y)$, using the integral
representations
\begin{equation} \frac{\pi}{\alpha \sin \frac{\pi
      (\zeta-\omega)}{\alpha}} = \int_{-\infty}^\infty
  \frac{e^{(\alpha+\omega-\zeta) v} d v}{1+e^{\alpha v}}, \qquad
  \Ai(x+v) = \int_{\alpha/4+\im \R} e^{\frac{\zeta^3}{3} - (x+v)
    \zeta} \frac{d \zeta}{2 \im \pi}
\end{equation}
and similarly for $\Ai(y+v)$ which is given by the integral over
$\omega$. This concludes the proof of~\eqref{eq:Kconv}.

The proof of the convergence of traces~\eqref{eq:Ktrconv} is entirely
similar: in~\eqref{eq:Kplanchbis} the rightmost fraction should simply
be replaced by $\frac{dz dw}{(z-w)z^m w^{-m+1}}$ (the factor
$\frac{1}{z-w}$ is of order $r^{-1} \sim L^{1/3}$, and produces after
the change of variables a factor $\frac{1}{\zeta-\omega}$ which we
``decouple'' by the usual integral representation
$\frac{1}{\zeta-\omega}=\int_0^\infty e^{(\omega-\zeta)v'}d v'$).

\qed

\begin{rem}
  \label{rem:alphainfty}
  In the case $\alpha=\infty$ the above proof may be adapted
  straightforwardly by integrating $z$ and $w$ over the circles of
  radii $e^{\pm L^{-1/3}}$ instead of $e^{\pm r/4}$. The change of
  variable~\eqref{eq:Ichvar} is simply replaced by
  $z=e^{L^{-1/3} \zeta}$ and $w=e^{L^{1/3} \omega}$ with
  $\Re(\zeta)=-\Re(\omega)=1$ and, using the Poisson summation formula
  as for Lemma~\ref{lem:kapbound}, we see that $\kappa(z,w)$ may now
  be approximated by $\frac{L^{1/3}}{\zeta-\omega}$ over the
  integration contour, instead of~\eqref{eq:kapequiv}. We recover the
  usual zero-temperature Airy kernel as the limit.
\end{rem}

\begin{rem}
  \label{rem:universality}
  As observed in several occasions---see e.g.\ \cite{Oko02,or}---the
  asymptotic analysis of contour integral representations is
  particularly robust and explains why the limiting behavior should be
  universal. In our analysis, we simply use the fact that the action
  $S(z)$ has a double critical point on the real axis, a property that
  is characteristic of the ``edge'' regime. Our proof should therefore
  be easily adaptable to other instances of periodic Schur
  processes. For example, we expect that the finite-temperature Airy
  kernel will be observed in cylindric partitions as the scaling limit
  around any generic edge point (the action now involving the
  dilogarithm function~\cite{or,B2007cyl}).
\end{rem}

\section{The stationary cylindric Plancherel process}
\label{sec:cylPlanchproc}

In this section we define a continuous-time periodic extension of the
cylindric Plancherel measure. This process has the remarkable property
of being stationary, i.e.\ invariant under time translation. It may be
identified as the periodic analogue of the stationary process of
Borodin and Olshanski~\cite{BO06}, whose fixed-time marginal is the
ordinary poissonized Plancherel measure. As we constructed our process
through Fock space considerations, our presentation is somewhat
different from~\cite{BO06}.

\paragraph{Definition and basic properties.}

For $\beta,\vartheta$ two nonnegative parameters, let $\Tt(\beta)$ be
the ``transfer matrix'' acting on the set $\Par$ of integer partitions
by
\begin{equation}
  \label{eq:Ttdef}
  \Tt(\beta)_{\lambda,\nu} := e^{\vartheta^2(e^{-\beta}-1)}
  \sum_{\mu \in \Par} e^{-\beta|\mu|} s_{\lambda/\mu}(\ex_{\vartheta(1-e^{-\beta})})
  s_{\nu/\mu}(\ex_{\vartheta(1-e^{-\beta})}), \qquad \lambda,\nu \in \Par
\end{equation}
where we recall that $\ex$ denotes the exponential specialization
\eqref{eq:expspec}---we may therefore rewrite
$\Tt(\beta)_{\lambda,\nu}$ in terms of the numbers of
standard Young tableaux $\dim(\lambda/\mu)$ and
$\dim(\nu/\mu)$. What is remarkable about our definition
is that the family $\left(\Tt(\beta)\right)_{\beta \geq 0}$ forms a
semigroup.

\begin{prop}
  \label{prop:semigroup}
  For any $\beta,\beta',\vartheta$ we have
  \begin{equation}
    \label{eq:semigroup}
    \Tt(\beta) \Tt(\beta') = \Tt(\beta+\beta').
  \end{equation}
  Furthermore for any $\vartheta$ we have
  $\sum_{\lambda \in \Par} \Tt(\beta)_{\lambda,\lambda}=Z_\beta$,
  where $Z_\beta:=\prod_{k \geq 1}(1-e^{-\beta k})^{-1}$.
\end{prop}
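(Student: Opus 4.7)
The plan is to translate $\Tt(\beta)$ into the free fermion formalism of Section~\ref{sec:fermions} and then reduce both statements to algebraic manipulations of $\Gamma$-operators. Writing $a := \vartheta(1-e^{-\beta})$ and inserting the resolution of the identity $\sum_\mu \ket{\mu}\bra{\mu}$ between $e^{-\beta H}$ and $\Gamma_+(\ex_a)$, the definition~\eqref{eq:Ttdef} becomes
\begin{equation}
  \Tt(\beta)_{\lambda,\nu} = e^{\vartheta^2(e^{-\beta}-1)}\bra{\lambda}\Gamma_-(\ex_a)\,e^{-\beta H}\,\Gamma_+(\ex_a)\ket{\nu},
\end{equation}
so it suffices to manipulate $\widehat T(\beta) := \Gamma_-(\ex_a)\,e^{-\beta H}\,\Gamma_+(\ex_a)$ on the charge-zero subspace of Fock space using three standard relations (recalled in Appendix~\ref{sec:fermrem}): (i) $\Gamma_+(\ex_c)\Gamma_-(\ex_{c'}) = e^{cc'}\Gamma_-(\ex_{c'})\Gamma_+(\ex_c)$; (ii) the grading rules $e^{-\beta H}\Gamma_-(\ex_c) = \Gamma_-(\ex_{ce^{-\beta}})e^{-\beta H}$ and $\Gamma_+(\ex_c)e^{-\beta H} = e^{-\beta H}\Gamma_+(\ex_{ce^{-\beta}})$; (iii) the additivity $\Gamma_\pm(\ex_p)\Gamma_\pm(\ex_q) = \Gamma_\pm(\ex_{p+q})$ of vertex operators at exponential specializations.

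For the semigroup property, I would multiply $\widehat T(\beta)\widehat T(\beta')$ and push every $\Gamma_-$ to the far left and every $\Gamma_+$ to the far right, using (i) once for the single inner $\Gamma_+\Gamma_-$ encounter and (ii) twice across the two heat kernels. Collecting via (iii) yields $\Gamma_-(\ex_{a+e^{-\beta}a'})\,e^{-(\beta+\beta')H}\,\Gamma_+(\ex_{e^{-\beta'}a+a'})$, and a direct check shows both exponential arguments equal $a'' := \vartheta(1-e^{-(\beta+\beta')})$. The sole scalar accumulated from (i) is $e^{aa'} = e^{\vartheta^2(1-e^{-\beta})(1-e^{-\beta'})}$, and the one-line identity
\begin{equation}
  e^{\vartheta^2(e^{-\beta}-1)}\,e^{\vartheta^2(e^{-\beta'}-1)}\,e^{aa'} = e^{\vartheta^2(e^{-(\beta+\beta')}-1)}
\end{equation}
closes the argument, giving $\Tt(\beta)\Tt(\beta') = \Tt(\beta+\beta')$.

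For the trace identity I iteratively eliminate the $\Gamma$-operators by alternating cyclicity of $\tr$ with (i) and (ii). One such cycle converts $\tr\widehat T(\beta)$ into $e^{a^2}\tr\bigl(\Gamma_-(\ex_a)\,e^{-\beta H}\,\Gamma_+(\ex_{a e^{-\beta}})\bigr)$; iterating $k$ times and letting $k \to \infty$ (so that $\Gamma_+(\ex_{ae^{-k\beta}}) \to \mathrm{Id}$) sums a geometric series to
\begin{equation}
  \tr\widehat T(\beta) = e^{a^2/(1-e^{-\beta})}\,\tr\bigl(\Gamma_-(\ex_a)\,e^{-\beta H}\bigr).
\end{equation}
The remaining trace is handled analogously but \emph{without} any scalar pickup (only a single $\Gamma$ is present): cyclicity and (ii) applied repeatedly give $\tr(\Gamma_-(\ex_a)e^{-\beta H}) = \tr(\Gamma_-(\ex_{ae^{-k\beta}})e^{-\beta H}) \to \tr(e^{-\beta H}) = Z_\beta$. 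Since $a^2/(1-e^{-\beta}) = \vartheta^2(1-e^{-\beta})$ exactly cancels the prefactor $e^{\vartheta^2(e^{-\beta}-1)}$, we obtain $\sum_\lambda \Tt(\beta)_{\lambda,\lambda} = Z_\beta$. There is no serious obstacle in the way: the proof is purely algebraic bookkeeping, and the seemingly ad hoc normalization $e^{\vartheta^2(e^{-\beta}-1)}$ in~\eqref{eq:Ttdef} is precisely what is required to absorb the scalars produced by $\Gamma$-elimination, and may be thought of as being read off backwards from the trace computation.
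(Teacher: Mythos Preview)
Your proof is correct and follows essentially the same approach as the paper: represent $\Tt(\beta)$ in Fock space as $e^{\vartheta^2(e^{-\beta}-1)}\Gamma_-(\ex_a)e^{-\beta H}\Gamma_+(\ex_a)$, then use the commutation relations~\eqref{eq:gamcomm} for the semigroup law and $\Gamma$-elimination for the trace. The paper merely states the Fock representation and leaves the verifications to the reader, whereas you carry out the bookkeeping explicitly and correctly (in particular the identities $a+e^{-\beta}a'=e^{-\beta'}a+a'=\vartheta(1-e^{-(\beta+\beta')})$ and the scalar cancellation are right).
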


\begin{proof}
  The transfer matrix is represented in fermionic Fock space as
  \begin{equation}
    \label{eq:TtFock}
    \Tt(\beta) = e^{\vartheta^2(e^{-\beta}-1)} 
    \Gamma_-(\ex_{\vartheta(1-e^{-\beta})}) e^{-\beta H} \Gamma_+(\ex_{\vartheta(1-e^{-\beta})})
  \end{equation}
  (we keep the same letter by a slight abuse of notation). We then
  leave to the reader the pleasure of checking that the commutation
  relations~\eqref{eq:gamcomm} imply the desired relation. The trace
  formula for $Z_\beta=\tr(\Pi_0 \Tt(\beta))$ may be checked using the
  $\Gamma$-elimination method of Section~\ref{sec:fermions}.
\end{proof}

\begin{rem}
  \label{rem:Ttnormord}
  In fact, we have $\Tt(\beta)=e^{-\beta H_\vartheta}$ where
  $H_\vartheta:=H-\vartheta(\alpha_1+\alpha_{-1})+\vartheta^2$ with
  $\alpha_{\pm 1}$ being the bosonic operators defined
  in~\eqref{eq:gamalpdef}. The expression~\eqref{eq:TtFock}
  corresponds to the normal ordered form.
\end{rem}

\begin{definition}
  \label{def:cylPlanchproc}
  Fix $\beta,\vartheta$ two nonnegative real parameters. The
  (stationary) \emph{cylindric Plancherel process} of period $\beta$
  and intensity $\vartheta$ is the random partition-valued
  continuous-time process $\lambda(\cdot): \R \to \Par$ which is
  $\beta$-periodic and whose finite-dimensional distributions are
  given by
  \begin{equation}
    \label{eq:cylPlanchproc}
    \Prob(\lambda(0),\lambda(b_1),\ldots,\lambda(b_n))
    = \frac{1}{Z_\beta} \prod_{i=0}^n
    \Tt(b_{i+1}-b_i)_{\lambda(b_i),\lambda(b_{i+1})}
  \end{equation}
  where $0=b_0<b_1<\cdots<b_n<b_{n+1}=\beta$.
\end{definition}

By Proposition~\ref{prop:semigroup} it is clear that
\eqref{eq:cylPlanchproc} defines a consistent family of
finite-dimensional distributions, so the process is well-defined by
the Kolmogorov extension theorem. We may see that $\lambda(\cdot)$ can
be realized as a partition-valued jump process (with finitely many
jumps almost surely)---see again Figure~\ref{fig:CPP_paths} for an
illustration---but we shall not detail these considerations here as we
are chiefly interested in the properties of the finite-dimensional
distributions themselves.  Note that the operator $-H_\vartheta$
mentioned in Remark~\ref{rem:Ttnormord}, while having positive
off-diagonal elements in the canonical basis, is not an intensity
matrix so does not directly generate a Markov process. However,
$H_\vartheta$ is self-adjoint and nonnegative, hence our process might
be thought as a quantum evolution in imaginary time.

The law of $\lambda(0)$ is the $\lambda$-marginal of the cylindric
Plancherel measure~\eqref{eq:uPlanchdef} of parameters $u=e^{-\beta}$
and $\gamma=\vartheta(1-e^{-\beta})$. This is also true for
$\lambda(b)$ with any $b \in \R$ as, from the definition and from the
semigroup property, one may check easily that the process is
stationary. Moreover, from~\eqref{eq:Ttdef} we may check that, for any
times $0 \leq b_1<\cdots<b_N \leq \beta$, the tuple
$(\lambda(b_1),\ldots,\lambda(b_N))$ is the $\lambda$-marginal of a
periodic Schur process as defined in the introduction. Note that this
requires to rewrite the weight~\eqref{eq:cylPlanchproc} in the
form~\eqref{eq:probdef}: one sees that $u$ is always $e^{-\beta}$ and
that the specializations $\rho^\pm_k$ will all be exponential, but
their parameters have a slightly unnatural expression---which we omit
here---since \eqref{eq:probdef} breaks translation invariance due to
the weight $u^{|\mu^{(0)}|}$.

\begin{rem}[Infinite period/zero-temperature limit]
  In Fock space representation we have
  \begin{equation}
    \lim_{\beta \to \infty} \Tt(\beta) = e^{-\vartheta^2} 
    \Gamma_-(\ex_{\vartheta}) \vv \vcv \Gamma_+(\ex_{\vartheta}).
  \end{equation}
  In other words $\Tt(\infty)$ has rank one and we have the
  factorization
  $\Tt(\infty)_{\lambda,\nu}= e^{-\vartheta^2}
  s_\lambda(\ex_\vartheta) s_\nu(\ex_\vartheta)$. From this
  observation, we see that the cylindric Plancherel process has
  well-defined limit as $\beta \to \infty$, which is an ordinary
  (nonperiodic) continuous-time Schur process whose marginal at any
  given time is the poissonized Plancherel measure of parameter $\vartheta$. We
  may identify it with the stationary process of Borodin and
  Olshanski~\cite{BO06}: this is not obvious from the definition, but
  it will be from the expression of the correlation functions.
\end{rem}

\begin{rem}
  In view of the characterization~\eqref{eq:rhoposparam} for
  nonnegative specializations, the exponential specialization is the
  only one having the ``infinite divisibility property'' needed to
  define a stationary continuous-time Schur process. Therefore the
  cylindric Plancherel process seems to be one of a
  kind. In~\cite{BO06z}, Borodin and Olshanski construct Markov
  processes preserving $z$-measures but, as they point out, they do
  not seem to belong to the class of Schur process.
\end{rem}

\begin{rem}
  The cylindric Plancherel process appears as a ``poissonian'' limit
  of a measure on cylindric partitions in the following sense: let us
  consider the instance of the periodic Schur process of length $N$
  given by
  \begin{equation}
    \label{eq:probdefcp}
    \Prob(\vec{\lambda},\vec{\mu})\ \propto\ 
    \prod_{k=1}^N \left(s_{\lambda^{(k)}/\mu^{(k-1)}}\left(\vartheta \epsilon\right) e^{-\epsilon |\mu^{(k)}|}
      s_{\lambda^{(k)}/\mu^{(k)}}\left(\vartheta \epsilon\right) \right).
  \end{equation}
  This weight may be rewritten in the form~\eqref{eq:probdef} but here
  translation invariance is manifest. Since the specializations
  consist of single variables, the weight is supported on sequences
  $(\vec{\lambda},\vec{\mu})$ corresponding to cylindric partitions,
  see again Figure~\ref{fig:cylpp2}. Then, the cylindric Plancherel
  measure of period $\beta$ and intensity $\vartheta$ appears as the
  limit $N \to \infty$, $\epsilon \to 0^+$ with $N \epsilon \to
  \beta$. This is because the ``elementary transfer matrix'' reads
  here
  $\Gamma_-(\vartheta \epsilon) e^{-\epsilon H} \Gamma_+(\vartheta
  \epsilon) = 1 - \epsilon (H_\vartheta-\vartheta^2)+O(\epsilon^2)$ in
  the notation of Remark~\ref{rem:Ttnormord}.
\end{rem}

\paragraph{Correlation functions.}

We now turn to correlation functions, which characterize the point
process
\begin{equation}
  \mathfrak{S}(\lambda):= \left\{ \left(b,\lambda(b)_i-i+\frac{1}{2}\right),
    \ b\in [0,\beta], \ i\geq 1\right\} \subset [0,\beta] \times \Z'.
\end{equation}
Following the discussion of Section~\ref{sec:corr}, we shall rather
consider the shift-mixed process $\mathfrak{S}(\lambda)+(0,c)$, where
$c \in \Z$ is distributed as~\eqref{eq:cdist} with $t$ an arbitrary
positive parameter.

\begin{prop}
  The shift-mixed process $\mathfrak{S}(\lambda)+(0,c)$ is
  determinantal with correlation kernel given by
  \begin{equation}
    \label{eq:Kext}
    K(b,k;b',k') = \frac{1}{(2\im \pi)^2} \oint_{|z|=1^+} \frac{d z}{z^{k+1}} \oint_{|w|=1^-} \frac{d w}{w^{-k'+1}}  \frac{e^{\vartheta(z-z^{-1})}}{e^{\vartheta(w-w^{-1})}}  \kappa(z e^{-b},w e^{-b'})
  \end{equation}
  where: $b,b' \in [0,\beta)$, $k,k' \in \Z'$; $\kappa$ is as in
  \eqref{eq:kapform} with $u=e^{-\beta}$; and the notation $1^\pm$
  means that we should take $|z|$ a bit larger than $|w|$ in the case
  $b=b'$, to circumvent the pole of $\kappa$ at $z=w$.
\end{prop}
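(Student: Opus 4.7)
The strategy is to extend the free-fermion derivation of Section~\ref{sec:fermions} to this continuous-time setting, exploiting the Fock-space form~\eqref{eq:TtFock} of the transfer matrix $\Tt$. For an ordered configuration $U=\{(b_1,k_1),\ldots,(b_n,k_n)\}$ with $0 \leq b_1 \leq \cdots \leq b_n < \beta$, the first step is to express the shift-mixed correlation function as a fermionic trace
\begin{equation}
  \tilde{Z}\,\tilde{\rho}(U) = \left[\prod_{i=1}^n z_i^{k_i} w_i^{-k_i}\right] \tr\!\left(t^C\, \Tt(b_1)\, \psi(z_1)\psi^*(w_1)\, \Tt(b_2-b_1)\, \cdots\, \psi(z_n)\psi^*(w_n)\, \Tt(\beta-b_n)\right),
\end{equation}
with $\tilde{Z}=\tr(t^C \Tt(\beta))$ and $u=e^{-\beta}$. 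The general (unordered) case will follow by relabeling.

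Next, I would expand each $\Tt(\delta)$ via~\eqref{eq:TtFock} and perform $\Gamma$-elimination exactly as in Section~\ref{sec:fermions}. The new ingredient compared with the discrete setting is the commutation $e^{-\delta H}\psi(z)=\psi(ze^{-\delta})e^{-\delta H}$ and $e^{-\delta H}\psi^*(w)=\psi^*(we^{-\delta})e^{-\delta H}$, which follow from $[H,\psi_k]=k\psi_k$ and $[H,\psi_k^*]=-k\psi_k^*$. Moving the $e^{-\delta_i H}$ factors cyclically to the right so that they combine into $u^H$, the $i$-th pair rescales to $\psi(z_ie^{-b_i})\psi^*(w_ie^{-b_i})$, which is precisely what produces the dependence on $\kappa(ze^{-b},we^{-b'})$ in~\eqref{eq:Kext}.

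The remaining $\Gamma_\pm$ operators are handled as in Section~\ref{sec:fermions}: all $\Gamma_+$'s are pushed cyclically to the right and all $\Gamma_-$'s to the left, collecting a scalar factor at each commutation. Because the specialization in $\Tt(\delta)$ is exponential, one has $\Gamma_\pm(\ex_\gamma)=e^{\gamma\alpha_{\pm 1}}$ and $e^{-\delta' H}\alpha_{\pm 1}e^{\delta' H}=e^{\mp\delta'}\alpha_{\pm 1}$, so the parameters of the $\Gamma$'s get rescaled cleanly as they cross the time-evolution operators, and the geometric series obtained by iterating cyclically around the trace converges since $u<1$. After absorbing the normalizations $c_{\delta_i}=e^{\vartheta^2(e^{-\delta_i}-1)}$, I expect the total single-particle factor attached to the $i$-th insertion to telescope to $e^{\vartheta(z_i-z_i^{-1})}/e^{\vartheta(w_i-w_i^{-1})}$, time-independent and matching the $N=1$ specialization of Theorem~\ref{thm:main_thm} with $L=\gamma/(1-u)=\vartheta$.

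Wick's lemma at finite temperature (Lemma~\ref{lem:wickgen}) applied to the resulting grand canonical expectation
\begin{equation}
  \evut{\mathcal{T}\!\left(\textstyle\prod_i \psi(z_i e^{-b_i})\psi^*(w_i e^{-b_i})\right)}
\end{equation}
then yields the determinantal structure with two-point function $\kappa(z_ie^{-b_i},w_je^{-b_j})$. Converting the coefficient extractions into double contour integrals, with the radii selected according to the annulus of convergence of the series defining $\kappa$ (whence the prescription $|z|=1^+$, $|w|=1^-$ at coincident times $b=b'$), produces~\eqref{eq:Kext}. The main obstacle is the bookkeeping in the scalar step above: one must verify that the many cross-factors produced when $\Gamma_+$'s cross $\Gamma_-$'s and fermion fields from different time steps conspire with the normalizations $c_{\delta_i}$ to yield the simple, time-independent single-particle factor $\exp(\vartheta(z-z^{-1}))$. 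The infinite-divisibility of the exponential specialization $\ex_{\vartheta(1-e^{-\delta})}$—equivalently, the semigroup property of Proposition~\ref{prop:semigroup}—is exactly what underlies this cancellation.
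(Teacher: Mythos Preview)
Your approach is correct and matches the paper's own derivation in spirit: both represent the correlation function as a fermionic trace involving the $\Tt$ operators in their Fock-space form~\eqref{eq:TtFock}, perform $\Gamma$-elimination, use the relations $e^{-bH}\psi(z)e^{bH}=\psi(ze^{-b})$ and $e^{-b'H}\psi^*(w)e^{b'H}=\psi^*(we^{-b'})$ to produce the argument shift in $\kappa$, and read off the kernel. Two minor differences are worth noting. First, the paper cites Theorem~\ref{thm:main_thm} for the determinantal structure and then computes only the two-point generating function~\eqref{eq:KprocTt}, whereas you redo the $n$-point Wick argument; either is fine. Second, and more usefully, the paper sidesteps exactly the ``bookkeeping obstacle'' you flag: rather than tracking all the cross-factors from $\Gamma$-elimination explicitly, it observes that the scalar factor attached to $\psi(z)$ does not depend on the position of the other fermionic insertions, hence is the same as in the case $b=b'$, which by translation invariance reduces to $b=b'=0$---i.e.\ to the cylindric Plancherel \emph{measure} computation~\eqref{eq:Kcontour} with $L=\vartheta$. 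This immediately gives the time-independent prefactor $e^{\vartheta(z-z^{-1})}/e^{\vartheta(w-w^{-1})}$ without any telescoping computation.
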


\begin{proof}
  That the shift-mixed process is determinantal is a consequence of
  Theorem~\ref{thm:main_thm}. We may also derive the expression of
  $K(b,k;b',k')$ by converting the weight~\eqref{eq:cylPlanchproc} in
  the form~\eqref{eq:probdef} and doing further manipulations, but
  we found it more instructive and less error-prone to rederive it
  directly from the free fermion formalism. From the discussion of
  Section~\ref{sec:fermions}, the generating function
  \begin{equation}
    \mathcal{K}(b,z;b',w) := \sum_{k \in \Z'} \sum_{k' \in \Z'}
    K(b,k;b',k') z^k w^{-k'}
  \end{equation}
  admits the Fock space representation
  \begin{equation}
    \label{eq:KprocTt}
    \mathcal{K}(b,z;b',w) = \mathcal{Z}_{\beta,t}^{-1} \times
    \begin{cases}
       \tr \left( t^C \Tt(b) \psi(z) \Tt(b'-b) \psi^*(w)
        \Tt(\beta-b') \right) & \text{if $b \leq b'$,}\\
      - \tr \left( t^C \Tt(b') \psi^*(w) \Tt(b-b') \psi(z)
        \Tt(\beta-b) \right) & \text{if $b > b'$.}
    \end{cases}
  \end{equation}
  where $\mathcal{Z}_{\beta,t}=\tr(t^C \Tt(\beta))=\tr(t^C e^{-\beta H})$.
  We now plug in \eqref{eq:TtFock} and perform $\Gamma$-elimination,
  which yields
  \begin{equation}
    \label{eq:KprocbH}
    \mathcal{K}(b,z;b',w) = \mathcal{Z}_{\beta,t}^{-1} \frac{e^{\vartheta(z-z^{-1})}}{e^{\vartheta(w-w^{-1})}} \times
    \begin{cases}
      \tr \left( t^C e^{-bH} \psi(z) e^{-(b'-b)H} \psi^*(w)
        e^{-(\beta-b')H} \right) & \text{if $b \leq b'$,}\\
      - \tr \left( t^C e^{-b'H} \psi^*(w) e^{-(b-b')H} \psi(z)
        e^{-(\beta-b)H} \right) & \text{if $b > b'$.}
    \end{cases}
  \end{equation}
  Here, the somewhat surprising fact that the prefactor is independent
  of $b$ and $b'$ can be explained as follows: observe that, when
  performing $\Gamma$-elimination, the factor produced by the
  quasi-commutations with a given fermionic operator, say $\psi(z)$,
  does not depend on the position of the other fermionic operator(s)
  in the product. Therefore this factor should be the same as for
  $b=b'$, and by using translation invariance we may reduce to the
  case $b=b'=0$, from which we may identify the factor with the one
  present in~\eqref{eq:Kcontour} for the cylindric Plancherel measure
  (note that $L=\vartheta$ in our current notations).

  Finally, from the relations
  $e^{-b H} \psi(z) e^{b H} = \psi(z e^{-b})$ and
  $e^{-b' H} \psi^*(w) e^{b' H} = \psi^*(w e^{-b'})$ (which result
  from the fact that $\psi_k$ and $\psi^*_{-k}$ ``increase the
  energy'' by $k$) and from the fermionic expression~\eqref{eq:kferm}
  of $\kappa$, we evaluate the remaining traces in \eqref{eq:KprocbH}
  and get
  \begin{equation}
    \mathcal{K}(b,z;b',w) = \frac{e^{\vartheta(z-z^{-1})}}{e^{\vartheta(w-w^{-1})}} \kappa(z e^{-b},w e^{-b'}).
  \end{equation}
  The ratio of the arguments of $\kappa$ should be in the annulus
  $(1,e^\beta)$ for $b \leq b'$ and in $(e^{-\beta},1)$ for $b > b'$,
  and this condition is ensured by taking $|z|=1^+$ and $|w|=1^-$. The
  coefficient $K(b,k;b',k')$ may then be extracted by a double contour
  integral.
\end{proof}

\begin{rem}
  The Bessel representation~\eqref{eq:KBessel} of the correlation
  kernel for the cylindric Plancherel measure may be generalized as
  \begin{equation}
    K(b,k;b',k') =  
    \begin{cases}
      \sum_{\ell \in \Z'} J_{k+\ell}(2\vartheta)
      J_{k'+\ell}(2\vartheta) \frac{e^{(b-b')\ell}}{1+t^{-1}
        e^{\beta \ell}} & \text{if $b \leq b'$,}\\
      - \sum_{\ell \in \Z'} J_{k+\ell}(2\vartheta)
      J_{k'+\ell}(2\vartheta) \frac{e^{(b-b')\ell}}{1+t
        e^{-\beta \ell}} & \text{if $b >b'$.}\\
  \end{cases}
  \end{equation}
  In the limit $\beta \to \infty$, we recover the stationary version
  of the discrete extended Bessel kernel of~\cite[Theorem~3.3]{BO06}.
\end{rem}

\paragraph{Edge crossover behavior.} We now study the edge crossover asymptotics of the
kernel $K(b,k;b',k')$, which we find to be described by the
\emph{finite-temperature extended Airy kernel}, as defined
in~\cite[Equation~(99)]{dms}. Proposition~\ref{prop:Kconv} admits the
following extension. (Note that the convergence of traces need not be
extended.)
\begin{prop}
  \label{prop:Kextconv}
  For $t=1$ and in the edge crossover regime
  \begin{equation}
    \label{eq:extedgescal}
   \left\{
      \begin{array}[c]{l}
        \vartheta \to \infty \\ \beta \to 0^+ \\
        \vartheta^{1/3} \beta \to \alpha
      \end{array} \right.
    \qquad
    \left\{
      \begin{array}[c]{l}
        b = \beta \tau/\alpha \\
        k=\lfloor 2\vartheta+x\vartheta^{1/3} \rfloor \\
        b' = \beta \tau'/\alpha \\
        k'=\lfloor 2\vartheta+y\vartheta^{1/3} \rfloor \\
     \end{array}
   \right. 
  \end{equation}
  where we assume $\tau,\tau' \in [0,\alpha)$ without loss of generality, we have
  \begin{equation}
    \label{eq:Kextconv}
    \vartheta^{1/3} K(b,k;b',k') \to
    \begin{cases}
    \int_{-\infty}^\infty \frac{e^{(\tau-\tau') v}}{1+e^{-\alpha v}} \Ai(x+v) \Ai(y+v) d v & \text{if $\tau \leq \tau'$,}\\
    - \int_{-\infty}^\infty \frac{e^{(\tau-\tau') v}}{1+e^{\alpha v}} \Ai(x+v) \Ai(y+v) d v  & \text{if $\tau > \tau'$}.
  \end{cases}
  \end{equation}
\end{prop}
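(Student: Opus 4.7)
The plan is to adapt the Bessel-sum proof of Proposition~\ref{prop:Kconv}. Expanding $\kappa(ze^{-b}, we^{-b'})$ in~\eqref{eq:Kext} via the first case of~\eqref{eq:kapform} (valid for $b \leq b'$) and evaluating the resulting contour integrals termwise using $e^{\vartheta(z-z^{-1})} = \sum_n J_n(2\vartheta) z^n$, one obtains at $t=1$ (after the change of summation variable $\ell \to -\ell$)
\begin{equation}
  K(b,k;b',k') = \sum_{\ell \in \Z'} \frac{e^{(b-b')\ell}}{1+e^{-\beta\ell}} J_{k+\ell}(2\vartheta) J_{k'+\ell}(2\vartheta), \qquad (b \leq b').
\end{equation}
Under the scaling~\eqref{eq:extedgescal}, setting $v := \ell \vartheta^{-1/3}$ yields $\beta\ell \to \alpha v$, $(b-b')\ell \to (\tau-\tau')v$, and $k+\ell = 2\vartheta + (x+v)\vartheta^{1/3} + O(1)$, with the analogous identity for $k'+\ell$. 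Nicholson's approximation~\eqref{eq:nic1} then gives $\vartheta^{1/3} J_{k+\ell}(2\vartheta) \to \Ai(x+v)$ and $\vartheta^{1/3} J_{k'+\ell}(2\vartheta) \to \Ai(y+v)$ uniformly on compact $v$-sets.

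Following the pattern of Proposition~\ref{prop:Kconv}, I split $\vartheta^{1/3} K = \Sigma_1 + \Sigma_2 + \Sigma_3$ over the ranges $\ell < -T\vartheta^{1/3}$, $|\ell| \leq T\vartheta^{1/3}$, $\ell > T\vartheta^{1/3}$ for some large $T$. The central portion $\Sigma_2$ is a Riemann sum with mesh $\vartheta^{-1/3}$ converging to
\begin{equation}
  \int_{-T}^{T} \Ai(x+v)\, \Ai(y+v)\, \frac{e^{(\tau-\tau')v}}{1+e^{-\alpha v}}\, dv,
\end{equation}
which tends to the claimed integral as $T \to \infty$. The right tail $\Sigma_3$ is handled exactly as in Proposition~\ref{prop:Kconv} via the Nicholson exponential bound~\eqref{eq:nic2}, since the extra weight $e^{(\tau-\tau')v}/(1+e^{-\alpha v})$ stays bounded by $1$ for $v \to +\infty$ (as $\tau \leq \tau'$). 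The left tail $\Sigma_1$ is bounded using the global estimate $|\Ai| \leq C$ together with the exponential decay $e^{(\tau-\tau')v}/(1+e^{-\alpha v}) \sim e^{(\alpha+\tau-\tau')v}$ as $v \to -\infty$; this yields $|\Sigma_1| = O(e^{-(\alpha+\tau-\tau')T})$, which vanishes as $T \to \infty$ because $\alpha + \tau - \tau' > 0$ under the assumption $\tau, \tau' \in [0,\alpha)$.

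The case $b > b'$ is handled symmetrically by starting from the second branch of~\eqref{eq:kapform}, which yields the analogous Bessel representation with an overall minus sign and the complementary Fermi factor $1/(1+e^{\alpha v})$; the constraint $\tau - \tau' < \alpha$ then supplies the required exponential decay at $v \to +\infty$. The main obstacle—and the sole place where the hypothesis $\tau, \tau' \in [0,\alpha)$ is essential—is the left-tail estimate above: without the strict bound on $\tau - \tau'$, the Fermi-Dirac weight would grow rather than decay at $v \to -\infty$, breaking the dominated-convergence argument. Otherwise the analysis is a direct extension of that for Proposition~\ref{prop:Kconv}.
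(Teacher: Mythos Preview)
Your argument is correct and follows a genuinely different route from the paper. The paper chooses to adapt the \emph{contour-integral} proof of Proposition~\ref{prop:Kconv}: it performs the change of variables $z=e^{\beta\zeta/\alpha}$, $w=e^{\beta\omega/\alpha}$ near the monkey saddle, uses a generalization of Lemma~\ref{lem:kapbound} to approximate $\kappa(ze^{-b},we^{-b'})$ by $\frac{\pi}{\beta\sin\frac{\pi(\zeta-\omega+\tau'-\tau)}{\alpha}}$, and then decouples the resulting double integral via the cosecant integral representation (whose two branches account for the sign of $\tau-\tau'$). You instead adapt the \emph{Bessel-sum} proof: you write down the extended Bessel representation, apply Nicholson's approximation termwise, and control the tails through the Fermi weight. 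Both approaches are valid here; the contour-integral route is more robust to changes in the specialization (cf.\ Remark~\ref{rem:universality}), while your Bessel route is more elementary and makes transparent exactly where the constraint $\tau,\tau'\in[0,\alpha)$ enters---namely as the positivity of $\alpha+\tau-\tau'$ (resp.\ $\alpha-\tau+\tau'$) needed for the left (resp.\ right) tail to decay.

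One minor point: in your treatment of $\Sigma_1$ you invoke the bound $|\Ai|\leq C$, but prior to the Riemann-sum limit the summands involve Bessel functions, not Airy functions. The clean fix is to use a uniform Bessel bound such as Landau's $|J_n(x)|\leq C\,x^{-1/3}$, which gives $\vartheta^{1/3}|J_{k+\ell}(2\vartheta)|=O(1)$ uniformly in $\ell$; your geometric-series estimate on the Fermi weight then goes through unchanged. (The paper's own Bessel proof of Proposition~\ref{prop:Kconv} is at the same level of informality on this point.)
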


\begin{proof}
  We simply adapt the contour integral proof of Proposition~\ref{prop:Kconv}. Let us highlight the few
  changes. First, obviously, the role of $r$ is now played by $\beta$,
  and that of $L$ by $\vartheta$. An immediate generalization of
  Lemma~\ref{lem:kapbound} shows that $\kappa=O(\beta^{-1})$ uniformly
  on the integration contours, and we have the estimate
  \begin{equation}
    \label{eq:kapboundext}
    \kappa(z e^{-b},w e^{-b'}) \sim \frac{\pi}{\beta \sin \frac{\pi
        (\zeta-\omega+\tau'-\tau)}{\alpha}}
  \end{equation}
  in the vicinity of the saddle-point, where we perform the change of
  variables $z=e^{\beta \zeta/\alpha}$, $w=e^{\beta
    \omega/\alpha}$. Here, $\zeta$ and $\omega$ shall be integrated
  over lines parallel to the imaginary axis, with
  $\Re(\zeta)>0>\Re(\omega)$ and with $\Re(\zeta-\omega+\tau'-\tau)>0$
  if $\tau \leq \tau'$, $<0$ otherwise. In view of~\eqref{eq:Kext},
  all the dependence in $b,b'$ being in the $\kappa$ factor, we
  conclude that the limiting kernel is given by~\eqref{eq:Iconv} where
  we replace the last factor by the right-hand side
  of~\eqref{eq:kapboundext}. The last step is to plug in the integral
  representations of the cosecant and Airy functions, but for the
  cosecant we should be careful that it depends on the sign of
  $\tau'-\tau$, namely
  \begin{equation}
    \frac{\pi}{\alpha \sin \frac{\pi(\zeta-\omega+\tau'-\tau)}{\alpha}} =
    \begin{cases}
      \int_{-\infty}^\infty
      \frac{e^{-(\zeta-\omega+\tau'-\tau) v} d v}{1+e^{-\alpha v}} & \text{if $\tau \leq \tau'$,}\\
      - \int_{-\infty}^\infty
      \frac{e^{-(\zeta-\omega+\tau'-\tau) v} d v}{1+e^{\alpha v}} & \text{if $\tau > \tau'$.}\\
    \end{cases}
  \end{equation}
  This leads to the desired expression~\eqref{eq:Kextconv}.
\end{proof}

Of course, some extra steps of analysis---which we omit---are needed
to deduce the convergence of the rescaled process from that of the
kernel. Note that Lemma~\ref{lem:Cestim} still ensures that
shift-mixing is irrelevant in the considered scaling limit.  We have
not investigated the edge high temperature regime, namely whether
Proposition~\ref{prop:Kconvgum} may be extended to the cylindric
Plancherel process.

\section{The periodic strict Schur process and neutral fermions}
\label{sec:corr_S}

Vuleti\'c~\cite{vul2} defined the so-called shifted Schur process,
which is a measure on sequences of \emph{strict} partitions (as
defined in Section~\ref{sec:fermidirac}), and whose definition
involves Schur's $P$ and $Q$ functions instead of the ordinary Schur
functions. In this section we explore the periodic variant of the
shifted Schur process which we prefer to call the \emph{strict Schur
process} to emphasize that it deals with strict partitions.

\paragraph{Reminders on Schur's $P$ and $Q$ functions.}

(See \cite[Section III.8]{mac} for more general background.) Let
$\mathsf{SSym} \subset \mathsf{Sym}$ be the subalgebra of symmetric
functions generated by the $p_n$'s with $n$ odd. This subalgebra
contains the $q_n$'s which are defined via the generating series
\begin{equation}
  Q(z) := \sum_{n \geq 0} q_n(x_1, x_2, \dots) z^n = \prod_{i} \frac{1+z x_i}{1-z x_i}.
\end{equation}
This results from the relation
$Q(z) = \exp(\sum_{n\geq 1,\ \mathrm{odd}} 2 p_n z^n/n)$ and, in fact,
the $q_n$'s with $n$ odd form another algebraically independent family
generating $\mathsf{SSym}$.

For $a<b$ distinct nonnegative integers let $Q_{(a,b)} := q_a q_b + 2 \sum_{i=1}^b (-1)^i q_{a+i} q_{b-i}$ and set $Q_{(b, a)} = - Q_{(a,b)}$. For strict partitions $\mu, \lambda$, write them such that $\lambda_1 > \dots > \lambda_m > 0$ and $\mu_1 > \dots > \mu_n \geq 0 $ so we can assume without loss of generality that $m+n$ is even. Then the skew Schur's $Q$ function $Q_{\lambda / \mu}$ is defined by
$Q_{\lambda / \mu} = \pf M_{\lambda, \mu}$ where $M_{\lambda, \mu}$ is the $(m+n) \times (m+n)$ antisymmetric matrix block-defined by
\begin{equation}
  M_{\lambda, \mu} = \left(
\begin{array}{ll}
M_{\lambda} & N_{\lambda, \mu} \\
-N_{\mu, \lambda} & 0
\end{array}
  \right)
\end{equation}
with $(M_{\lambda})_{i,j} = (Q_{(\lambda_i, \lambda_j)})$ and $N_{\lambda, \mu}$ the $m\times n$ matrix $(q_{\lambda_i - \mu_j})$. This should be viewed as an analogue of the Jacobi--Trudi formula for Schur's Q functions. We define the companion Schur's $P$-functions by $Q_{\lambda / \mu} = 2^{\ell(\lambda)-\ell(\mu)} P_{\lambda / \mu}$. We note that $Q_{\lambda / \mu} = 0$ unless $\mu \subset \lambda$ and likewise for $P$.

A \emph{strict specialization} $\rho$ is an algebra homomorphism from $\mathsf{SSym}$ to the complex numbers, and is completely determined by the generating function
\begin{equation}
  Q(\rho; z) := \sum_{n \geq 0} q_n(\rho) z^n = \exp \left( \sum_{n\geq 1,\ \mathrm{odd}} \frac{2 p_n(\rho) z^n}{n} \right).
\end{equation}
It is said to be nonnegative if $Q_{\lambda / \mu}(\rho) \geq 0$ for all $\lambda, \mu$. 
A necessary and sufficient condition for $\rho$ to be nonnegative is \cite{naz1, naz2} that its generating function is of the form
\begin{equation}
  \label{eq:rho_S_posparam}
  Q(\rho; z) = e^{\gamma z} \prod_{i \geq 1} \frac{1+\alpha_i z}{1-\alpha_i z}
\end{equation}
for a summable sequence of nonnegative real numbers
$\gamma, \alpha_1, \alpha_2, \dots$. For two strict specializations
$\rho,\rho'$, we define the specialization $\rho \cup \rho'$ by
\begin{equation}
  Q(\rho \cup \rho';z) := Q(\rho; z) Q(\rho'; z).
\end{equation}
Recalling that $\SPar$ denotes the set of strict partitions, we can
finally define $Q(\rho; \rho')$ by
\begin{equation}
  \label{eq:hdef_S}
  Q(\rho; \rho') := \sum_{\lambda \in \SPar} Q_\lambda(\rho) P_\lambda(\rho') = \sum_{\lambda \in \SPar} 2^{\ell(\lambda)} P_\lambda(\rho) P_\lambda(\rho').
\end{equation}
This is associated to the Cauchy identity for Schur's $P$ and $Q$ functions \cite[Section III.8]{mac},
which amounts to
\begin{equation}
  \label{eq:cauchy_S}
  Q(\rho; \rho') = \exp \left( \sum_{n \geq 1,\text{ odd}} \frac{2 p_n(\rho) p_n(\rho')}{n} \right).
\end{equation}

\paragraph{Definition of the process and the partition function.}

Fix $N$ a nonnegative integer, $u$ a nonnegative real number smaller
than $1$, and let $\rho^{\pm}_i,\ 1 \leq i \leq N$ be $2N$ strict
specializations such that $z \mapsto Q(\rho^{\pm}_i;z)$ is analytic in
a disk of radius $R>1$. The \emph{periodic strict Schur process} is a
measure on sequences of \emph{strict} partitions of the
form~\eqref{eq:seqform}, which to a given sequence
assigns the weight
\begin{equation}
  \label{eq:wt_def_S}
  \mathcal{W}_s(\vec{\lambda},\vec{\mu}) := u^{|\mu^{(0)}|} 
  \prod_{k=1}^N \left(Q_{\lambda^{(k)}/\mu^{(k-1)}}\left(\rho_k^+\right)
    P_{\lambda^{(k)}/\mu^{(k)}}\left(\rho_k^-\right) \right).
\end{equation}
The \emph{partition function}
$Z_s \equiv Z_s(u, \rho_1^+, \rho_1^-,\ldots,\rho_N^+,\rho_N^-)$ is
the sum of weights of all sequences.

\begin{prop}
  \label{prop:fbz_S}
  The partition function of the periodic strict Schur process reads
  \begin{equation}
    \label{eq:fbz_S}
    Z_s = \prod_{1 \leq k \leq \ell \leq N} Q(\rho_k^+;\rho_\ell^-)
    \prod_{n \geq 1} Q(u^{n}\rho^+; \rho^-)(1+u^n)
  \end{equation}
  where
  \begin{equation}
    \label{eq:rhopmdef_S}
    \rho^\pm := \rho^\pm_1 \cup \rho^\pm_2 \cup \cdots \cup \rho^\pm_N.
  \end{equation}
\end{prop}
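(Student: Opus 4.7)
The plan is to adapt the free-fermion derivation of Section~\ref{sec:fermions} to the setting of \emph{neutral} fermions, whose Fock-space formalism is taken from Appendix~\ref{sec:fermions_S}. The starting point is to express the partition function as the trace
\begin{equation}
  Z_s = \tr\left(u^{H}\, \Gamma_+(\rho_1^+)\Gamma_-(\rho_1^-) \cdots \Gamma_+(\rho_N^+)\Gamma_-(\rho_N^-)\right),
\end{equation}
where $\Gamma_\pm$ are now the neutral-fermion vertex operators, obtained by exponentiating the odd-mode bosons of the neutral theory. These operators satisfy a quasi-commutation of the form $\Gamma_+(\rho)\Gamma_-(\rho') = Q(\rho;\rho')\,\Gamma_-(\rho')\Gamma_+(\rho)$, which is exactly the operator form of the Cauchy identity~\eqref{eq:cauchy_S} for Schur's $P$ and $Q$ functions, together with the multiplicativity $\Gamma_\pm(\rho \cup \rho') = \Gamma_\pm(\rho)\Gamma_\pm(\rho')$ and a scaling relation linking $\Gamma_\pm(\rho)$ and $\Gamma_\pm(u\rho)$ through conjugation by $u^H$. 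A key structural point is that the neutral Fock space carries no charge, so no projector $\Pi_0$ or operator $t^C$ needs to be inserted; this is the technical reason why, as announced in the introduction, shift-mixing becomes trivial in the strict case.

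The calculation then proceeds in two stages of $\Gamma$-elimination. In the first stage I would move each $\Gamma_+(\rho_k^+)$ to the right past every $\Gamma_-(\rho_\ell^-)$ with $\ell \geq k$, collecting a factor $Q(\rho_k^+;\rho_\ell^-)$ at each swap and using the mutual commutation of the $\Gamma_+$'s among themselves. This produces the prefactor $\prod_{1 \leq k \leq \ell \leq N} Q(\rho_k^+;\rho_\ell^-)$ and, by the multiplicativity property, reduces the residual trace to $\tr(u^{H}\Gamma_-(\rho^-)\Gamma_+(\rho^+))$ in the notation~\eqref{eq:rhopmdef_S}. In the second stage I would cycle one of these two vertex operators around the trace and push it through $u^H$, which rescales its argument by $u$; commuting it back past its partner then produces a factor $Q(u\rho^+;\rho^-)$. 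Iterating yields the product $\prod_{n \geq 1} Q(u^n\rho^+;\rho^-)$, the infinite product converging because the analyticity assumption $R>1$ ensures that $Q(u^n\rho^+;\rho^-) \to 1$ geometrically.

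After these two stages, only one of the two vertex operators is left in the residual trace, and it is unitriangular in the canonical basis of the neutral Fock space indexed by strict partitions (its diagonal entries are all equal to $1$). It therefore contributes nothing against the diagonal operator $u^H$, and the residual trace collapses to $\tr(u^H) = \sum_{\lambda \in \SPar} u^{|\lambda|} = \prod_{n \geq 1}(1+u^n)$ by the Euler product for strict partitions. Combined with the two products coming from $\Gamma$-elimination, this yields exactly~\eqref{eq:fbz_S}.

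The main obstacle in this plan is not the combinatorial skeleton, which is formally parallel to the charged case treated in Section~\ref{sec:fermions}, but the careful setup of the neutral-fermion machinery itself: one must pin down the precise form of the quasi-commutation $\Gamma_+\Gamma_- = Q(\cdot;\cdot)\Gamma_-\Gamma_+$ (including the extra factor of $2$ in $Q$ compared with $P$ hidden in~\eqref{eq:hdef_S}), handle the mild two-fold degeneracy of the neutral vacuum so that the trace $\tr(u^H)$ really evaluates to $\prod_n(1+u^n)$ rather than to twice that, and justify the operator-norm convergence of the iterated $\Gamma$-elimination on the relevant dense subspace. All these ingredients are precisely what needs to be assembled in Appendix~\ref{sec:fermions_S}; once they are in place the derivation above becomes essentially mechanical.
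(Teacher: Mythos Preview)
Your approach is essentially the same as the paper's: express $Z_s$ as a trace of $u^{H_s}$ times the neutral $\Gamma^s$-operators, perform $\Gamma$-elimination using the quasi-commutation $\Gamma^s_+(\rho)\Gamma^s_-(\rho')=Q(\rho;\rho')\Gamma^s_-(\rho')\Gamma^s_+(\rho)$ and the scaling through $u^{H_s}$, and evaluate the residual trace as $\prod_{n\geq 1}(1+u^n)$.

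One point deserves sharpening. You write that ``the neutral Fock space carries no charge, so no projector $\Pi_0$ \ldots\ needs to be inserted,'' and later flag the ``two-fold degeneracy'' as an obstacle. In fact the neutral Fock space does carry a $\Z_2$-grading $\NF=\NF_0\oplus\NF_1$ (the operator $C_s$ of Appendix~\ref{sec:fermions_S}), and the paper resolves the issue cleanly by writing $Z_s=\tr_0(u^{H_s}\Gamma^s_+(\rho_1^+)\cdots\Gamma^s_-(\rho_N^-))$ with $\tr_0$ the trace over $\NF_0$ only. Since the $\Gamma^s$-operators preserve the grading (equation~\eqref{eq:schurelem_S} has the $\delta_{c,d}$), this restricted trace is exactly the sum over sequences of strict partitions, and at the end $\tr_0(u^{H_s})=\prod_{n\geq 1}(1+u^n)$ with no spurious factor of $2$. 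Your unitriangularity argument for disposing of the residual $\Gamma^s_-$ is fine and slightly more explicit than the paper's sketch, which simply refers back to Section~\ref{sec:fermions}.
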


\begin{rem}
  For $u=0$ we recover the partition function of the shifted Schur
  process~\cite{vul2}.
\end{rem}

\paragraph{Correlation functions.}

The \emph{point configuration}
associated with a sample $(\vec{\lambda},\vec{\mu})$ of the periodic strict Schur process is the set
\begin{equation}
  \label{eq:pointconfig_S}
  \mathfrak{S}^s(\vec{\lambda}) := \left\{
    \left(i,\lambda^{(i)}_j \right), 1 \leq i \leq N , 1 \leq j \leq \ell(\lambda^{(i)})
  \right\} \subset \N \times \N^*.
\end{equation}
Note that, since the $\lambda^{(i)}$'s are strict, we need not shift
the parts to obtain a simple point process and, as before, considering
only the partitions $\lambda^{(1)},\ldots,\lambda^{(N)}$ causes no
loss of generality.

Fixing an auxiliary real parameter $t$, let $c$ be a variable which
can take the values $0$ and $1$, and define the \emph{shift-mixed}
periodic strict Schur process as the probability measure
\begin{equation}
  \Prob (c, \vec{\lambda}, \vec{\mu}) = \frac{1}{(1+t) Z_s} \cdot t^c \cdot \mathcal{W}_s (\vec{\lambda}, \vec{\mu}).
\end{equation}
In other words, $c$ is Bernoulli random variable of parameter $t/(1+t)$ independent 
of $(\vec{\lambda}, \vec{\mu})$. The \emph{shift-mixed}
point configuration $\mathfrak{S}^s_t(\vec{\lambda})$ \emph{is the
  same as the point configuration}, except viewed in the larger space
$\{0, 1 \} \times \N \times \N^*$.\footnote{The fiber $c \times \mathfrak{S}^s(\vec{\lambda})$---$c \in \{0, 1\}$---corresponds to states $\ket{\lambda, c}$ in neutral Fock space $\mathcal{NF}$ in the notation of Appendix~\ref{sec:fermions_S}.} In this regard, $t$ is a dummy
parameter here more so than it was in the case of the shift-mixed
periodic (non strict) Schur process, where the dependence of $t$ and
$u$ was intertwined in the theta function $\theta_3(t; u)$. We
nevertheless choose to keep it for keeping our exposition parallel to
that of the case of the non strict process.
We now state our main result.
\begin{thm}
  \label{thm:main_thm_S}
  Let $c \in \{0, 1\}$ be a Bernoulli random variable of parameter $t$ and $U=\{(i_1, k_1),\ldots,(i_n, k_n)\} \subset \{1, \ldots, N\} \times \N^*$ with $i_1 \leq \cdots \leq i_n$. The $n$-point shift-mixed correlation function 
  $\tilde{\varrho}^s(U) := \Prob(c \times U \subset \mathfrak{S}^s_t (\vec{\lambda}))$
  is a pfaffian:
  \begin{equation}
    \tilde{\varrho}^s(U) = \pf K
  \end{equation}
  where $K$ is the $2n \times 2n$ antisymmetric matrix given by
  \begin{equation}
    K_{\gamma, \delta} = 
    \begin{cases}
      2^{-1} \left[ z^{k_{\gamma}} w^{k_{\delta}} \right] F^s(i_{\gamma}, z) F^s(i_{\delta}, w) \kappa^s(z, w), &1 \leq \gamma < \delta \leq n,\\
      2^{-1} \left[ z^{k_{\gamma}} w^{-k_{2n+1-\delta}} \right] (-1)^{k_{2n+1-\delta}} F^s(i_{\gamma}, z) F^s(i_{\delta}, w) \kappa^s(z, w), &1 \leq \gamma \leq n < \delta \leq 2n,\\
      2^{-1} \left[ z^{-k_{2n+1-\gamma}} w^{-k_{2n+1-\delta}} \right] (-1)^{k_{2n+1-\gamma}+k_{2n+1-\delta}} F^s(i_{\gamma}, z) F^s(i_{\delta}, w) \kappa^s(z, w), &n < \gamma < \delta \leq 2n
    \end{cases}
  \end{equation}
  and 
    \begin{equation} \label{eq:Fkap2b_S}
      \begin{split}
        F^s(i, z) &= \frac{\prod_{1 \leq \ell \leq i} Q(\rho_{\ell}^+;z)}{\prod_{i \leq \ell \leq N} Q(\rho_{\ell}^-;z^{-1})} \prod_{n \geq 1} \prod_{1 \leq \ell \leq N} \frac{Q(u^n \rho_{\ell}^+;z)}{Q(u^n \rho_{\ell}^-;z^{-1})}, \\
        \kappa^s(z,w) &= \frac{\theta_{u} \left( \frac{w}{z} \right)}{\theta_{u} \left( - \frac{w}{z} \right)} \cdot (1+t) (u; u)_{\infty}^2 (-u; u)_{\infty}^{-1}, \qquad u^{1/2} < |w| < |z| < u^{-1/2}.
      \end{split}
    \end{equation}
\end{thm}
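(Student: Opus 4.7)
The plan is to mirror the free-fermion derivation of Section~\ref{sec:fermions} line by line, replacing charged fermions with the neutral (Majorana-type) fermions recalled in Appendix~\ref{sec:fermions_S}. The starting point is the trace representation
\begin{equation}
Z_s = \tr_{\NF}\!\left(\Pi_0\, u^H\, \Gamma(\rho_1^+)\,\Gamma'(\rho_1^-)\cdots\Gamma(\rho_N^+)\,\Gamma'(\rho_N^-)\right),
\end{equation}
where $\Gamma(\rho),\Gamma'(\rho)$ are vertex-type operators on neutral Fock space $\NF$ whose matrix elements in the basis of strict partitions reproduce $Q_{\lambda/\mu}(\rho)$ and $P_{\lambda/\mu}(\rho)$, and $\Pi_0$ projects onto the even charge sector. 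Proposition~\ref{prop:fbz_S} then follows by iterated $\Gamma$-elimination, using the neutral quasi-commutation relation $\Gamma(\rho)\Gamma'(\rho')=Q(\rho;\rho')\,\Gamma'(\rho')\Gamma(\rho)$ together with the trace evaluation $\tr(t^C u^H)=(1+t)(-u;u)_\infty/(u;u)_\infty$, exactly in the spirit of~\cite{bcc}.

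For the correlation functions I would insert a ``particle detector'' at each $(i_\gamma,k_\gamma)\in U$ --- namely the product $\phi_{k_\gamma}\phi^*_{k_\gamma}$ of a neutral fermion mode and its dual, with $\phi^*_k:=(-1)^k\phi_{-k}$ --- at the appropriate position in the trace, and replace $\Pi_0$ by $t^C$ to implement shift-mixing. Writing each detector as a coefficient extraction from the generating field $\phi(z)=\sum_k z^k\phi_k$ and pushing all $\Gamma$ and $\Gamma'$ operators out of the trace by quasi-commutation produces precisely the factors $F^s(i_\gamma,z_\gamma)$ of the theorem, so that
\begin{equation}
\tilde{\varrho}^s(U) = \bigl[\text{coef.}\bigr]\,\prod_{\gamma=1}^n F^s(i_\gamma,z_\gamma)F^s(i_\gamma,z'_\gamma)\,\evut{\phi(z_1)\phi(z'_1)\cdots\phi(z_n)\phi(z'_n)}.
\end{equation}

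The pfaffian structure then follows from Wick's lemma for neutral fermions at finite temperature (Appendix~\ref{sec:fermions_S}), which expresses the above $2n$-point function as the pfaffian of the antisymmetric matrix of pairwise two-point functions. The three blocks appearing in the statement of $K$ correspond to the three types of pairings one obtains once each coefficient extraction is split according to the sign of the exponent via $\phi^*_k=(-1)^k\phi_{-k}$: pairings $\phi\phi$ (top-left), $\phi\phi^*$ (off-diagonal), and $\phi^*\phi^*$ (bottom-right). The sign factors $(-1)^{k}$ in the statement are precisely those produced by this substitution.

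The one genuinely new computation --- and in my view the main obstacle --- is the evaluation in closed theta-function form of the grand canonical neutral propagator $\evut{\phi(z)\phi(w)}$ and its identification with $\kappa^s(z,w)$. This is the neutral analogue of~\eqref{eq:kapbf}: one expands in modes, uses that neutral occupation numbers are Bernoulli with parameter $tu^k/(1+tu^k)$, sums the resulting bilateral series over half-integers, and recognizes the closed form via a Jacobi triple-product identity adapted to the neutral lattice (equivalently via the neutral boson--fermion correspondence applied to $\Gamma$-eliminate $\phi(z)\phi(w)$ inside a trace, paralleling~\eqref{eq:kapbf}). The emergence of the ratio $\theta_u(w/z)/\theta_u(-w/z)$ and the normalizing constant $(1+t)(u;u)_\infty^2/(-u;u)_\infty$ is then a direct consequence. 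Note that, because the neutral charge lives in $\Z/2\Z$, the shift-mixing parameter $t$ enters only as an overall Bernoulli weight rather than through an elliptic function, which is why the pfaffian kernel is ultimately simpler than its determinantal counterpart in Theorem~\ref{thm:main_thm}.
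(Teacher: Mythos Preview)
Your proposal is correct and follows essentially the same route as the paper: express the correlation function as a trace over neutral Fock space with particle detectors $\tfrac12\phi_{k_\gamma}\phi^*_{k_\gamma}$, eliminate the $\Gamma^s_\pm$ operators by quasi-commutation to produce the $F^s$ factors, apply the finite-temperature Wick lemma for neutral fermions (Lemma~\ref{lem:wickgen_S}) to obtain the pfaffian, and identify the three blocks via $\phi\phi$, $\phi\phi^*$, $\phi^*\phi^*$ pairings; the propagator $\kappa^s$ is then evaluated, as you suggest, via the neutral boson--fermion correspondence (this is Proposition~\ref{prop:fermionic_expectations_S} in the paper). Two minor slips to clean up: the trace $\tr(u^{H_s}t^{C_s})$ equals $(1+t)(-u;u)_\infty$ (no $(u;u)_\infty$ in the denominator), and neutral fermion modes are indexed by $\Z$ rather than $\Z'$, so the ``bilateral series over half-integers'' remark needs adjusting; also be sure to retain the factor $\tfrac12$ on the detector, as it is responsible for the $2^{-1}$ in each kernel entry.
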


\begin{rem}
  If $u=0$ (in which case one can take $t=0$) we recover the pfaffian correlations of the shifted Schur process~\cite{vul2}. See also \cite{mat} for the case of the shifted Schur measure.
\end{rem} 

\begin{rem}
  The coefficient extractions in the kernel $K$ can be replaced by
  double contour integrals similar to those in
  Theorem~\ref{thm:main_thm}.
\end{rem}

\paragraph{Proof of Theorem~\ref{thm:main_thm_S}.}

We again use the formalism of free fermions, but this time in their
``neutral'' flavor. See Appendix~\ref{sec:fermions_S} for our
conventions and notations. Let us first sketch the proof of
Proposition~\ref{prop:fbz_S}. We write
\begin{equation}
  Z_s = \tr_0 \left( \usf^{H_s} \Gapl^s(\rho^{+}_1) \Gami^s(\rho_1^-) \dots \Gapl^s(\rho^+_N) \Gami^s(\rho^-_N) \right)
\end{equation}
where $\tr_0$ stands for trace over the subspace $\NF_0$ of even
grading. We then proceed as in Section~\ref{sec:fermions} to eliminate
the $\Gamma$-operators, with the following two minor
modifications. First, commuting the $\Gamma^s$ operators will yield
$Q(u^n \rho^+_i; \rho^-_j)$ factors instead of the $H$
factors. Second, at the end one is left with
$\tr_0 (u^{H_s}) = \prod_{i \geq 1} (1+u^i)$. For the shift-mixed
cylindric strict Schur process, the partition function is computed
similarly except one traces over all of $\NF = \NF_0 \oplus \NF_1$,
the result being $(1+t) Z_s$.

We now turn to the proof of Theorem~\ref{thm:main_thm_S} itself, which
is very similar to that of Theorem~\ref{thm:main_thm}. First we notice that
\begin{equation}
  \begin{split}
  \tilde{\varrho}^s(U) = &\frac{1}{(1+t) Z_s} \tr \left( u^{H_s} t^{C_s} \Gapl^s (\rho_1^+) \Gami^s (\rho_1^-) \cdots \Gapl^s (\rho_{i_1}^+) \left(\frac{1}{2} \phi_{k_1} \phi^*_{k_1}\right) \Gami^s (\rho_{i_1}^-) \cdots \right.\\
  &\left. \Gapl^s (\rho_{i_n}^+) \left( \frac{1}{2} \phi_{k_n} \phi^*_{k_n} \right) \Gami^s (\rho_{i_n}^-) \cdots \Gapl^s (\rho_N^+) \Gami^s (\rho_N^-) \right)
  \end{split}
\end{equation}
where the operators
$\frac{1}{2} \phi_{k_{\gamma}} \phi^*_{k_{\gamma}}$ measure whether
$\lambda^{(i_{\gamma})}$ has a part of size ${k_\gamma}$ and we recall
that $\phi^*_k = (-1)^k \phi_{-k}$ for $k>0$. By commuting out the
$\Gamma$-operators, we can rewrite the correlation function as
\begin{equation}
  \label{eq:rhotdef_S}
  \tilde{\varrho}^s(U) = 2^{-n} \frac{\tr \left( u^{H_s} t^{C_s} \Phi_{k_1}(i_1) \Phi^*_{k_1}(i_1) \cdots \Phi_{k_n}(i_n)
\Phi^*_{k_n}(i_n) \right)}{\tr(u^{H_s} t^{C_s}) }.
\end{equation}
with
\begin{equation}
  \Phi_k(i) := \Ad \left( \Gamma_+^s(\rho_i^\rightarrow) \Gamma_-^s(\rho_i^\leftarrow)^{-1} \right) \cdot \phi_k, \qquad \Phi^*_k(i) := \Ad \left( \Gamma^s_+(\rho_i^\rightarrow) \Gamma^s_-(\rho_i^\leftarrow)^{-1} \right) \cdot \phi^*_k
\end{equation}
where $\Ad$ denotes the Lie group adjoint action $\Ad(A) \cdot B := A B A^{-1}$ and where
\begin{equation}
  \label{eq:rhoarrowdef_S}
  \rho_i^\rightarrow := \bigcup_{\ell=1}^i \rho_i^+ \cup \bigcup_{n \geq 1} u^{n} \rho^+, \qquad
  \rho_i^\leftarrow := \bigcup_{\ell=i}^N \rho_i^- \cup \bigcup_{n \geq 1} u^{n} \rho^-.
\end{equation}

Each $\Phi_k(i)$ and $\Phi^*_k(i)$ is a linear combination of the $\phi_k$'s (recall that $\phi^*_k = (-1)^k \phi_{-k}$ is just a relabelling of $\phi_k$ added for convenience), a statement which follows easily if one passes to the generating field $\phi(z)$ and uses the commutation between $\Gamma^s_{\pm}$ and $\phi(z)$ given in~\eqref{eq:gampsi_S}. By Wick's lemma~\ref{lem:wickgen_S}, the left-hand side of equation~\eqref{eq:rhotdef_S} is a $2n \times 2n$ pfaffian of a certain antisymmetric matrix $K$ whose entries $K_{\gamma, \delta}$ for $\gamma < \delta$ are
\begin{equation}
\begin{split}
  \begin{cases}
  2^{-1}\tr \left( u^{H_s} t^{C_s} \Phi_{k_{\gamma}} (i_{\gamma}) \Phi_{k_{\delta}} (i_{\delta}) \right) / \tr (u^{H_s} t^{C_s}), &1 \leq \gamma < \delta \leq n,\\
  2^{-1}\tr \left( u^{H_s} t^{C_s} \Phi_{k_{\gamma}} (i_{\gamma}) \Phi^*_{k_{2n+1-\delta}} (i_{2n+1-\delta}) \right) / \tr (u^{H_s} t^{C_s}), &1 \leq \gamma \leq n < \delta \leq 2n,\\
  2^{-1}\tr \left( u^{H_s} t^{C_s} \Phi^*_{k_{2n+1-\delta}} (i_{2n+1-\delta}) \Phi^*_{k_{2n+1-\gamma}} (i_{2n+1-\gamma}) \right) / \tr (u^{H_s} t^{C_s}), &n < \gamma < \delta \leq 2n.
  \end{cases}
\end{split}
\end{equation} 
Finally one checks that, after passing to the generating field $\phi(z)$ via coefficient extraction, one has
\begin{equation}
  \begin{split}
    &\tr \left( u^{H_s} t^{C_s} \Phi_{k_{\gamma}} (i_{\gamma}) \Phi_{k_{\delta}} (i_{\delta}) \right) = \\
    &\qquad [z^{k_{\gamma}} w^{k_{\delta}}]  \tr \left (u^{H_s} t^{C_s} \Gapl^s (\rho_1^+) \cdots \phi (z) \cdots \phi(w) \cdots \Gami^s (\rho_N^-) \right), \qquad 1 \leq \gamma < \delta \leq n, \\
    & \tr \left( u^{H_s} t^{C_s} \Phi_{k_{\gamma}} (i_{\gamma}) \Phi^*_{k_{2n+1-\delta}} (i_{2n+1-\delta}) \right) = (-1)^{k_{2n+1-\delta}}  \times \\
    & \qquad [z^{k_{\gamma}} w^{-k_{2n+1-\delta}}] \tr \left (u^{H_s} t^{C_s} \Gapl^s (\rho_1^+) \cdots \phi (z) \cdots \phi(w) \cdots \Gami^s (\rho_N^-) \right), \qquad 1 \leq \gamma \leq n < \delta \leq 2n, \\
    & \tr \left( u^{H_s} t^{C_s} \Phi^*_{k_{2n+1-\delta}} (i_{2n+1-\delta}) \Phi^*_{k_{2n+1-\gamma}} (i_{2n+1-\gamma}) \right) = (-1)^{k_{2n+1-\gamma}+k_{2n+1-\delta}}  \times \\ 
    & \qquad [z^{-k_{2n+1-\gamma}} w^{-k_{2n+1-\delta}}] \tr \left (u^{H_s} t^{C_s} \Gapl^s (\rho_1^+) \cdots \phi (z) \cdots \phi(w) \cdots \Gapl^s(\rho_N^+) \right), \qquad n < \gamma < \delta \leq 2n. 
  \end{split}
  \end{equation}
  Note that the first equation above corresponds to $\phi \phi$ correlators, the second to $\phi \phi^*$ while the third to $\phi^* \phi^*$ correlators. This explains the presence of minus signs in the second and third equations as $\phi_{k} = (-1)^k \phi^*_k$. The kernel has been arranged in this way so as to make the $u=t=0$ limit correspond to the kernels of Matsumoto and Vuleti\'c~\cite{mat, vul2}. Finally, each of the above quantities inside a coefficient extraction can be computed in exact form by $\Gamma$-elimination. At the end one is left with $\tr (u^{H_s} t^{C_s} \phi(z) \phi(w) )$ which is given by Proposition~\ref{prop:fermionic_expectations_S}. Putting it all together, one arrives at the stated values for the entries of the correlation kernel. This concludes the proof of Theorem~\ref{thm:main_thm_S}. \qed

  We can also derive a strict analogue of Proposition~\ref{prop:ncont}. Recall that $\langle O \rangle := \tr (u^{H_s} t^{C_s} O) / \tr (u^{H_s} t^{C_s})$. We write the $n$-point correlation function as follows: 
  \begin{equation}
    \begin{split}
    \tilde{\varrho}^s(U) &= 2^{-n} \left[ \prod_{\ell=1}^n z_\ell^{k_\ell} (-1)^{k_\ell} w_\ell^{-k_\ell} \right]  \tr \left( u^{H_s} t^{C_s} \Gapl^s (\rho_1^+) \Gami^s (\rho_1^-) \cdots \Gapl^s (\rho_{i_1}^+) \phi(z_1) \phi(w_1) \Gami^s (\rho_{i_1}^-) \cdots \right.\\
    &\qquad \left. \Gapl^s (\rho_{i_n}^+) \phi(z_n) \phi(w_n) \Gami^s (\rho_{i_n}^-) \cdots \Gapl^s (\rho_N^+) \Gami^s (\rho_N^-) \right)/\tr \left( u^{H_s} t^{C_s} \right) \\
    &= 2^{-n} \left[ \prod_{\ell=1}^n z_\ell^{k_\ell} (-1)^{k_\ell} w_\ell^{-k_\ell} \right] \prod_{\ell=1}^n \left( F(i_\ell, z_\ell) F(i_\ell, w_\ell) \right) \cdot \langle \phi(z_1) \phi(w_1) \cdots \phi(z_n) \phi(w_n) \rangle
    \end{split}
  \end{equation}
where the second equality follows from the first after cyclically moving the $\Gamma^s$ operators out of the way and picking up the corresponding $F$ factors and where brackets stand for coefficient extraction. Using the pfaffian evaluation from Proposition~\ref{prop:pf_strict_eval} and replacing coefficient extraction with contour integration we arrive at the following result.
\begin{prop}
\label{prop:ncont_S}
The $n$-point correlation has the form
\begin{multline}
    \tilde{\varrho}^s(U) = \frac{(1+t)^n (u;u)_{\infty}^{2n}}{2^n (-u;u)_{\infty}^{n}} \oint \cdots \oint \prod_{\ell=1}^n \left( \frac{ d z_\ell d w_\ell F(i_\ell, z_\ell) F(i_\ell, w_\ell)}{(2 \pi \im)^2 (-1)^{k_\ell} z^{k_\ell + 1}  w^{-k_\ell + 1}} \cdot \frac{\theta_u \left( \frac{z_\ell}{w_\ell} \right)}{\theta_u \left( -\frac{z_\ell}{w_\ell} \right)} \right) \times \\ \prod_{1 \leq \ell < m \leq n} \frac{\theta_u \left( \frac{z_\ell}{z_m} \right) \theta_u \left( \frac{w_\ell}{w_m} \right) \theta_u \left( \frac{z_\ell}{w_m} \right) \theta_u \left( \frac{w_\ell}{z_m} \right)} {\theta_u \left( -\frac{z_\ell}{z_m} \right) \theta_u \left( -\frac{w_\ell}{w_m} \right) \theta_u \left( -\frac{z_\ell}{w_m} \right) \theta_u \left( -\frac{w_\ell}{z_m} \right)}
\end{multline}
where the conditions on the contours are the same as in Proposition~\ref{prop:ncont}.
\end{prop}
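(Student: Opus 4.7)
The plan is to mirror the derivation of Proposition~\ref{prop:ncont} from the non-strict case, but now inside the neutral fermion formalism of Appendix~\ref{sec:fermions_S}. First I would rewrite the shift-mixed correlation $\tilde{\varrho}^s(U)$ as $(1+t)^{-1} Z_s^{-1}$ times the trace over all of $\NF$ of $u^{H_s} t^{C_s}$ multiplied by the alternating string of $\Gapl^s(\rho_j^+), \Gami^s(\rho_j^-)$ operators, interspersed at the positions prescribed by $U$ with pairs $\tfrac{1}{2}\phi_{k_\ell}\phi^*_{k_\ell}$ measuring whether $\lambda^{(i_\ell)}$ has a part of size $k_\ell$. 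Promoting each such pair to the generating form $\phi(z_\ell)\phi(w_\ell)$ modulo the coefficient extraction $[z_\ell^{k_\ell}(-1)^{k_\ell}w_\ell^{-k_\ell}]$ (via $\phi^*_k=(-1)^k\phi_{-k}$) yields precisely the starting expression displayed just before the proposition.

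Second, I would perform $\Gamma$-elimination exactly as in Section~\ref{sec:fermions}: cyclically push each $\Gapl^s(\rho_j^+)$ rightward through every operator it meets until it returns to its original slot, and dually for the $\Gami^s$ to the left, using the quasi-commutation relations~\eqref{eq:gampsi_S}. The scalar factors produced by commutations with the fermionic fields aggregate---after summing the geometric series in $u^n$---into exactly the products $F^s(i_\ell,z_\ell)F^s(i_\ell,w_\ell)$ appearing in the claim (this is the same bookkeeping already done in the proof of Theorem~\ref{thm:main_thm_S}), while the $Q(u^m \rho_i^+;\rho_j^-)$ factors coming from $\Gamma^s_+/\Gamma^s_-$ commutations cancel against the normalization $Z_s$ supplied by Proposition~\ref{prop:fbz_S}. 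What remains inside the coefficient extraction is the pure $2n$-point neutral fermionic correlator $\langle \phi(z_1)\phi(w_1)\cdots \phi(z_n)\phi(w_n)\rangle$ in the grand canonical ensemble.

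Third, I would invoke the neutral Wick lemma to express this correlator as a pfaffian of two-point functions $\langle \phi(z)\phi(w)\rangle$ and then apply Proposition~\ref{prop:pf_strict_eval}---the strict/elliptic analogue of the Frobenius determinant identity---to collapse the pfaffian into the product of theta-function ratios in the claim, bringing along one factor of the $\kappa^s$-prefactor $(1+t)(u;u)_\infty^{2}(-u;u)_\infty^{-1}$ from~\eqref{eq:Fkap2b_S} for each of the $n$ diagonal pairs, hence the overall $(1+t)^n(u;u)_\infty^{2n}(-u;u)_\infty^{-n}$. Finally, the coefficient extractions are re-expressed as $n$ double contour integrals $(2\pi\im)^{-2n}\oint\cdots\oint$ of the resulting integrand divided by $(-1)^{k_\ell}z_\ell^{k_\ell+1}w_\ell^{-k_\ell+1}$.

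The main obstacle is the analytic bookkeeping for the contours. The two-point function $\langle\phi(z)\phi(w)\rangle$ is only analytic in the annulus $u^{1/2}<|w|<|z|<u^{-1/2}$, so one must choose nested radii that realize this condition for every one of the $\binom{2n}{2}$ pairs of generating fields simultaneously---exactly the nesting prescription inherited from Proposition~\ref{prop:ncont}---while also staying within the annulus $R^{-1}<|\cdot|<R$ where the $F^s(i_\ell,\cdot)$ are analytic and nonzero. Verifying that the Wick-ordered region of convergence matches the domain on which the closed-form theta product of Proposition~\ref{prop:pf_strict_eval} is meromorphic, and that all signs coming from the $(-1)^{k_\ell}$ and from the antisymmetry of the pfaffian are consistent, is the only genuinely delicate point; once this is done, all remaining manipulations are routine coefficient extractions.
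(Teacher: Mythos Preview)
Your proposal is correct and follows essentially the same route as the paper: rewrite $\tilde{\varrho}^s(U)$ as a coefficient extraction of a trace with inserted $\phi(z_\ell)\phi(w_\ell)$ fields, perform $\Gamma$-elimination to produce the $F^s$ factors and reduce to the bare $2n$-point neutral fermionic correlator, evaluate the latter via Wick's lemma together with the pfaffian identity of Proposition~\ref{prop:pf_strict_eval}, and finally convert coefficient extraction to nested contour integrals with the same conditions as in Proposition~\ref{prop:ncont}. The only cosmetic difference is that the paper invokes Proposition~\ref{prop:pf_strict_eval} directly on the correlator (the Wick step being absorbed into its proof), whereas you spell out the Wick step explicitly before applying the identity.
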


\section{Conclusion} \label{sec:conclusion}

We conclude with a few remarks about how this work fits in the bigger
picture. Let us first mention that this paper is part of a project to
analyze Schur processes with ``nonstandard'' boundary conditions. In
another paper written in collaboration with Peter Nejjar and Mirjana
Vuleti\'c, we introduced the Schur process with free
boundaries~\cite{bbnv}. For the edge behavior of this process, we
expect to encounter pfaffian variants of the finite-temperature Airy
process~\cite{bbnv2}. Note that this is a priori unrelated to the
pfaffian process considered in Section~\ref{sec:corr_S}, which we
discuss below.

As said in Remark~\ref{rem:universality}, the finite-temperature Airy
kernel should be the universal scaling limit of the periodic Schur
process around a generic point at the edge of the limit shape. It is
however known that, in the zero-temperature case, there exist other
scaling limits (e.g., around nongeneric points such as cusps) and other specializations (e.g., those of $z$-measures) which lead to
interesting objects like the Pearcey kernel~\cite{or2}, the cusp Airy kernel~\cite{or3}, the
hypergeometric kernel~\cite{bo2}, etc. It is natural to ask whether these kernels have
finite-temperature variants.

We have defined in Section~\ref{sec:cylPlanchproc} the stationary
cylindric Plancherel process through its finite-dimensional marginals,
but we believe that it admits alternative nice
combinatorial/probabilistic constructions via variants of the
Robinson-Schensted correspondence (as used in~\cite{BO06} for the
zero-temperature limit), Hammersley's process~\cite{AD95} or the
polynuclear growth (PNG) model~\cite{PS02}. We might return to this
subject in another publication.

On a related aspect, it would be nice to devise a sampling
algorithm---for the stationary cylindric Plancherel process or for the
periodic Schur process in general---as was done in
\cite{bordyn,BBBCCV} for the zero-temperature case. In fact, we
believe that the approach of~\cite{BBBCCV} may be adapted
straightforwardly (but we have not implemented it yet), building on
ideas present in~\cite{Langer,bcc}.

Our characterization in Section~\ref{sec:corr_S} of the correlation
functions of the periodic strict Schur process raises the question
whether they admit new (bulk or edge) scaling limits. For the edge, we
expect to still get generically the finite-temperature Airy process
(pfaffians reducing to determinants in the limit). Bulk limits might
be more interesting and involve pfaffian variants of the kernels
obtained in~\cite{B2007cyl}. See~\cite{vul2} for the zero-temperature
case.

As said in the introduction, $M_{\alpha}$ has first appeared in edge
asymptotics of the Moshe--Neuberger--Shapiro (MNS) matrix
model~\cite{joh3}, 1D free fermions in a harmonic
trap~\cite{DLMS15,DLMS16,lw} and nonintersecting Ornstein--Uhlenbeck
processes on a cylinder~\cite{dms}. All these models are different
facets of the same continuous space-time finite-temperature
free-fermionic model. Let us point out that the free fermions
used here are slightly different from those considered in the above
references. They correspond essentially to an effective theory when
the number of fermions gets large.

Perhaps coincidentally, up to reparametrization, $M_{\alpha}$ has also
appeared in non-free-fermionic models and at the level of the KPZ
equation itself, as can be seen in the works of Amir--Corwin--Quastel
and Sasamoto--Spohn (see~\cite{acq, ss}, and references therein) where
it appears in the scaling of the weakly asymmetric exclusion process,
and in works of Calabrese--Le Doussal--Rosso, Borodin--Corwin--Ferrari
and Imamura--Sasamoto~\cite{cdr, bcf, is} regarding fluctuations of
the free energy of the O'Connell--Yor directed polymer. This
coincidence was noticed in~\cite{DLMS15} and further discussed
in~\cite{DLMS16,lw,cg} but, to the best of our knowledge, its
fundamental origin is not understood.

Finally, we note that the approach of Baik, Deift and
Johansson~\cite{bdj} for the edge behavior of Plancherel random
partitions was based on Riemann-Hilbert techniques, while the approach
that we follow here is much closer in spirit to that developed by
Borodin, Okounkov and Olshanski~\cite{boo}. It seems an interesting
technical challenge to adapt the Riemann-Hilbert approach to analyze
the Fredholm determinant $F_\alpha$, even though its asymptotics may
be studied through other techniques---see~\cite{cg} and references
therein. We might return to this question in the future.

\appendix

\section{Eta and theta functions}
\label{sec:theta}

Fix $q$ a complex parameter of modulus less than 1. The $q$-Pochhammer
symbol of argument $z$ and length $n \in \N \cup \{ \infty \}$, the
\emph{multiplicative} theta function and the Dedekind eta function are respectively
defined as
\begin{equation}
  (z;q)_n := \prod_{k =0}^n (1-z q^k), \qquad \theta_q(z):= (z; q)_{\infty} (q/z; q)_{\infty}, \qquad \eta(q) := q^{\frac{1}{24}} (q;q)_\infty.
\end{equation}
We also introduce the (``additive'') Jacobi theta functions $\theta_3$
and $\theta_1$:
\begin{equation}
  \label{eq:theta31def}
  \theta_3 (z; q) := \sum_{n \in \Z} q^{\frac{n^2}{2}} z^n, \qquad
  \theta_1 (z; q) := \frac{1}{\im} \sum_{n \in \Z} (-1)^n q^{\frac{(n+1/2)^2}{2}} z^{n+\frac{1}{2}} = \frac{q^{\frac{1}{8}} z^{\frac{1}{2}}}{\im} \theta_3 (-q^{1/2} z; q) .
\end{equation}
These conventions differ by the change $q \to q^{1/2}$ from those in
\cite[p.355]{Bateman} and that for $\theta_1$ is also slightly
different from \cite[p.411]{B2007cyl} (the $\im$ factor makes
$\theta_1(z;q)$ real for $q$ real and $|z|=1$).
The Jacobi triple product identity can then be written in the two
equivalent forms
\begin{equation} \label{eq:jtp}
  \theta_3(z;q) = (q;q)_\infty \theta_q(-q^{1/2} z), \qquad
  \theta_1(z;q) = \im q^{\frac{1}{8}} z^{-\frac{1}{2}} 
  (q;q)_\infty \theta_q (z)
\end{equation}
where in the last equation we have used the relation $\theta_q(qz) = \theta_q(1/z) = -(1/z) \theta_q(z)$.

\section{Basics of charged free fermions}
\label{sec:fermrem}

We recall the definitions necessary for
Section~\ref{sec:fermions}---see also \cite[Section~3]{bbnv} and
references therein. The \emph{fermionic Fock space}, denoted $\F$, may
be seen as the infinite dimensional Hilbert space with orthonormal
basis indexed by Maya diagrams, as defined in
Section~\ref{sec:fermidirac}. We use the bra--ket notation throughout,
hence denote respectively by $\bra{\underline{n}}$ and
$\ket{\underline{n}}$ the bra and ket associated with the Maya diagram
$\underline{n}$. The charge $C$ and energy $H$, as defined in
\eqref{eq:CHdef}, are naturally promoted as diagonal operators on
$\F$. We also defined the shift operator $R$, that shifts a Maya diagram
one unit to the right.

\paragraph{Creation/annihilation operators.}
For $k \in \Z'$, the fermionic \emph{creation and annihilation operators} $\psi_k$ and $\psi^*_k$ are defined by
\begin{equation}
  \label{eq:psidef}
  \psi_k \ket{\underline{n}} :=
  \begin{cases}
    0 & \text{if $n_k=1$}, \\
    (-1)^{\sum_{j>k} n_j} \ket{\underline{n}^{(k)}}, & \text{if $n_k=0$}, \\
  \end{cases} \qquad
  \psi_k^* \ket{\underline{n}} :=
  \begin{cases}
    (-1)^{\sum_{j>k} n_j} \ket{\underline{n}^{(k)}},  & \text{if $n_k=1$}, \\
    0 & \text{if $n_k=0$}, \\
  \end{cases}
\end{equation}
where $\underline{n}^{(k)}$ denotes the Maya diagram obtained from
$\underline{n}$ by inverting the value found at position $k$. Because
of the signs, we have the \emph{canonical anticommutation relations}
\begin{equation}
  \label{eq:car}
  \{ \psi_k, \psi_\ell^* \} = \delta_{k,\ell}, \qquad
  \{ \psi_k, \psi_\ell \} = \{ \psi_k^*, \psi_\ell^* \} = 0, \qquad
  k,\ell \in \Z'
\end{equation}
where $\{a,b\}:=ab+ba$ denotes the anticommutator. We also define the generating series
\begin{equation}
 \label{eq:psigendef}
 \psi(z) := \sum_{k \in \Z'} \psi_k z^k, \qquad
 \psi^*(w) := \sum_{k \in \Z'} \psi^*_k w^{-k}.
\end{equation}
Note that the charge and energy operators can be rewritten as
bilinears in the creation/annihilation operators, namely
\begin{equation}
  C = \sum_{k \in \Z'} :\psi_k \psi^*_k:,\qquad H = \sum_{k \in \Z'} k :\psi_k \psi^*_k:, \qquad
  :\psi_k \psi^*_k:\ := 
  \begin{cases}
    \psi_k \psi^*_k,& \text{ if } k > 0,\\
    -\psi_k^* \psi_k,& \text{ if } k < 0.
  \end{cases}
\end{equation}
Here it is convenient to introduce the normal ordering $:\cdot:$ with
respect to the vacuum so as to have well-defined single sums.  Such
bilinear quantities are essential to free field theory. A
manifestation of this is Wick's lemma at finite temperature given below.

\paragraph{$\Gamma$-operators.} For $\rho$ a specialization with
generating function $H(\rho,\cdot)$ as defined in \eqref{eq:Hrhodef},
we consider the sequence $(p_n(\rho))_{n \geq 1}$ of \emph{power sums
  specialized at $\rho$} given by
\begin{equation}
  \sum_{n \geq 1} \frac{p_n(\rho)}{n} z^n :=\ln H(\rho;z).
\end{equation}
The \emph{half-vertex operators} $\Gamma_\pm(\rho)$ are then defined by
\begin{equation}
  \label{eq:gamalpdef}
  \Gamma_\pm(\rho) := \exp \left( \sum_{n \geq 1} \frac{p_n(\rho) \alpha_{\pm n}}{n} \right), \qquad
  \alpha_n := \sum_{k \in \Z'} \psi_{k-n} \psi_k^*.
\end{equation}
If $x$ is a variable, we denote by $\Gamma_\pm(x)$ (respectively
$\Gamma'_\pm(x)$) the half-vertex operators for the specialization in
the single variable $x$ (respectively its dual $\bar{x}$), for which
$p_n(x)=x^n$ (respectively $p_n(\bar{x})=(-1)^{n-1} x^n$). It is well-known that
\begin{equation}
  \label{eq:schurelem}
  \bra{\underline{n}} \Gamma_+(\rho) \ket{\underline{m}} =
  \bra{\underline{m}} \Gamma_-(\rho) \ket{\underline{n}} =
  \begin{cases}
    s_{\mu/\nu}(\rho) & \text{if $c=c'$,} \\
    0 & \text{otherwise,}
  \end{cases}
\end{equation}
where $(\nu,c)$ and $(\mu,c')$ are the charged partitions associated
with the Maya diagrams $\underline{n}$ and $\underline{m}$ respectively,
via the correspondence of Section~\ref{sec:fermidirac}. The
$\Gamma$-operators commute with the charge ($C$) and shift ($R$)
operators and satisfy the following quasi-commutation relations
\begin{equation}
  \label{eq:gamcomm}
  \begin{split}
  \Gamma_\pm(\rho) \psi(z) = H(\rho;z^{\pm 1}) \psi(z) \Gamma_\pm(\rho), &\qquad
  \Gamma_\pm(\rho) \psi^*(w) = H(\rho;w^{\pm 1})^{-1} \psi^*(w) \Gamma_\pm(\rho),\\
  \Gamma_+(\rho) \Gamma_-(\rho') = H(\rho;\rho')
  \Gamma_-(\rho') \Gamma_+(\rho), &\qquad
  H(\rho;\rho') := \exp \left( \sum_{n \geq 1} \frac{p_n(\rho) p_n(\rho')}{n} \right),\\
  \Gamma_{\pm}(\rho) u^H = u^H \Gamma_{\pm}(u^{\pm 1} \rho), &\qquad
  H(u^{\pm 1} \rho;z):= H(\rho;u^{\pm 1} z).
\end{split}
\end{equation}
We may reconstruct the creation/annihilation operators from the
$\Gamma$-operators as
\begin{equation}
  \label{eq:boson_fermion}
  \psi(z) = z^{C-\frac{1}{2}} R\, \Gami(z) \Gatpl\left(-z^{-1}\right),\qquad
  \psi^*(w) = R^{-1} w^{-C+\frac{1}{2}} \Gatmi(-w) \Gapl\left(w^{-1}\right),
\end{equation}
a result often referred to as the \emph{boson--fermion correspondence}.

\paragraph{Wick's lemma.}

For $u \in (0,1)$ and $t > 0$, we denote by $\evut{\cdot}$ the grand
canonical expectation value defined by
$\evut{O} := \tr (t^C u^H O)/\tr( t^C u^H)$ for any operator $O$
acting on Fock space.  Then we have the following
``finite-temperature'' version of Wick's lemma.

\begin{lem}
  \label{lem:wickgen}
  Let $\varPsi$ be the vector space spanned by (possibly infinite)
  linear combinations of the $\psi_k$ and $\psi_k^*$, $k \in \Z'$.
  For $\varphi_1,\ldots,\varphi_{2n} \in \varPsi$, we have
  \begin{equation}
    \label{eq:wickgen}
    \evut{\vf_1 \cdots \vf_{2n}} = \pf A
  \end{equation}
  where $A$ is the $2n \times 2n$ antisymmetric matrix defined by
  $A_{i,j}=\evut{\vf_i \vf_{j}}$ for $i<j$.

  In particular, if $\varphi_{2i-1}$ (respectively $\varphi_{2i}$) is a
  linear combination of the $\psi_k$'s only (respectively the $\psi^*_k$'s
  only) for all $i=1,\ldots,n$, we have
  \begin{equation}
    \label{eq:wickgen_det}
    \evut{\vf_1 \cdots \vf_{2n}} = \det_{1 \leq i,j \leq n} \evut{\mathcal{T}(\vf_{2i-1},\vf_{2j})}
  \end{equation}
  where $\mathcal{T}$ is the ``time-ordered product'':
  $\mathcal{T}(\vf_{2i-1},\vf_{2j})=\vf_{2i-1} \vf_{2j}$ for
  $i \leq j$ and
  $\mathcal{T}(\vf_{2i-1},\vf_{2j})=-\vf_{2j} \vf_{2i-1}$ for
  $i > j$.
\end{lem}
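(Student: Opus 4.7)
The plan is to prove \eqref{eq:wickgen} by induction on $n$, via a grand canonical adaptation of the standard anticommutation-based derivation of Wick's theorem, and then to deduce \eqref{eq:wickgen_det} as a specialization. The key ingredient is the conjugation action $\Lambda := \Ad(t^C u^H)$ on $\varPsi$: the relations $[H,\psi_k]=k\psi_k$, $[C,\psi_k]=\psi_k$ (and their duals for $\psi^*_k$) yield $\Lambda(\psi_k) = tu^k \psi_k$ and $\Lambda(\psi^*_k) = t^{-1}u^{-k}\psi^*_k$, so $\Lambda$ is diagonal with strictly positive eigenvalues on the canonical basis and the endomorphism $1+\Lambda^{-1}$ is invertible on $\varPsi$.

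I would then peel off $\vf_1$ by repeated use of \eqref{eq:car}, which (using that each $\{\vf_i,\vf_j\}$ is a c-number) gives
\begin{equation*}
  \vf_1 \vf_2 \cdots \vf_{2n} = \sum_{j=2}^{2n} (-1)^{j} \{\vf_1, \vf_j\} \, \vf_2 \cdots \hat{\vf}_j \cdots \vf_{2n} - \vf_2 \cdots \vf_{2n} \vf_1.
\end{equation*}
Taking the trace against $t^C u^H$ and using cyclicity combined with $\vf_1 (t^C u^H) = (t^C u^H) \Lambda^{-1}(\vf_1)$, the last term becomes $-\tr(t^C u^H \Lambda^{-1}(\vf_1)\vf_2\cdots\vf_{2n})$. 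Moving it to the left and dividing by $\tr(t^C u^H)$ yields, after replacing $\vf_1$ by $(1+\Lambda^{-1})^{-1}(\vf_1)$,
\begin{equation*}
  \evut{\vf_1 \cdots \vf_{2n}} = \sum_{j=2}^{2n} (-1)^{j} \bigl\{(1+\Lambda^{-1})^{-1}(\vf_1), \vf_j\bigr\} \, \evut{\vf_2 \cdots \hat{\vf}_j \cdots \vf_{2n}}.
\end{equation*}
A direct check on basis elements using \eqref{eq:propag} shows that $\{(1+\Lambda^{-1})^{-1}(\vf_1),\vf_j\} = \evut{\vf_1 \vf_j}$; for instance $\{\psi_k/(1+t^{-1}u^{-k}), \psi^*_\ell\} = \delta_{k,\ell}/(1+t^{-1}u^{-k}) = \evut{\psi_k \psi^*_\ell}$. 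The resulting recursion is then exactly the cofactor expansion of $\pf A$ along its first row, and induction on $n$ (with trivial base case $n=1$) concludes \eqref{eq:wickgen}.

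For the determinantal specialization \eqref{eq:wickgen_det}, I would use charge conservation: since $t^C u^H$ commutes with $C$, the correlators $\evut{\vf_{2i-1} \vf_{2j-1}}$ and $\evut{\vf_{2i} \vf_{2j}}$ vanish because their two operators would shift the charge by $\pm 2$. Reordering the indices as $(1,3,\ldots,2n-1,2,4,\ldots,2n)$---a permutation of sign $(-1)^{n(n-1)/2}$---puts $A$ in block-antidiagonal form with $n\times n$ off-diagonal block $M_{ij} = \evut{\mathcal{T}(\vf_{2i-1},\vf_{2j})}$; the time-ordering $\mathcal{T}$ accounts precisely for the sign flip between $A_{2i-1,2j}$ and $A_{2j,2i-1}$ when $i>j$. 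The standard block pfaffian identity produces another factor $(-1)^{n(n-1)/2}$ which cancels the reordering sign, so that $\pf A = \det M$.

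The main obstacle I anticipate is not conceptual but analytical. To apply the lemma to generating series such as $\psi(z),\psi^*(w)$ one really needs $\vf_i$ to be a genuine infinite linear combination, and the manipulations above (cyclicity of the trace, exchange of sums with trace, invertibility of $1+\Lambda^{-1}$) must be justified as absolutely convergent operations. This is manageable because $t^C u^H$ is trace-class on each charge sector for $u \in (0,1),\,t>0$ and $\Lambda$ acts diagonally with positive eigenvalues, but it deserves an explicit preliminary check before the formal recursion above can be rigorously invoked.
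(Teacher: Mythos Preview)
Your proof is correct and follows essentially the same route as the paper: peel off $\vf_1$ via the anticommutation relations, use cyclicity of the trace together with the conjugation $t^C u^H \vf_1 = \Lambda(\vf_1)\, t^C u^H$, identify the anticommutator $\{\vf_1,\vf_j\}$ with $(1+c_1)\evut{\vf_1\vf_j}$, and conclude by the Pfaffian row-expansion recursion. The only cosmetic difference is that the paper first reduces by multilinearity to the case where each $\vf_i$ is a single $\psi_{k_i}$ or $\psi^*_{k_i}$ (so that $\Lambda^{-1}(\vf_1)=c_1\vf_1$ is a scalar multiple), whereas you keep $\vf_1$ general and invert the operator $1+\Lambda^{-1}$; your determinantal reduction is also spelled out with slightly more care about the block-Pfaffian sign than the paper's one-line statement.
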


The more usual Wick's lemma at zero temperature corresponds to the
case $u=0$, for which $\evut{\cdot}$ reduces to the vacuum
expectation value $\bra{\emptyset} \cdot \ket{\emptyset}$. For
convenience we provide a proof of Lemma~\ref{lem:wickgen}, which seems
basically due to Gaudin~\cite{Gaudin60}. 

\begin{proof} 
  Let us introduce the \emph{density matrix}
  $D:=t^C u^H / \tr(t^C u^H)$ so that $\evut{O}=\tr(D O)$ for any
  operator $O$.  By direct computation we have
\begin{equation}
  \label{eq:rho_phi_comm}
  D \psi_k = t u^k \psi_k D,\qquad D \psi^*_k = (t u^{k})^{-1} \psi^*_k D. 
\end{equation}
By multilinearity, it suffices to prove~\eqref{eq:wickgen} when each
$\vf_i$ is equal to either $\psi_{k_i}$ or $\psi^*_{k_i}$ for some
$k_i$. The left hand side of~\eqref{eq:wickgen} can be telescopically
rewritten as:
\begin{equation}
  \label{eq:trrew}
  \begin{split}
    \tr(D \vf_1 \cdots \vf_{2n}) &= \sum_{i=2}^{2n} (-1)^{i} \tr (D \vf_2 \cdots \vf_{i-1}\{\vf_1, \vf_i \} \vf_{i+1} \cdots \vf_{2n}) - \tr(D \vf_2 \vf_3 \cdots \vf_{2n} \vf_1) \\
    &= \sum_{i=2}^{2n} (-1)^{i} \{ \vf_1, \vf_i \} \tr (D \vf_2 \cdots \vf_{i-1} \vf_{i+1} \cdots \vf_{2n}) - c_1 \tr(D \vf_1 \vf_2 \vf_3 \cdots \vf_{2n}) \\
  \end{split}
\end{equation}
where $c_1 = (t u^{k_1})^{-1}$ (respectively $=t u^{k_1}$) if
$\vf_1 = \psi_{k_1}$ (respectively $=\psi^*_{k_1}$)---to pass to the
second line we use the fact that, by the canonical anticommutation
relations, $\{ \vf_1, \vf_i \}$ is a scalar, and we use cyclicity and
\eqref{eq:rho_phi_comm} to rewrite the rightmost trace.
In particular, for $n=1$, \eqref{eq:trrew} yields
\begin{equation}
  \label{eq:anticomcor}
 \{\vf_1, \vf_2\} = (1+c_1) \evut{\vf_1 \vf_2}
\end{equation}
which of course still holds when replacing $\vf_2$ by $\vf_i$ for any $i$.
We deduce the recursion relation
\begin{equation}
  \label{eq:pfrec}
  \evut{\vf_1 \cdots \vf_{2n}} = \sum_{i=2}^{2n} (-1)^{i} \evut{\vf_1 \vf_i}
  \evut{\vf_2 \cdots \vf_{i-1} \vf_{i+1} \cdots \vf_{2n}}.
\end{equation}
The proof of \eqref{eq:wickgen} is then done by induction: it is a
tautology for $n=1$, and assuming that it holds at rank $n-1$, we have
$\evut{\vf_2 \cdots \vf_{i-1} \vf_{i+1} \cdots \vf_{2n}}=\pf A^{(1,
  i)}$ where $A^{(1, i)}$ is the $(2n-2)\times (2n-2)$ submatrix of
$A$ with the first and $i$-th rows and columns removed. We recognize
in the right-hand side of \eqref{eq:pfrec} the expansion of the
pfaffian of $A$ with respect to the first row/column, and conclude
that \eqref{eq:wickgen} holds at rank~$n$.

In the case where $\varphi_{2i-1}$ (respectively $\varphi_{2i}$) is a linear
combination of the $\psi_k$'s only (respectively the $\psi^*_k$'s only) for
all $i=1,\ldots,n$, $A_{i,j}$ vanishes whenever $i$ and $j$ have the
same parity, hence $\pf A = \det_{1 \leq i,j \leq n} A_{2i-1,2j}$
which is equivalent to the stated form \eqref{eq:wickgen_det}.
\end{proof}

\begin{rem} The main ingredients of the proof are the facts that (i)
  the anti-commutator of any two elements of $\Psi$ is a scalar, and
  (ii) the density matrix $D$ is the exponential of a bilinear
  combination of elements of $\Psi$. Here we assume that $D$ has a
  diagonal form, which simplifies the proof but is not necessary---see 
  e.g.~\cite[Ch.~4 and P4.1]{BlaizotRipka}. The canonical density
  matrix $\Pi_0 u^H$ does not satisfy the property (ii), which is why
  we need to pass to the grand canonical ensemble to have
  determinantal correlations.
\end{rem}

\section{Basics of neutral free fermions} 
\label{sec:fermions_S}

We recall the theory of neutral free fermions, useful for the study of
the strict Schur process in \cite{vul2}, following the conventions of
\cite[Sections 1.4 and 3.8]{whe} (note they differ by a factor of 2
from the conventions of \cite[Exercise 14.13]{kac} and by a factor of
4 from those of \cite{mat, vul}). We give a more detailed outline---neutral fermions being less often used in the literature---than the one in Appendix~\ref{sec:fermrem} on which we rely in our construction of neutral fermions.

\paragraph{Neutral Fock space and fermionic operators.}

We begin by constructing the \emph{neutral fermionic Fock space} $\NF$
in analogy with how we constructed $\F$.  We start with the
\emph{neutral fermionic operators}, which we define in terms of the
charged fermionic operators $\psi_k, \psi^*_k$ as follows:
\begin{equation}
  \phi_i = \psi_{i+\frac{1}{2}} + (-1)^i \psi^*_{-i+\frac{1}{2}}, \qquad i \in \Z
\end{equation}
(notice how we have switched the indexing from $\Z'$ to $\Z$). They form the \emph{neutral fermionic field}
\begin{equation}
  \phi(z) = \sum_{n \in \Z} \phi_n z^n
\end{equation}
and satisfy the following canonical anticommutation relations:
\begin{equation}
  \label{eq:car_S}
  \{\phi_i, \phi_j\} = 2 (-1)^j \delta_{i, -j}
\end{equation}
which in particular implies 
\begin{equation}
  \phi_i^2 = \delta_{i, 0}.
\end{equation}
Let us further define
\begin{equation}
  \phi^*_i = (-1)^i \phi_{-i}, \qquad i > 0.
\end{equation}
Given an ordered subset $\{ \mu_1 > \cdots > \mu_r \geq 0 \} \subset \N$, a basis for $\NF$ (respectively its dual $\NF^*$) is given by states of the form
\begin{equation}
  \phi_{\mu_1} \cdots \phi_{\mu_r} \vv, \qquad \vcv \phi_{\mu_r}^* \cdots \phi_{\mu_1}^*
\end{equation}
which can be naturally identified with pairs $\ket{\mu, c}$ (respectively $\bra{\mu, c}$) where $\mu$ is a \emph{strict partition} (a partition with all its parts distinct) and $c = 0$ if $r$ is even, and 1 otherwise. The number $c \in \Z_2$ can be seen as an even/odd grading $\NF = \NF_0 \oplus \NF_1$, based on whether a basis state comes from an even/odd number of fermions acting on the vacuum. For example, the vectors $\phi_3 \phi_2 \phi_0 \vv =: \ket{(3,2), 1} \in \NF_1, \phi_3 \phi_2 \vv =: \ket{(3,2), 0} \in \NF_0$ correspond to the subsets $\{ 3,2,0 \}, \{ 3,2 \} \subset \N$ and both to the same strict partition $(3,2)$. One can think of basis states in $\NF$ aforedescribed as collections of finitely many particles sitting on the $\N$ lattice.

Note $\phi^*_k$ for $k>0$ is the adjoint of $\phi_k$ under the inner product 
$\langle \lambda, c | \mu, d \rangle = 2^{\ell(\lambda)} \delta_{\lambda, \mu} \delta_{c, d}$ for $\lambda, \mu$ strict partitions and 
$c,d \in \Z_2$. We have $\phi_{-n} \vv = \phi_n^* \vv = 0 = \vcv \phi_{n} = \vcv \phi_{-n}^* \vv$ for all $n > 0$.

We denote $\bra{\lambda^{\vee}, c} = 2^{-\ell(\lambda)} \bra{\lambda, c}$ so that 
$\langle \lambda^{\vee}, c | \mu, d \rangle = \delta_{\lambda, \mu} \delta_{c, d}$.

The \emph{energy operator} $H_s$ and \emph{grading operator} $C_s$ (the analogue of the charge operator $C$) are given by (the subscript standing for $strict$)
\begin{equation}
  \label{eq:chdef_S}
  u^{H_s} \ket{\lambda, c} := u^{|\lambda|} \ket{\lambda, c}, \qquad t^{C_s} \ket{\lambda, c} := t^c \ket{\lambda, c}.
\end{equation}

For $k > 0$ we have 
\begin{equation}
  \frac{1}{2} \phi_k \phi_k^* \ket{\lambda, c} = \mathbbm{1}_{\lambda \text{ has a part of size } k} \ket{\lambda,c}.
\end{equation}

We also have a (finite-temperature) Wick lemma for neutral fermions similar
to Lemma~\ref{lem:wickgen}:
\begin{lem}
  \label{lem:wickgen_S}
  Let $\varPhi$ be the vector space spanned by (possibly infinite linear combinations of) the $\phi_k$'s. For $\varphi_1,\ldots,\varphi_{2n} \in \varPhi$, we have
  \begin{equation}
    \label{eq:wickgen_S} 
    \tr( u^{H_s} t^{C_s} \varphi_1 \cdots \varphi_{2n} ) = \pf A
  \end{equation}
  where $A$ is the antisymmetric matrix defined by 
  $A_{i,j}=\tr \left( u^{H_s} t^{C_s} \vf_i \vf_{j} \right) / \tr (u^{H_s} t^{C_s})$ for $i<j$.
\end{lem}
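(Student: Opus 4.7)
The plan is to mirror the proof of Lemma~\ref{lem:wickgen}, adapted to the pfaffian setting: use the anticommutation-based telescoping identity, combined with cyclicity of the trace, and close by induction on $n$. By multilinearity of both sides in each argument, it suffices to treat $\vf_i = \phi_{k_i}$. Since $\{\vf_i, \vf_j\}$ is then a scalar by \eqref{eq:car_S}, one has the operator identity
\[
\vf_1 \vf_2 \cdots \vf_{2n} + \vf_2 \cdots \vf_{2n}\, \vf_1 \; = \; \sum_{i=2}^{2n} (-1)^i \{\vf_1, \vf_i\} \, \vf_2 \cdots \widehat{\vf_i} \cdots \vf_{2n}.
\]

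The new ingredient is the interaction of the density matrix $D := u^{H_s} t^{C_s}$ with a single $\phi_{k_1}$. Since $\phi_k$ shifts the $H_s$-eigenvalue by $k$ and flips the parity $C_s \in \Z_2$, one has $\phi_k u^{H_s} = u^{-k} u^{H_s} \phi_k$ and $\phi_k t^{C_s} = t^{1-C_s} \phi_k$; these do \emph{not} combine into a scalar commutation $D \phi_k = c\, \phi_k D$, in contrast to the charged case. Introducing the twisted density matrix $D' := u^{H_s} t^{-C_s}$, whose trace satisfies $Z' := \tr(D') = Z/t$, a short calculation splitting the trace over the graded decomposition $\NF = \NF_0 \oplus \NF_1$ yields the cyclic relation
\[
\tr(D \, \vf_2 \cdots \vf_{2n}\, \vf_1) = u^{-k_1}\, t\, \tr(D'\, \vf_1 \vf_2 \cdots \vf_{2n}),
\]
together with its symmetric counterpart obtained by swapping $(D, t) \leftrightarrow (D', t^{-1})$.

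Applying $\tr(D\,\cdot)$ and $\tr(D'\,\cdot)$ to the telescoping identity produces a $2 \times 2$ linear system in the unknowns $X := \tr(D\, \vf_1 \cdots \vf_{2n})$ and $Y := \tr(D'\, \vf_1 \cdots \vf_{2n})$, whose source terms are $(2n-2)$-point correlators weighted by $D$ and $D'$ respectively. In the base case $n=1$, solving the system gives $\langle \vf_1 \vf_2 \rangle = \{\vf_1, \vf_2\}/(1 + u^{-k_1})$, which is manifestly $t$-independent; in particular $\tr(D\, \vf_1 \vf_2)/Z = \tr(D'\, \vf_1 \vf_2)/Z'$. Inductively assuming this equality of normalized expectations for all $(2n-2)$-point functions, the relation $Z' = Z/t$ forces the two source terms to be proportional, and solving the system gives $X = S_D/(1 + u^{-k_1})$, where $S_D := \sum_{i=2}^{2n} (-1)^i \{\vf_1, \vf_i\}\, \tr(D\, \vf_2 \cdots \widehat{\vf_i} \cdots \vf_{2n})$. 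Dividing by $Z$ and identifying the factors with two-point functions yields the pfaffian recursion
\[
\langle \vf_1 \cdots \vf_{2n}\rangle = \sum_{i=2}^{2n} (-1)^i \langle \vf_1 \vf_i\rangle \langle \vf_2 \cdots \widehat{\vf_i} \cdots \vf_{2n}\rangle,
\]
which is exactly the expansion of $\pf A$ along its first row, completing the induction.

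The main obstacle is the $\Z_2$-graded nature of $C_s$, which forbids a single-scalar commutation of $D$ with $\phi_k$. The workaround is to track $D$ and $D'$ in tandem and exploit the explicit ratio $Z'/Z = 1/t$ to cancel the $t$-dependence at each inductive step; a posteriori this reflects the $t$-independence of the two-point function already visible in the base case.
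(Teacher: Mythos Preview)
Your argument is correct and, in fact, more careful than the paper's own proof sketch on the one delicate point. The paper asserts the scalar commutation $D\phi_k = u^k t\,\phi_k D$ (for $k>0$), but as you rightly observe this fails as an operator identity: since $\phi_k$ flips the $\Z_2$-grading one has instead $u^{H_s} t^{C_s}\phi_k = u^k t\,\phi_k\,u^{H_s} t^{-C_s}$, i.e.\ $D\phi_k = u^k t\,\phi_k D'$ in your notation. Your device of tracking the pair $(D,D')$ through the telescoping identity and solving the resulting $2\times 2$ linear system, while carrying the $t$-independence of normalized expectations along in the induction, is a clean and correct way to close Gaudin's argument.

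There is one small gap: when $k_1=0$ the system degenerates (both equations read $X+tY=S_D$), so the step ``solving the system gives $X=S_D/(1+u^{-k_1})$'' is not justified there. This is easily repaired: since both sides of \eqref{eq:wickgen_S} change by the same sign under a transposition of the $\vf_i$, you may first swap $\vf_1$ with some $\vf_j$ having $k_j\neq 0$; and if every $k_i=0$ then $\vf_1\cdots\vf_{2n}=\phi_0^{2n}=1$ while the pfaffian of the antisymmetric matrix with all upper entries equal to $\langle\phi_0\phi_0\rangle=1$ is also $1$ by the row-expansion recursion.

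A shorter alternative, closer in spirit to what the paper presumably intended, bypasses the coupled system altogether: left multiplication by $\phi_0$ gives a bijection $\NF_0\leftrightarrow\NF_1$ commuting with $u^{H_s}$, and since $\phi_0$ anticommutes with each $\phi_{k_i}$ with $k_i\neq 0$ one gets $\tr_{\NF_1}(u^{H_s}O)=\tr_{\NF_0}(\phi_0\,u^{H_s}O\,\phi_0)=\tr_{\NF_0}(u^{H_s}O)$ for any even monomial $O=\vf_1\cdots\vf_{2n}$ not involving $\phi_0$. This yields $\tr(D'O)/Z'=\tr(DO)/Z$ directly (no induction needed), after which the proof of Lemma~\ref{lem:wickgen} applies verbatim with the effective scalar $c_1=u^{-k_1}$. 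Your approach has the merit of not relying on this special feature of $\phi_0$.
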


\begin{rem}
  We remark that above, the trace of an operator $O$ over $\NF$ is understood as
  \begin{equation}
  \tr O = \sum_{\lambda \in \SPar,\ c \in \{0,1\}} \bra{\lambda^{\vee}, c}O\ket{\lambda, c}
  \end{equation} 
  and we have that $\tr (u^{H_s} t^{C_s}) = (1+t) \prod_{i \geq 1} (1+u^i)$.
\end{rem}

\begin{proof}[Proof of Lemma~\ref{lem:wickgen_S}]
  The proof of Lemma~\ref{lem:wickgen} applies mutatis-mutandis except we now use the canonical anticommutation relations for neutral fermions from equation~\eqref{eq:car_S}. Note we still have the simple commutations
  \begin{equation}
    D \phi_k = u^k t \psi_k D, \qquad D \phi^*_k = (u^{k} t)^{-1} \phi^*_k D
  \end{equation}
  for $k > 0$ where $D := u^{H_s} t^{C_s} / \tr (u^{H_s} t^{C_s})$.
\end{proof}

\paragraph{Bosonic and half-vertex operators.}
We can define the so-called \emph{bosonic operators}
$\alpha^s_{\pm n}$ as follows. Fix $n$ a positive \emph{odd} integer.
Then set
\begin{equation}
  \alpha^s_n := \frac{1}{4} \sum_{k \in \Z} (-1)^k \phi_{-k-n} \phi_{k}.
\end{equation}
We have that $\alpha^s_{-n}$ is the adjoint of $\alpha^s_{n}$, that $\alpha^s_n \vv=0$ for $n>0$, and for $n, m \in 2\Z+1$ that they satisfy the following commutation relations
\begin{equation} \label{eq:alphacommut_S}
  [\alpha_n,\alpha_m] = \frac{n}{2} \delta_{n,-m},  \qquad
  [\alpha_n,\phi(z)] = z^n \phi(z).
\end{equation}
For $\rho$ a (strict) specialization of the algebra $\mathsf{SSym}$ we
define the \emph{half-vertex operators} $\Gamma^s_\pm(\rho)$ by
\begin{equation}
  \Gamma^s_\pm(\rho) := \exp \left( \sum_{n \in 2 \N + 1} \frac{2 p_n(\rho) \alpha^s_{\pm n}}{n} \right).
\end{equation}
When $x$ is a variable, we denote by $\Gamma^s_\pm(x)$ the half-vertex operators for the specialization in the
single variable $x$, for which
$p_n(x)=x^n$. $\Gamma^s_-(\rho)$ is the adjoint of $\Gamma^s_+(\rho)$ for any real
$\rho$, and
\begin{equation}
  \label{eq:gamcancel_S}
  \Gamma^s_+(\rho) \vv = \vv, \qquad \vcv \Gamma^s_-(\rho) = \vcv.
\end{equation}
Given two strict specializations $\rho,\rho'$, as
$p_n(\rho \cup \rho')=p_n(\rho) + p_n(\rho')$, we have
\begin{equation}
  \label{eq:gambranch_S}
  \Gamma^s_+(\rho) \Gamma^s_+(\rho') = \Gamma^s_+(\rho \cup \rho') =
  \Gamma^s_+(\rho') \Gamma^s_+(\rho).
\end{equation}
The commutation relations \eqref{eq:alphacommut_S} and the Cauchy
identity \eqref{eq:cauchy_S} imply that
\begin{equation}
  \label{eq:gamcomm_S}
  \Gamma^s_+(\rho) \Gamma^s_-(\rho') = Q(\rho;\rho')
  \Gamma^s_-(\rho') \Gamma^s_+(\rho)
\end{equation}
while
\begin{equation}
  \label{eq:gampsi_S}
  \Gamma^s_\pm(\rho) \phi(z) = Q(\rho;z^{\pm 1}) \phi(z) \Gamma^s_\pm(\rho).
\end{equation}
These latter relations always make sense at a formal level; at an
analytic level they require that the parameter of $Q(\rho;\cdot)$ be
within its disk of convergence. The crucial property of these half-vertex
operators is that skew Schur's $P$ and $Q$ functions arise as their matrix elements (see \cite{mat} or \cite[Sections 1.4.9 and 3.2.8]{whe}):
\begin{equation}
  \label{eq:schurelem_S}
  \begin{split}
  \bra{\lambda^{\vee}, c} \Gamma^s_+(\rho) \ket{\mu, d} = Q_{\mu / \lambda} (\rho) \delta_{c,d},\qquad \bra{\mu^{\vee}, c} \Gamma^s_-(\rho) \ket{\lambda, d} = P_{\mu / \lambda} (\rho) \delta_{c,d}.
  \end{split}
\end{equation}
where $\lambda, \mu$ are strict partitions and $c,d\in \Z_2$. This
results from \eqref{eq:gampsi_S}, Wick's lemma~\ref{lem:wickgen_S} at
$u=0$, and the Jacobi--Trudi-like identity for Schur's $P$ and $Q$
functions we took for their definition---see~\cite{mat},
\cite[Sections 1.4.9 and 3.2.8]{whe} for elementary proofs.

The half-vertex operators $\Gamma^s$ commute with the grading operator $C_s$ and satisfy the following quasi-commutation with the energy operator $H_s$:
\begin{equation}
  \Gamma^s_{\pm}(\rho) u^{H_s} = u^{H_s} \Gamma^s_{\pm} (u^{\pm 1} \rho).
\end{equation}

Finally, the boson--fermion correspondence in this setting
reads~\cite[Exercise 14.13]{kac}:
  \begin{equation} \label{eq:boson_fermion_S}
    \begin{split}  
      \phi(z) &= R_s \Gami^s(z) \Gapl^s \left(-z^{-1}\right), \\
    \end{split}
  \end{equation}
  where $R_s$ satisfies $R_s^2 = 1$, $R_s \vv = \phi_0 \vv$ and $R_s \phi_i = \phi_i R_s$ for $i \ne 0$.

The following averages are useful for our purposes.
\begin{prop}
  \label{prop:fermionic_expectations_S}
We have
\begin{equation}
  \begin{split}
  \vcv \phi(z) \phi(w) \vv &= \frac{z-w}{z+w}, \qquad |w| < |z|,\\
  \tr (u^{H_s} t^{C_s} \phi(z) \phi(w) ) &= \frac{\theta_{u} \left( \frac{w}{z} \right)}{\theta_{u} \left( - \frac{w}{z} \right)} \cdot (1+t) (u; u)_{\infty}^2 (-u; u)_{\infty}^{-1}, \qquad u^{1/2} < |w| < |z| < u^{-1/2}. 
  \end{split}
\end{equation}
\end{prop}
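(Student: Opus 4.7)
First I would use the boson--fermion correspondence~\eqref{eq:boson_fermion_S} to write
\begin{equation*}
\phi(z)\phi(w) \;=\; R_s\Gami^s(z)\Gapl^s(-z^{-1})R_s\Gami^s(w)\Gapl^s(-w^{-1}).
\end{equation*}
A short calculation shows that $R_s$ in fact commutes with $\phi_0$ as well: expanding on a basis vector $\phi_{\mu_1}\cdots\phi_{\mu_r}\vv$ (with all $\mu_i\ne 0$), one anticommutes $\phi_0$ past each $\phi_{\mu_i}$ and uses $R_s\vv=\phi_0\vv$ together with $\phi_0^2=1$ to check $R_s\phi_0$ and $\phi_0 R_s$ agree on every basis vector. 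Combined with the stated relation $R_s\phi_i=\phi_i R_s$ for $i\ne 0$, this means $R_s$ commutes with every bosonic operator $\alpha^s_n$ and hence with both $\Gami^s(\rho)$ and $\Gapl^s(\rho)$; using $R_s^2=1$, the two $R_s$'s then cancel. Applying the commutation~\eqref{eq:gamcomm_S} with $Q(-z^{-1};w)=(1-w/z)/(1+w/z)=(z-w)/(z+w)$ yields the factorization
\begin{equation*}
\phi(z)\phi(w) \;=\; \frac{z-w}{z+w}\,\Gami^s(z)\Gami^s(w)\Gapl^s(-z^{-1})\Gapl^s(-w^{-1}).
\end{equation*}
Crucially one must not attempt to swap $\Gami^s(z)$ and $\Gapl^s(-z^{-1})$, which would involve the singular factor $Q(-z^{-1};z)^{-1}$. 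The vacuum expectation then follows at once from $\vcv\Gami^s=\vcv$ and $\Gapl^s\vv=\vv$ in~\eqref{eq:gamcancel_S}.

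For the finite-temperature trace, I would apply the cyclic $\Gamma$-elimination strategy of Section~\ref{sec:fermions}: cyclically move $\Gapl^s(-z^{-1})\Gapl^s(-w^{-1})$ to the front, push through $u^{H_s}$ using $\Gapl^s(\rho)u^{H_s}=u^{H_s}\Gapl^s(u\rho)$ (and past $t^{C_s}$, with which the $\Gapl^s$'s commute), and then commute back through $\Gami^s(z)\Gami^s(w)$. After $m$ iterations the $\Gapl$-parameters have become $-u^m z^{-1},-u^m w^{-1}$ and the accumulated scalar is the product of
\begin{equation*}
Q(-u^{n}z^{-1};z)\,Q(-u^{n}z^{-1};w)\,Q(-u^{n}w^{-1};z)\,Q(-u^{n}w^{-1};w) \;=\; \Bigl(\tfrac{1-u^{n}}{1+u^{n}}\Bigr)^{\!2}\tfrac{(1-u^{n}w/z)(1-u^{n}z/w)}{(1+u^{n}w/z)(1+u^{n}z/w)}
\end{equation*}
over $n=1,\ldots,m$. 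Sending $m\to\infty$, where $\Gapl^s(-u^m z^{-1}),\Gapl^s(-u^m w^{-1})\to 1$, this reduces the computation of $\tr(u^{H_s}t^{C_s}\phi(z)\phi(w))$ to the prefactor $(z-w)/(z+w)$ times the infinite product times the residual trace $\tr(u^{H_s}t^{C_s}\Gami^s(z)\Gami^s(w))$.

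The residual trace is evaluated using the upper-triangular structure recorded in~\eqref{eq:schurelem_S}: the diagonal matrix elements are $\bra{\lambda^{\vee},c}\Gami^s(\rho)\ket{\lambda,c}=P_{\lambda/\lambda}(\rho)=1$, so
\begin{equation*}
\tr(u^{H_s}t^{C_s}\Gami^s(z)\Gami^s(w)) \;=\; \sum_{\lambda\in\SPar,\,c\in\{0,1\}} u^{|\lambda|}t^c \;=\; (1+t)(-u;u)_\infty.
\end{equation*}
Finally, from the product form of $\theta_u$ recalled in Appendix~\ref{sec:theta} one reads off
\begin{equation*}
\frac{z-w}{z+w}\prod_{n\geq 1}\frac{(1-u^n w/z)(1-u^n z/w)}{(1+u^n w/z)(1+u^n z/w)} \;=\; \frac{\theta_u(w/z)}{\theta_u(-w/z)},
\end{equation*}
while $\prod_{n\geq 1}((1-u^n)/(1+u^n))^2 = (u;u)_\infty^2/(-u;u)_\infty^2$; multiplying these together with $(1+t)(-u;u)_\infty$ reproduces the stated right-hand side. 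The only delicate point in the plan is checking that $R_s$ commutes with $\phi_0$, a fact not explicitly stated in Appendix~\ref{sec:fermions_S}; everything else is a direct application of the $\Gamma$-elimination toolkit developed in Section~\ref{sec:fermions}.
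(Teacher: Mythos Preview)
Your proof is correct and follows essentially the same route as the paper's: apply the boson--fermion correspondence~\eqref{eq:boson_fermion_S} and then perform $\Gamma$-elimination, exactly as the paper indicates (and as illustrated for charged fermions in~\eqref{eq:kapbf}). Your verification that $R_s$ commutes with $\phi_0$---hence with all $\alpha^s_n$ and $\Gamma^s_\pm$---is the one point the paper leaves implicit, and your argument for it is sound. The only cosmetic difference is that you handle the residual trace $\tr(u^{H_s}t^{C_s}\Gami^s(z)\Gami^s(w))$ via the upper-triangularity of $\Gami^s$ rather than by a second round of cyclic elimination; both yield $(1+t)(-u;u)_\infty$ immediately, so this is a matter of taste rather than a genuinely different approach.
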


\begin{proof}
  Notice that the first equality is the $u=t=0$ case of the
  second. For the second we use the boson--fermion
  correspondence~\eqref{eq:boson_fermion_S} and $\Gamma$-elimination.
\end{proof}

Using neutral fermions, we can also derive the following pfaffian
evaluation.
\begin{prop}
  \label{prop:pf_strict_eval}
  We have
  \begin{equation}
  \pf_{1 \leq i < j \leq 2n} \frac{\theta_u \left(\frac{x_i}{x_j}\right)} {\theta_u \left(-\frac{x_i}{x_j}\right)} = \prod_{1 \leq i < j \leq 2n} \frac{\theta_u \left(\frac{x_i}{x_j}\right)} {\theta_u \left(-\frac{x_i}{x_j}\right)}.
  \end{equation}
\end{prop}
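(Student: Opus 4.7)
The strategy parallels the fermionic proof of the Frobenius elliptic determinant identity mentioned in Remark~\ref{rem:frobdet}: I would evaluate the neutral-fermion correlator
\begin{equation*}
  \mathcal{E} := \tr \left( u^{H_s} t^{C_s} \phi(x_1) \phi(x_2) \cdots \phi(x_{2n}) \right)
\end{equation*}
in two different ways and equate the two expressions. Throughout, I assume the radial ordering $|x_1| > |x_2| > \cdots > |x_{2n}|$ with all ratios in the annulus $(u^{1/2}, u^{-1/2})$ so that Proposition~\ref{prop:fermionic_expectations_S} applies; the final identity extends by analytic continuation.

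The first evaluation uses Wick's lemma for neutral fermions (Lemma~\ref{lem:wickgen_S}), which gives $\mathcal{E} = \tr(u^{H_s} t^{C_s}) \cdot \pf_{1 \leq i<j \leq 2n} A_{ij}$ with $A_{ij} := \tr(u^{H_s} t^{C_s} \phi(x_i)\phi(x_j))/\tr(u^{H_s} t^{C_s})$. By Proposition~\ref{prop:fermionic_expectations_S} each entry equals a fixed normalization constant times $\theta_u(x_j/x_i)/\theta_u(-x_j/x_i)$. The quasi-periodicity $\theta_u(1/z) = -z^{-1}\theta_u(z)$ (an immediate consequence of the definition of $\theta_u$) yields $\theta_u(x_j/x_i)/\theta_u(-x_j/x_i) = -\theta_u(x_i/x_j)/\theta_u(-x_i/x_j)$; since factoring $-1$ out of every entry of a $2n\times 2n$ antisymmetric matrix multiplies the Pfaffian by $(-1)^n$, this packages into a constant times $\pf_{i<j} \theta_u(x_i/x_j)/\theta_u(-x_i/x_j)$.

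The second evaluation uses the boson--fermion correspondence $\phi(z) = R_s \Gamma_-^s(z) \Gamma_+^s(-z^{-1})$ followed by the $\Gamma$-elimination method of Section~\ref{sec:fermions}. The $2n$ copies of $R_s$ are collected (they commute with the bosonic $\alpha_n^s$ and hence with $\Gamma_\pm^s$, modulo a careful treatment of the $\phi_0$ zero-mode) and cancel via $R_s^{2n}=1$. One then pushes every $\Gamma_+^s(-x_i^{-1})$ rightward past every $\Gamma_-^s(x_j)$ with $j>i$ using~\eqref{eq:gamcomm_S}, picking up factors $Q(-x_i^{-1}; x_j) = (1 - x_j/x_i)/(1+x_j/x_i)$. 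The remaining $\Gamma_\pm^s$'s are then eliminated by cycling through $u^{H_s}$ via $\Gamma_\pm^s(\rho) u^{H_s} = u^{H_s} \Gamma_\pm^s(u^{\pm 1}\rho)$, yielding an infinite product $\prod_{k \geq 0} Q(-u^k x_i^{-1}; x_j)$ for each pair. The Jacobi triple product~\eqref{eq:jtp} repackages this infinite product as $\theta_u(x_i/x_j)/\theta_u(-x_i/x_j)$ (up to the same normalization constants appearing in the first evaluation), so $\mathcal{E}$ is also equal to $\tr(u^{H_s} t^{C_s})$ times the same constant times $\prod_{i<j} \theta_u(x_i/x_j)/\theta_u(-x_i/x_j)$. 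Equating the two expressions and cancelling the common prefactors yields the claim.

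The main obstacle is the bookkeeping: tracking the sign $(-1)^n$ arising from the $\theta_u$ quasi-periodicity in the Wick calculation and matching it against the sign produced on the $\Gamma$-elimination side (which must come either from the $R_s$ manipulations or from reorganizing $(1-x_j/x_i)/(1+x_j/x_i)$ into the symmetric ratio $\theta_u(x_i/x_j)/\theta_u(-x_i/x_j)$ via the triple product). A subsidiary nuisance is verifying that $R_s$ behaves well enough with respect to $u^{H_s} t^{C_s}$ and $\Gamma_\pm^s$ for the cancellation $R_s^{2n}=1$ to go through cleanly; this amounts to checking the action on the $\phi_0$ zero-mode, which is routine but must be done. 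Once these bookkeeping points are settled, the identity follows from the same structural compatibility between Wick's lemma (producing a Pfaffian) and $\Gamma$-elimination (producing a product) that underlies the charged case in Remark~\ref{rem:frobdet}.
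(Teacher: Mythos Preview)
Your proposal is correct and follows essentially the same approach as the paper: evaluate the $2n$-point neutral-fermion correlator in two ways, once via Wick's lemma (Lemma~\ref{lem:wickgen_S}) together with Proposition~\ref{prop:fermionic_expectations_S} to obtain the pfaffian, and once via the boson--fermion correspondence~\eqref{eq:boson_fermion_S} and $\Gamma$-elimination to obtain the product of theta ratios. The paper's proof is terser---it simply says ``the left-hand side can also be written as a product of theta functions using the boson--fermion correspondence'' and ``the result follows upon cancelling all diagonal terms''---whereas you spell out the $R_s^{2n}=1$ cancellation, the quasi-periodicity sign $(-1)^n$, and the repackaging of the infinite $Q$-products into theta ratios; but these are exactly the details one would supply when expanding the paper's argument.
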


\begin{proof}
  For an operator $O$ acting on neutral Fock space $\NF$ let us
  introduce, as usual, its expectation to be
  $\langle O \rangle = \tr (u^{H_s} t^{C_s} O) / \tr (u^{H_s}
  t^{C_s})$. By Wick's lemma~\ref{lem:wickgen_S} we have
  \begin{equation}
    \langle \phi(x_1) \phi(x_2) \cdots \phi(x_{2n-1}) \phi(x_{2n}) \rangle = \pf_{1 \leq i < j \leq n} \langle \phi(x_i) \phi(x_j) \rangle.
  \end{equation}
  The entries of the pfaffian are given by
  Proposition~\ref{prop:fermionic_expectations_S}, and the left-hand
  side can also be written as a product of theta functions using the
  boson--fermion correspondence. The result follows upon cancelling all
  diagonal terms appearing on both sides.
\end{proof}

\begin{rem}
  This pfaffian, while simple to write down, has appeared only recently in the (mathematical) literature in the works~\cite[Remark 2.9]{ros2} and~\cite[Lemma 3.1]{ros1} by Rosengren. Remarkably, using an algebraic geometric-type no-go argument, Rains (personal communication with the first author) has proved it was the most general pfaffian evaluation of the form $\pf_{i<j} A_{i,j} = \prod_{i<j} A_{i,j}$. Taking $u \to 0$ leads to a famous pfaffian evaluation of Schur (see, e.g., \cite[Section III.8]{mac}) $\pf_{i<j} \frac{x_i - x_j}{x_i + x_j} = \prod_{i<j} \frac{x_i - x_j}{x_i + x_j}$. 
\end{rem}

\section{On the discrete finite-temperature Bessel kernel}
\label{sec:moreasymp}

We provide some further analysis of the discrete finite-temperature
Bessel kernel of Section~\ref{sec:app}. Our approach relies on the
contour integral representation \eqref{eq:Kcontour} where we set
$u=e^{-r}$ and $t=1$.

\paragraph{A convenient change of variables.} Set
\begin{equation}
  \label{eq:zwchvar}
  z=e^{\frac{\zeta}{2}+\im \phi}, \qquad  w=e^{-\frac{\zeta}{2}+\im \phi}
\end{equation}
so that $z/w=e^{\zeta}$ and $\sqrt{zw}=e^{\im \phi}$, where $\phi$ may
be integrated over the interval $[-\pi,\pi]$ and $\zeta$ over
$\zeta_0 + \im [-\pi,\pi]$ with $\zeta_0 \in (0,r)$. We get
\begin{equation}
  \label{eq:Kreparam}
  K(a,b) = \frac{1}{4\im \pi^2} \int_{-\pi}^{\pi} d\phi e^{\im (b-a) \phi}
  \int_{\zeta_0-\im \pi}^{\zeta_0+\im \pi} d\zeta
  \frac{\kappa(\zeta)}{e^{\frac{a+b}{2} \zeta}} e^{4 L \cos \phi \sinh \frac{\zeta}{2}}.
\end{equation}
where $\kappa(\zeta):=\sum_{m \in \Z'} \frac{e^{\zeta m}}{1+e^{rm}}$.
Note that the modulus of the exponential factor
$e^{4 L \cos \phi \sinh \frac{\zeta}{2}}$ is equal to
$e^{4L \cos \phi \sinh \frac{\zeta_0}{2} \cos \frac{\Im(\zeta)}{2}}$
which is maximal at $(\phi,\Im(\zeta))=(0,0)$, a property which is
essential for doing saddle-point approximations.

\begin{rem}
  Applying the Poisson summation formula as in
  Lemma~\ref{lem:kapbound}, we get
  \begin{equation}
    \label{eq:kappoisson}
    \kappa(\zeta) = \sum_{\ell \in \Z} (-1)^\ell
    \frac{\pi}{r \sin \pi \frac{\zeta-2\im \pi \ell}{r}}.
  \end{equation}
  from which it is manifest that $\kappa$ is a meromorphic function of
  $\zeta$ in $\C$, having a simple pole at each point of the lattice
  $r \Z+2\im \pi \Z$, with residue $(-1)^{n+\ell}$ at
  $\zeta=n r+2 \im \pi \ell$. If we assume that
  $\arg(\zeta) \in [-\pi,\pi]$, then keeping only the term $\ell=0$ in
  \eqref{eq:kappoisson} yields an approximation of $\kappa(\zeta)$
  with uniform error $O(e^{-\pi^2/r}/r)$.
\end{rem}

\begin{rem}
  By integrating over $\phi$ in \eqref{eq:Kreparam}, we get
  \begin{equation}
    \label{eq:Kzeta}
    K(a,b) = \frac{1}{2 \im \pi} \int_{\zeta_0-\im \pi}^{\zeta_0+\im \pi}
    d\zeta \frac{\kappa(\zeta)}{e^{\frac{a+b}{2} \zeta}} I_{b-a}\left(4 L \sinh \frac{\zeta}{2}\right)
  \end{equation}
  where $I_D$ is the modified Bessel function of the first kind and order $D$.
\end{rem}

\paragraph{The bulk limit revisited.} We now give a short rederivation
of Borodin's bulk limiting kernel for the cylindric Plancherel measure.

\begin{prop}[{see \cite[Example~3.4]{B2007cyl}}]
  \label{prop:Borbulker}
  Consider the limit $u=e^{-r} \to 1^-$ with $Lr \to \gamma$ fixed, and assume
  that $ra,rb \to \tau \in \mathbb{R}$ with $D:=b-a$ fixed. Then
  \begin{equation}
    \label{eq:Borbulker}
    K(a,b) \to \frac{1}{2\pi} \int_{-\pi}^{\pi} \frac{e^{\im D \phi} d\phi}{1+e^{\tau-2\gamma\cos \phi}}
  \end{equation}
\end{prop}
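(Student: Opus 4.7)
The plan is to start from the contour integral representation~\eqref{eq:Kreparam} and perform the rescaling $\zeta = r\xi$, so that the $\zeta$-contour becomes $\xi \in \xi_0 + \im[-\pi/r, \pi/r]$ with $\xi_0 \in (0,1)$ fixed (widening to $\xi_0 + \im\R$ in the limit $r \to 0$). Setting $A := 2\gamma\cos\phi - \tau$, the exponent tends pointwise to
\begin{equation*}
-\frac{a+b}{2}r\xi + 4L\cos\phi \sinh(r\xi/2) \longrightarrow A\,\xi
\end{equation*}
for bounded $\xi$, using $r(a+b)/2 \to \tau$ and $Lr \to \gamma$. For the cosecant factor, the Poisson summation representation~\eqref{eq:kappoisson} gives $r\,\kappa(r\xi) = \pi/\sin(\pi\xi) + O(e^{-2\pi^2/r})$ uniformly on any vertical strip bounded away from the integer poles.

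The next step is to justify exchanging the limit with the double integral. On the contour $\xi = \xi_0 + \im y$, the rescaled exponential factor is uniformly bounded in modulus: the real part of $\sinh(r\xi/2)$ equals $\sinh(r\xi_0/2)\cos(ry/2)$ which is $O(r)$, so after multiplication by $L = O(1/r)$ it remains bounded, and likewise $(a+b)r\xi_0/2 = \tau\xi_0 + o(1)$. Meanwhile $|\pi/\sin(\pi(\xi_0 + \im y))|$ decays like $e^{-\pi|y|}$ as $|y| \to \infty$, supplying an integrable majorant uniformly in $r$. Since $\phi$ ranges over a compact interval, dominated convergence yields
\begin{equation*}
K(a,b) \longrightarrow \frac{1}{4\im\pi^2} \int_{-\pi}^{\pi} e^{\im D\phi} \, d\phi \int_{\xi_0 - \im\infty}^{\xi_0 + \im\infty} \frac{\pi e^{A\xi}}{\sin(\pi\xi)} \, d\xi.
\end{equation*}

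Finally, I would evaluate the inner integral by residues: the integrand has simple poles at $\xi = n \in \Z$ with residue $(-1)^n e^{An}$, and I close the contour either to the right (when $A < 0$, collecting $n \geq 1$) or to the left (when $A > 0$, collecting $n \leq 0$). Summing the alternating geometric series $\sum (-1)^n e^{An}$ yields the common value $2\im\pi/(1+e^{-A})$ in both regimes. Substituting back and recognizing $-A = \tau - 2\gamma\cos\phi$ reproduces exactly~\eqref{eq:Borbulker}. The main obstacle is the uniform majorization on the widening $\xi$-contour needed for dominated convergence, but this reduces to bookkeeping thanks to the exponential decay of the cosecant together with the quantitative Poisson estimate; as already hinted at in Remark~\ref{rem:bulk}, the same strategy gives a short proof of the zero-temperature bulk kernel (discrete sine kernel) by replacing the cosecant limit by an indicator function.
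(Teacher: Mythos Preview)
Your proposal is correct and follows essentially the same route as the paper's proof: both rescale $\zeta=r\xi$ in the representation~\eqref{eq:Kreparam}, both use the Poisson summation approximation for $\kappa$ to reduce to a dominant cosecant/sech term, and both pass to the limit by dominated convergence. The only cosmetic differences are that the paper fixes $\xi_0=1/2$ (writing $\zeta=\frac{r}{2}+\im r\nu$ so that Lemma~\ref{lem:kapbound} applies directly in its $\cosh$ form) and then evaluates the resulting $\nu$-integral as the Fourier transform of $\mathrm{sech}$, whereas you keep a general $\xi_0\in(0,1)$ and compute the same integral by summing residues of $\pi/\sin(\pi\xi)$; these are equivalent. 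One small slip: your uniform error for the Poisson tail should be $O(e^{-\pi^2/r})$ rather than $O(e^{-2\pi^2/r})$, since for $|y|$ near $\pi/r$ the nearest correction term ($\ell=\pm1$) is only at distance $\pi^2/r$ in the imaginary direction---this does not affect the argument.
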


\begin{proof}
  We use the integral representation~\eqref{eq:Kreparam} for $K(a,b)$,
  and perform the change of variable $\zeta=\frac{r}{2} + \im r \nu$
  with $\nu \in [-\pi/r,\pi/r]$. Then, using Lemma~\ref{lem:kapbound}
  and simple Taylor expansions, we get
    \begin{equation}
    \begin{split}
      K(a,b) &= \int_{-\pi}^{\pi} \frac{e^{\im D \phi}d\phi}{4\pi^2}
      \int_{-\pi/r}^{\pi/r} d\nu\, e^{(\gamma\cos \phi-\tau/2)(1+2\im
        \nu)+o(1)}
      \left(\frac{\pi}{\cosh \pi \nu} + O(e^{-\pi^2/r})\right)\\
      &\to \int_{-\pi}^{\pi} \frac{e^{\im D \phi}d\phi}{4\pi}
      \int_{-\infty}^{\infty} d\nu \frac{e^{(\gamma \cos
          \phi-\tau/2)(1+2\im \nu)}}{ \cosh \pi \nu},
    \end{split}
  \end{equation}
  where we use standard tails pruning/completion arguments to justify
  passing to the second line.  By integrating over $\nu$ we obtain
  wanted expression~\eqref{eq:Borbulker}.
\end{proof}

We believe that this proof may be adapted---keeping the same change of
variables $(z,w) \to (\phi,\nu)$---to give a streamlined derivation
of~\cite[Theorem~3.1]{B2007cyl} regarding the bulk limiting kernels
for more general periodic Schur processes.

\begin{rem}
  \label{rem:bulk} If we let $\gamma \to \infty$ with $\tau/\gamma=x$
  fixed in the right-hand side of~\eqref{eq:Borbulker}, we
  recover the discrete sine kernel as for the (noncylindric)
  poissonized Plancherel measure---see the remark at the end of
  \cite[Example~3.4]{B2007cyl}. In all rigor,
  Proposition~\ref{prop:Borbulker} does not apply to the situation
  where we let $u \to 1^-$ and $\gamma \to \infty$ jointly as in
  Theorem~\ref{thm:limedge}. But it is not difficult to adapt to the
  finite-temperature setting the proof given
  in~\cite[Section~3.2]{Oko02} for the convergence of the discrete
  Bessel kernel to the sine kernel, the key fact being that
  $\kappa(z,w)$ has a residue $1$ at $z=w$---as mentioned in
  Section~\ref{sec:fermions} this is a general consequence of the
  fermionic canonical anticommutations.
\end{rem}

\paragraph{Proof of Proposition~\ref{prop:Kconvgum}.}

The difficulty to circumvent is that, in the contour integral
representation \eqref{eq:Kcontour} of $K(a,b)$, the exponential factor
$e^{L(z-z^{-1}-w+w^{-1})}$ is now subdominant. Instead, we may see
that the asymptotics is governed by the pole of $\kappa$ at
$z/w=u^{-1}$. The strategy is to deform the integration contours into
two circles such that $|z/w|$ is \emph{larger} than $u^{-1}$ (but
smaller than $u^{-2}$). As we cross the pole, we pick a contribution
from the residue which turns out to be dominant.

More precisely, let us first establish the pointwise
convergence~\eqref{eq:Kconvgum}. We use the
reparametrization~\eqref{eq:Kreparam} and change the integration
contour of $\zeta$ into the segment $\frac{3r}{2}+\im [-\pi,\pi]$ plus
a small negatively oriented circle around the pole at $r$. Noting that
the residue of $\kappa$ at this pole is $-1$ and using the integral
representation
\begin{equation}
  \label{eq:BesselI}
  I_D(G) = \frac{1}{2\pi} \int_{-\pi}^\pi d \phi e^{\im D \phi+G \cos \phi} 
\end{equation}
for the modified Bessel function of the first kind and integer order $D$, we find that
\begin{equation}
  \label{eq:Kedgeexpnew}
  r^{-1} K(a,b) =
  \underbrace{\frac{I_{b-a}(G)}{r e^{\frac{a+b}{2} r}}}_{=:T_1} +
  \underbrace{\frac{1}{4\im \pi^2 r} \int_{-\pi}^{\pi} d\phi e^{\im (b-a) \phi}
  \left( \int_{\frac{3r}{2}-\im \pi}^{\frac{3r}{2}+\im \pi} d\zeta
    \frac{\kappa(\zeta)}{e^{\frac{a+b}{2} \zeta}} e^{4 L \cos \phi \sinh
      \frac{\zeta}{2}}\right)}_{=:T_2}
\end{equation}
where $G:=4L\sinh\frac{r}{2}$. We claim that, in the edge high
temperature regime~\eqref{eq:edgescalgum}, the first term $T_1$
in~\eqref{eq:Kedgeexpnew} tends to $e^{-x} \delta_{x,y}$ (the
Kronecker delta) while the second term $T_2$ tends to $0$.  Note that
$G=2Lr+o(1)\sim 2\gamma$ may or may not remain finite, and the forthcoming
discussion is valid in both cases.

The fact that $T_1 \to e^{-x}$ for $x=y$ (i.e.\,$b-a=0$) is a
straightforward consequence of the
scaling~\eqref{eq:edgescalgum}. Indeed we have
\begin{equation}
  \label{eq:emr}
  e^{\frac{a+b}{2} r} \sim \frac{I_0(2Lr)}{r} e^{\frac{x+y}{2}} \sim \frac{I_0(G)}{r} e^{\frac{x+y}{2}}
\end{equation}
where it is safe to replace the argument of $I_0$ even if $G \to \infty$ since
$I_0(G) \sim \frac{e^G}{\sqrt{2\pi G}}$.

For $x \neq y$, the order $D:=b-a$ of the Bessel function now tends to $\pm \infty$ as
$(y-x)r^{-1}$, and showing that $T_1 \to 0$ amounts to showing that
\begin{equation}
  \label{eq:IDGestim}
  I_{D}(G)=o(I_0(G))=o\left(\frac{e^{G}}{\sqrt{G}}\right)
\end{equation}
which is done by the saddle-point method: in the integral
representation~\eqref{eq:BesselI}, let us move the integration path to
the segment $[-\pi,\pi]+\im\, \arsinh(\frac{D}{G})$. On this
segment we may bound the integrand as
\begin{equation}
  |e^{\im D \phi} e^{G \cos \phi}| = e^{-D \arsinh(\frac{D}{G}) +
    G \sqrt{1+\frac{D^2}{G^2}} \cos \Re(\phi)} \leq e^{-G f(\frac{D}{G}) - G \frac{\Re(\phi)^2}{8}}
\end{equation}
where $f(x):=x\, \arsinh(x)-\sqrt{1+x^2}$. By integrating over
$\phi$ we get
$I_D(G)=O\left(G^{-1/2} \exp\left(-G f \left(\frac{D}{G}\right) \right) \right)$. We now
observe that $f'(x)=\arsinh(x)$ so $f$ is convex and minimal at
$x=0$, with $f(x) \geq -1+\frac{x^2}{2}$. Hence
$e^{-G
  f(\frac{D}{G})}=O\left(e^{G-\frac{D^2}{2G}}\right)=O\left(e^{G-\frac{(y-x)^2}{4Lr^3}}\right)=o(e^G)$,
which establishes~\eqref{eq:IDGestim} hence that $T_1 \to 0$ for $x \neq y$.

We now show that $T_2\to 0$: using~\eqref{eq:kappoisson} and
proceeding as for Lemma~\ref{lem:kapbound}, we see that
$|\kappa(\zeta)|$ may be bounded over the $\zeta$-integration contour
by $\frac{\pi}{r \cosh \frac{\pi \Im(\zeta)}{r}}$ up to some uniform
constant. We then bound the exponential factors as
\begin{equation}
  \left\lvert \frac{e^{4 L \cos \phi \sinh
        \frac{\zeta}{2}}}{e^{\frac{a+b}{2} \zeta}} \right\rvert = \frac{e^{4L\sinh \frac{3r}{2} \cos \phi \cos \Im(\zeta)}}{e^{\frac{3(a+b)r}{4}}} =O\left( \frac{e^{\frac{3G}{2} \left(1-\frac{\phi^2}{8}\right)}}{e^{\frac{3(a+b)r}{4}}}\right)=O\left((r\sqrt{G})^{3/2} e^{-\frac{3 G \phi^2}{16}}\right)
\end{equation}
where we use again~\eqref{eq:emr} and the fact that
$I_0(G)=\Theta\left(\frac{e^G}{\sqrt{G}}\right)$. Upon integrating over $\phi$
and $\zeta$, and multiplying by the prefactor $r^{-1}$, we deduce
that $T_2=O\left( (r \sqrt{G})^{1/2}\right) \to 0$, which
completes the proof of the pointwise convergence~\eqref{eq:Kconvgum}.

The proof of the trace convergence~\eqref{eq:Ktrconvgum} is entirely
similar: starting from~\eqref{eq:Kedgeexpnew}, we find that
\begin{equation}
  \label{eq:Kedgeexpnewtr}
  \sum_{i \geq m} K(i,i) =
  \frac{I_0(G)}{(1-e^{-r}) e^{mr}} +
  \frac{1}{4\im \pi^2} \int_{-\pi}^{\pi} d\phi
  \left( \int_{\frac{3r}{2}-\im \pi}^{\frac{3r}{2}+\im \pi} \frac{d\zeta}{1-e^{-\zeta}}
    \frac{\kappa(\zeta)}{e^{m \zeta}} e^{4 L \cos \phi \sinh
      \frac{\zeta}{2}}\right)
\end{equation}
and all the above analysis of~\eqref{eq:Kedgeexpnew} may be adapted
straightforwardly.
\qed

\bibliographystyle{myhalpha}
\bibliography{cylindric_schur}

\end{document}